\newtheorem{theorem}{Theorem}
\newtheorem{lemma}{Lemma}
\newtheorem{corollary}{Corollary}
\newtheorem{proposition}{Proposition}
\newtheorem{conjecture}{Conjecture}
\newtheorem{sketch}{Sketch of Proof}
\newcommand{\dis}{\stackrel{d}{\sim}}
\newcommand{\papertitle}{Open-Loop Spatial Multiplexing and Diversity Communications in Ad Hoc Networks}
\begin{document}

\title{\papertitle}

\author{Raymond~H.~Y.~Louie,~\IEEEmembership{Student~Member,~IEEE,}
        Matthew~R.~McKay,~\IEEEmembership{Member,~IEEE,}
        and~Iain~B.~Collings,~\IEEEmembership{Senior~Member,~IEEE}
}

\maketitle
 \setcounter{page}{1} \thispagestyle{fancyplain}


\begin{abstract}
This paper investigates the performance of open-loop multi-antenna point-to-point links in ad hoc networks
with slotted ALOHA medium access control (MAC). We consider spatial multiplexing transmission with linear maximum ratio combining and zero
forcing receivers, as well as orthogonal space time block coded transmission. New closed-form expressions are derived for the outage probability, throughput and transmission capacity. Our results demonstrate that both the best performing scheme and the optimum number of transmit antennas depend on different network parameters, such as the node intensity and the signal-to-interference-and-noise ratio operating value. We then compare the performance to a network consisting of single-antenna devices and an idealized fully centrally coordinated MAC. These results show that multi-antenna schemes with a simple decentralized slotted ALOHA MAC can outperform even idealized single-antenna networks in various practical scenarios.
\end{abstract}

\begin{keywords}
Ad hoc networks, linear receivers, multiple antennas, OSTBC, spatial multiplexing, throughput, transmission capacity.
\end{keywords}

\vspace*{2cm}

\noindent \small{The work of M.\ R.\ McKay was supported by the Hong Kong Research Grants Council (RGC) General Research Fund under grant no. 617809. }

\noindent \small{The material in this paper was presented in part at the IEEE International Conference on Communications, Beijing, China, May 2008, and the IEEE International Conference on Communications, Dresden, Germany, July 2009.}

\newpage

%
%

\section{Introduction}

Multiple antennas offer the potential for significant performance improvements
in wireless communication systems by providing higher data rates and
more reliable links. A practical method which can achieve high data
rates is to employ spatial multiplexing transmission in conjunction
with low complexity linear receivers, such as maximum ratio
combining (MRC) or zero forcing (ZF) receivers. In the context of
point-to-point systems operating in the absence of interference, the
performance of such techniques has now been well studied (see e.g.,
\cite{tse05,louie08}). Multiple antennas can also be used to provide
more reliable links through spatial diversity techniques.  Of the
various spatial diversity techniques which have been proposed,
orthogonal space time block codes (OSTBC) have emerged as one of the
most important practical approaches, since they offer high diversity
gains, whilst requiring only very low computational complexity.  The
performance of OSTBC techniques have also been well studied in the
context of point-to-point systems, operating in the absence of
interference (see e.g.,\ \cite{larsson03,forenza06b}, and references
therein).

In this paper, we investigate spatial multiplexing techniques and
OSTBC in the context of wireless ad hoc networks using a slotted
ALOHA medium access control (MAC) protocol.  The interference model
we use includes the spatial distribution of nodes, with the nodes
distributed as a homogeneous Poisson point process (PPP) on the
plane. Besides approximating realistic network scenarios,
modeling the nodes according to a PPP has the benefit of allowing
network performance measures to be obtained.  This model has been
used previously in \cite{weber05} which considered code division
multiple access (CDMA) systems, and in \cite{weber07} which
considered single-antenna systems with threshold scheduling and
power control.  It was also used in \cite{hasan07}, where network
performance measures for a single antenna coordinated MAC protocol
were derived.

The PPP model has also been extended to ad hoc networks employing
multiple antennas at each transmitting and receiving node. In
particular, the schemes in
\cite{govindasamy07,huang08,jindal09,vaze09} considered the use of
the receive antennas to cancel interference from the $L$ closest
nodes to the receiver. However, a key requirement of these schemes
is that each receiver must measure the channels of the $L$ closest
interfering nodes, or at least the corresponding short-term
covariance information, the practical feasibility of which is still not clear. In contrast, the schemes in
\cite{hunter07,hunter08conf,kountouris09,stamatiou07} considered the
use of the receive antennas for either canceling self-interference
or increasing the signal power from the corresponding transmitter
only, and as such, require the receiver to estimate only the
transmitter-receiver channel (something which can be done with
standard channel estimation techniques). For this scenario,
\cite{hunter07} considered various spatial diversity techniques, and
derived transmission capacity and outage probability expressions. In
\cite{hunter08conf,kountouris09}, the multiple antennas were used
for spatial multiplexing, assuming a ``closed-loop'' scenario where channel knowledge is available at the  transmitter. In particular, \cite{hunter08conf}
considered multiple-input multiple-output (MIMO) singular value
decomposition systems, whereas \cite{kountouris09} considered MIMO broadcast transmission,
for which each transmitting node communicated independently with
multiple receiver nodes. 
Note that all of these papers, with the exception of
\cite{hunter07,stamatiou07}, assumed that either the receiver can acquire
knowledge about the network interference, or that there is
sufficient capabilities for each receiver to feedback channel
information back to their corresponding transmitter. In practice, it
seems reasonable to investigate simpler techniques which \emph{do
not} require the receiver to constantly measure the network
interference, nor require the transmitter to have channel
information. We refer to such schemes as point-to-point ``open-loop'' schemes.

One previous contribution which focuses on such schemes is provided
in \cite{stamatiou07}, where OSTBC and spatial multiplexing with ZF
receivers were considered, and approximations were derived for the
frame-error probability. In general, however, the fundamental
question as to whether to use open-loop spatial multiplexing or
diversity-based transmission in ad hoc networks, particularly from a
capacity-based network performance point of view, is not well
understood.

In this paper, we derive new network performance measures for
point-to-point open-loop spatial multiplexing with MRC and ZF
receivers\footnote{Note that we consider linear receivers in the
spatial multiplexing mode since non-linear receiver structures, such
as joint decoding, come at the expense of prohibitively higher
complexity, especially with
increasing number of antennas. We consider MRC and ZF receivers, but
not the better performing minimum-mean-squared-error (MMSE) receivers, since the analytic
expressions for MMSE receivers have proved intractable at this
stage. However, insights into the MMSE receiver can be directly
gained from the MRC and ZF results by noting that in both the
interference and non-interference scenario, the performance of MRC
and ZF receivers converges to the performance of MMSE receivers at
low and high signal-to-noise ratios respectively.}, and OSTBC, using
a slotted ALOHA MAC, and the same PPP ad hoc network model as in
\cite{weber05,hunter07}.  In all cases, we do not require the
receiver to know the network interference, and the transmitter does
not have any channel state information. (Note that for the ZF case,
each receiver cancels interference from its corresponding
transmitter only.) We derive new closed-form expressions for the
outage probability, network throughput, and transmission capacity.
For spatial multiplexing, these results are exact; whereas, for
OSTBC, they are accurate approximations, which we show to be
significantly more accurate than previous corresponding
approximations.

Our analytical results reveal key insights into  the relative
throughput performance of spatial multiplexing with MRC and ZF
receivers, as well as OSTBC, demonstrating that each scheme may
outperform the other, depending on different network parameters such
as the node intensity and the chosen signal-to-interference noise
ratio (SINR) operating value $\beta$. We also gain insights into the
optimal number of transmit antennas to employ for each scheme. For
certain scenarios, e.g., sufficiently dense networks or networks
with high operating SINR values, these insights are obtained
analytically. For example,  we show that single-stream transmission
is throughput-optimal in dense networks and for networks operating
with high $\beta$ requirements, and that transmitting the maximum
number of data streams is throughput-optimal for systems operating
with low $\beta$ requirements.

We then analyze the transmission capacity of the spatial
multiplexing and OSTBC systems. Among other things, we prove that for spatial multiplexing, the transmission capacity can scale \emph{linearly} with the number of antennas, and we derive precise conditions which must be met to achieve this scaling, using tools from large-dimensional random matrix theory. We also provide concrete design
guidelines for selecting the number of transmit streams for
achieving (and maximizing) this scaling, and contrast our results
with those in \cite{jindal09}, which derived a similar scaling
result with receivers employing (network) interference cancelation.
In addition, we prove that the transmission capacity of OSTBC scales
only sub-linearly with the number of antennas, regardless of the
particular code employed.

Finally, we turn from the analysis and comparison of different MIMO
schemes to look closer at the benefits of multiple antennas with a decentralized MAC, compared with the benefits of a tightly
coordinated MAC.  In particular, we compare the performance of a
multi-antenna system employing a simple slotted ALOHA MAC
protocol with a baseline scheme involving single-antenna devices and
a \emph{fully coordinated access} (CA) MAC which enforces guard
zones around each receiver. This scheme is similar to that proposed
in \cite{hasan07}, however we also consider a time division multiple
access (TDMA) scheme where only one transmitting node around each
receiver within the guard zone is scheduled to transmit. It is worth
noting that this is an idealized protocol, since the overhead
involved in achieving full coordination is prohibitive in practice
for ad hoc networks. To compare with this baseline scheme, we first
derive new throughput expressions for the CA MAC protocol. These
expressions allow us to show the important result that the slotted
ALOHA approach with multiple antennas can actually \emph{perform
better} than the idealized CA MAC, for a wide range of system
parameters. This shows that not only does the use of multiple
antennas compensate for the inherent performance loss
caused by using a simple decentralized MAC (i.e.,\ slotted ALOHA),
but it can actually yield an overall \emph{performance improvement}.
For sparse networks, we show that this is true for most slotted
ALOHA transmission probabilities, and for dense networks when there
is a sufficient number of antennas.

\section{System Model}\label{sec:sysmodel}

We consider an ad hoc network comprising of transmitter-receiver
pairs, where each transmitter communicates to its corresponding
receiver in a point-to-point manner, treating all other
transmissions as interference. In addition, each
transmitter-receiver pair are separated by a distance $r_{\rm tr}$ meters. The transmitting nodes are distributed spatially according to a
homogeneous PPP of intensity $\lambda$ nodes per meter squared in $\mathds{R}^2$. Each
transmitting node transmits with probability $p$ according to a
slotted ALOHA MAC protocol. As such, the effective intensity of
actual transmitting nodes is $p \lambda$.

In this paper, we investigate network performance measures. To
obtain such measures, it is sufficient to focus on a typical
transmitter-receiver pair, with the typical receiver located at the
origin. This can be done due to the Palm probabilities of a PPP,
which states that conditioning on a typical receiver located at the
origin does not affect the statistics of the rest of the process
\cite{stoyan95}. In addition, the stationarity property of the PPP
indicates that the statistics of the signal received at the typical
receiver is the same as for every other receiver. Note that the
typical transmitter, i.e.,\ the transmitter associated with the
typical transmitter-receiver pair, is not considered part of the
PPP.

We consider a network where each node is equipped with $N$ antennas. Each transmitting node communicates using $M$ out of its $N$ antennas, whereas each receiver operates employing all $N$ antennas. As we will show, transmitting with less than $N$ antennas may, in fact, lead to an
increased network throughput due to the reduced interference
in the network. The transmitting nodes, with the exception of the typical transmitter, constitute a marked PPP. This is denoted by $\Phi =\{ ( D_{\ell}, \mathbf{H}_{\ell}), \ell \in \mathds{N}^+ \}$,  where $D_{\ell}$ and\footnote{The notation $X \dis Y$ means that $X$ \emph{is distributed as} $Y$.}  $\mathbf{H}_{\ell} \dis \mathcal{CN}_{N,M} \left(\mathbf{0}_{N \times M}, \mathbf{I}_{N \times M} \right)$ model the location and channel matrix respectively of the $\ell$th transmitting node with respect to (w.r.t.) the typical receiver.  Further, we denote the typical transmitter as the $0$th transmitting node, and the channel matrix of the typical transmitter-receiver pair given by $\mathbf{H}_{0} \dis \mathcal{CN}_{N,M} \left(\mathbf{0}_{N \times M}, \mathbf{I}_{N \times M} \right)$. Each transmitting node is assumed to use
the same transmission power $P$, and the transmitted signals are
attenuated by a factor $1/r^{\alpha}$ with distance $r$ where $\alpha>2$ is the path loss
exponent\footnote{Note that there are more
accurate, but more complicated, path loss models particularly suited for
dense networks; e.g.,\ $1/(1+r^{\alpha})$. We have considered such a
model, but have found that it leads to more cumbersome expressions, without changing the fundamental insights gained based on the simpler model employed in this paper.}.

We consider the practical scenario where each receiver has perfect knowledge of the channel to its corresponding transmitter, but does not know the channel to the other transmitting nodes. Moreover, we assume an open-loop scenario where there are no feedback links between the transmitters and receivers, and as such, all transmitters have no channel state information (CSI). This is particularly relevant to ad hoc networks, where obtaining CSI may be difficult due to their changing nature.

In this paper, we consider the use of multiple antennas for either open-loop point-to-point spatial multiplexing with linear receivers, or OSTBC.  There has been little work, to the author's knowledge, done on analyzing and comparing these systems in ad hoc networks.  We analyze important network performance measures for these spatial multiplexing and OSTBC systems, and draw key insights into their relative performance.

\subsection{Spatial Multiplexing with Linear Receivers}

For spatial multiplexing transmission, we assume that each
transmitting node sends $M\le N$ independent data streams to its
corresponding receiver.  Focusing on the $k$th stream, the received
$N \times 1$ signal vector at the typical receiver can be written as
\begin{align}
\mathbf{y}_{0,k} &= \sqrt{\frac{1}{r_{\rm tr}^\alpha}} \mathbf{h}_{0,k} x_{0,k} + \sqrt{\frac{1}{r_{\rm tr}^\alpha}} \sum_{q=1, q \neq k}^{M}
\mathbf{h}_{0,q} x_{0,q} + \sum_{D_{\ell} \in \Phi}\sqrt{\frac{1}{|D_{\ell}|^\alpha}} \sum_{q=1}^M \mathbf{h}_{\ell,q} x_{\ell,q}  +\mathbf{n}_{0,k}
\end{align}
where $\mathbf{h}_{p,q}$ is the $q$th column of $\mathbf{H}_{p}$, $x_{p,q}$ is the symbol sent from the $q$th transmit antenna of the $p$th transmitting node satisfying ${\rm E}[|x_{p,q}|^2]=\frac{P}{M}$, and $\mathbf{n}_{0,k}\dis \mathcal{CN}_{N, 1} \left(\mathbf{0}_{N \times 1}, N_0 \mathbf{I}_{N} \right)$ is the complex additive white Gaussian noise (AWGN) vector. To obtain an estimate for $x_{0,k}$, we consider the use of low complexity MRC and ZF linear receivers.


For MRC, the data estimate is formed via $\hat{x}_{0,k} =
\mathbf{h}_{0,k}^\dagger \mathbf{y}_{0,k}$, where $(\cdot)^\dagger$
denotes conjugate transpose, from which the SINR can be written as
\begin{align}\label{eq:snr_bf}
\gamma_{{\rm MRC},0,k} =
\frac{\frac{\rho}{M r_{\rm tr}^\alpha} ||
\mathbf{h}_{0,k}||^2}{\frac{\rho}{M r_{\rm tr}^\alpha} \sum_{q=1, q \neq k}^{M}
\frac{|\mathbf{h}_{0,k}^\dagger \mathbf{h}_{0,q}|^2}{||
\mathbf{h}_{0,k}||^2} +
\frac{\rho}{M} \sum_{D_{\ell} \in \Phi}\frac{\sum_{q=1}^M  \frac{|\mathbf{h}_{0,k}^\dagger \mathbf{h}_{\ell,q}|^2}{||
\mathbf{h}_{0,k}||^2}}{|D_{\ell}|^\alpha}  + 1} \notag
\end{align}
where $\rho = \frac{P}{N_0}$ is the transmit signal-to-noise ratio (SNR). Note that
\begin{align}
& \frac{\rho}{M r_{\rm tr}^\alpha} || \mathbf{h}_{0,k}||^2\dis {\rm
Gamma}\left(N,\frac{\rho}{M r_{\rm tr}^\alpha}\right), \\
& \frac{\rho}{M r_{\rm tr}^\alpha} \sum_{q=1, q \neq k}^{M}
\frac{|\mathbf{h}_{0,k}^\dagger \mathbf{h}_{0,q}|^2}{||
\mathbf{h}_{0,k}||^2} \dis {\rm Gamma}\left(M-1,\frac{\rho}{M r_{\rm
tr}^\alpha}  \right) \notag  ,
\end{align}
and
\begin{align}
\frac{\rho}{M} \sum_{q=1}^M \frac{|\mathbf{h}_{0,k}^\dagger
\mathbf{h}_{\ell,q}|^2}{|| \mathbf{h}_{0,k}||^2} \dis {\rm
Gamma}\left(M, \frac{\rho}{M} \right) \, ,
\end{align}
where ${\rm Gamma}(v,\theta)$ denotes a gamma random variable with
shape parameter $v$ and scale parameter $\theta$, with probability density function (p.d.f.)
\begin{align}
f_X(x) = \frac{x^{v-1} e^{-\frac{x}{\theta}}}{\Gamma(v) \theta^v} \,
, \quad x \geq 0 \; .
\end{align}

For ZF, since each receiver only knows the CSI of its corresponding
transmitter, the ZF weight vector is designed to cancel interference
due to the other data streams (i.e.,\ other than the one being
detected) originating from the corresponding transmitter only. The
data estimate can thus be written as
\begin{align}\label{eq:rec_zf}
\hat{x}_{0,k} & =\mathbf{g}_{0,k}^\dagger \mathbf{y}_{0,k} \notag \\
&= \sqrt{\frac{1}{r_{\rm tr}^\alpha}} {x}_{0,k} + \sum_{D_{\ell} \in
\Phi}\sqrt{\frac{1}{|D_{\ell}|^\alpha}} \mathbf{g}_{0,k}^\dagger
\mathbf{H}_{\ell} \mathbf{x}_{\ell} + \mathbf{g}_{0,k}^\dagger
\mathbf{n}_{0,k}
\end{align}
where $\mathbf{g}_{0,k}^\dagger$ is the $k$th row of $\left(\mathbf{H}_0^\dagger \mathbf{H}_0 \right)^{-1} \mathbf{H}_0^\dagger$. The SINR for ZF follows from (\ref{eq:rec_zf}), and is given by
\begin{align}\label{eq:snr_zf}
\gamma_{{\rm ZF},0,k}
&= \frac{\frac{\rho}{M r_{\rm tr}^{\alpha} \left[ (\mathbf{H}_0^\dagger \mathbf{H}_0 )^{-1} \right]_{k}}}{
\frac{\rho\left[(\mathbf{H}_0^\dagger \mathbf{H}_0 )^{-1} \mathbf{H}_0^\dagger\left(\sum_{D_{\ell} \in \Phi} \frac{1}{|D_{\ell}|^\alpha}\mathbf{H}_{\ell} \mathbf{H}_{\ell}^\dagger  \right) \mathbf{H}_0 (\mathbf{H}_0^\dagger \mathbf{H}_0 )^{-1}  \right]_{k}}{M\left[ (\mathbf{H}_0^\dagger \mathbf{H}_0 )^{-1} \right]_{k}} +1}
\end{align}
where $[ \cdot ]_{k}$ denotes the $(k,k)$th element. Note that
\begin{align}
\frac{\rho}{M r_{\rm tr}^{\alpha} \left[ (\mathbf{H}_0^\dagger
\mathbf{H}_0 )^{-1} \right]_{k}}\dis {\rm
Gamma}\left(N-M+1,\frac{\rho}{M r_{\rm tr}^{\alpha}}  \right)
\end{align}
and, from  \cite[Eq. (2.47)]{mckay06_thesis} and \cite{james64}, it
can be shown that
\begin{align}
& \frac{\rho}{M} \frac{\left[(\mathbf{H}_0^\dagger \mathbf{H}_0 )^{-1}
\mathbf{H}_0^\dagger\left(\mathbf{H}_{\ell}
\mathbf{H}_{\ell}^\dagger  \right) \mathbf{H}_0
(\mathbf{H}_0^\dagger \mathbf{H}_0 )^{-1}  \right]_{k}}{\left[
(\mathbf{H}_0^\dagger \mathbf{H}_0 )^{-1} \right]_{k}} \dis {\rm
Gamma}\left(M,\frac{\rho}{M}  \right) \, .
\end{align}

\subsection{OSTBC}

For OSTBC, $N_s$ different symbols are transmitted over $\tau$ time slots using $M\le N$ antennas. Different codes have been proposed for OSTBC (see e.g.,\ \cite{alamouti98,paulraj03,tarokh99,liang03}), each being characterized by different values of $N_s$, $M$ and $\tau$. Associated with each code is a code rate, which is defined by $R := \frac{N_s}{\tau}$.

The transmitted OSTBC $M \times \tau$ code matrix is given by
\begin{align}\label{eq:codematrix}
\mathbf{X}_{\ell} = \sum_{q=1}^{N_s} \left(x_{\ell,q} \mathbf{A}_{q} + x_{\ell,q}^* \mathbf{B}_{q} \right)
\end{align}
where $x_{\ell,q}$ is the $q$th transmitted symbol of the $\ell$th transmitting node, and $\mathbf{A}_{q}$ and $\mathbf{B}_{q}$ are $M \times \tau$ matrices, both of which are dependent on the particular code employed. The received $N \times \tau$ signal matrix at the typical receiver can be written as
\begin{align}
\mathbf{Y}_0 = \sqrt{\frac{1}{r_{\rm tr}^\alpha}} \mathbf{H}_0 \mathbf{X}_0 + \sum_{D_{\ell} \in \Phi} \sqrt{\frac{1}{|D_{\ell}|^\alpha}} \mathbf{H}_{\ell} \mathbf{X}_{\ell} + \mathbf{N}_{0}
\end{align}
where $\mathbf{N}_{0}\dis \mathcal{CN}_{N,\tau}\left(\mathbf{0}_{N \times \tau}, N_0 \mathbf{I}_{N \times \tau} \right)$ is the AWGN matrix. We assume that the channels $\mathbf{H}_0$ and $\mathbf{H}_{\ell}$ are constant during the $\tau$ time slots used for transmission. To obtain an expression for the data estimate for the $k$th symbol, it is convenient to introduce the matrix function $\varsigma_{k}(\cdot) : \mathcal{C}^{N \times \tau} \to \mathcal{C}^{N \times \tau}$, which, for a given input matrix $\mathbf{V}$ with $(p,q)$th element $v_{p,q}$, produces the matrix $\mathbf{Z}_k=\varsigma_k(\mathbf{V})$  with $(p,q)$th element
\begin{align}
z_{k,p,q}=
\begin{cases}
v_{p,q}^* \hspace{1cm} {\rm if} \hspace{1cm} (p,q) \in \underline{\varphi}_k \notag \\
-v_{p,q} \hspace{0.7cm} {\rm if} \hspace{1cm} (p,q) \in \underline{\chi}_k \notag \\
v_{p,q} \hspace{1cm} {\rm if} \hspace{1cm} (p,q) \in \underline{\varrho}  \backslash (\underline{\varphi}_k \cup \underline{\chi}_k)
\end{cases}
\end{align}
where $(\cdot)^*$ denotes conjugate, and $\underline{\varrho}$ denotes the entire matrix index set given by
\begin{align}\label{eq:elmement_set2}
\underline{\varrho} = \left\{(p,q): p \in 1,\ldots,N \cap q \in 1,\ldots, \tau \right\} \; ,
\end{align}
with $\underline{\varphi}_k \subseteq \underline{\varrho}$ and $\underline{\chi}_k\subseteq \underline{\varrho}$.
Here, the mapping function $\varsigma_k(\cdot)$, and index sets $\underline{\varphi}_k$ and $\underline{\chi}_k$, depend, once again, on the specific OSTBC code employed.  Given these code-specific parameters, the data estimate for the $k$th symbol is then obtained via the following operation
\begin{align} \label{eq:DataEst}
\hat{x}_{0,k} = ||\mathbf{M}_k \odot \varsigma_k\left( \mathbf{Y}_0 \right)||_1
\end{align}
where $|| \cdot ||_1$ denotes 1-norm (i.e.,\ the sum of all entries), $\odot$ denotes Hadamard product (i.e.,\ elementwise product), and $\mathbf{M}_k$ is an $N \times \tau$ matrix chosen according to the principle of MRC to satisfy
\begin{align}\label{eq:ostbc_req_sys}
||\mathbf{M}_k \odot \varsigma_k\left(\mathbf{H}_0 \mathbf{X}_0\right)||_1  = ||\mathbf{H}_0||_F^2 x_{0,k}
\end{align}
where $|| \cdot ||_F$ denotes Frobenius norm.

To further illustrate the code-specific parameters $\mathbf{A}_{q}$, $\mathbf{B}_{q}$, $\mathbf{X}_{\ell}$, $\varsigma_k(\cdot)$ and $\mathbf{M}_k$, in the general OSTBC model presented above, let us consider the following concrete example.

\emph{Example: [Alamouti code with $N=2$]:} In this case, we have
\begin{align}\label{eq:code_construct_alamouti}
& \mathbf{A}_{1}=\left[ \begin{array}{cc}
1  & 0  \\
0  & 0  \end{array} \right] \; , \;
\mathbf{B}_{1}=\left[ \begin{array}{cc}
0  & 0  \\
0  & 1  \end{array} \right] \notag \\
& \mathbf{A}_{2}=\left[ \begin{array}{cc}
0  & 0  \\
1  & 0  \end{array} \right] \; \; {\rm and} \; \;
\mathbf{B}_{2}=\left[ \begin{array}{cc}
0  & -1  \\
0  & 0  \end{array} \right] \; .
\end{align}
Substituting (\ref{eq:code_construct_alamouti}) into (\ref{eq:codematrix}), the OSTBC code matrix $\mathbf{X}_{\ell}$ can be written as
\begin{equation}\label{eq:alamouti_param}
\mathbf{X}_{\ell}=\left[ \begin{array}{cc}
x_{\ell,1}  & -x_{\ell,2}^*  \\
x_{\ell,2}  & x_{\ell,1}^*  \end{array} \right] \; .
\end{equation}
Let us focus on decoding $x_{0,1}$ (i.e.,\ $k=1$). Then the mapping function $\varsigma_1(\cdot)$ is given by
\begin{equation}\label{eq:alamouti_param2}
\varsigma_1(\mathbf{V}) = \left[ \begin{array}{cc}
v_{1,1}  & v_{1,2}^*  \\
v_{2,1}  & v_{2,2}^*  \end{array} \right] \; .
\end{equation}
Substituting (\ref{eq:alamouti_param}) and (\ref{eq:alamouti_param2}) with $\ell=0$ into (\ref{eq:ostbc_req_sys}), we find that the matrix $\mathbf{M}_1$ which solves the resulting equation is given by
\begin{equation}\label{eq:alamouti_param3}
\mathbf{M}_1 = \left[ \begin{array}{cc}
h_{0,1,1}^*  & h_{0,1,2}  \\
h_{0,2,1}^*  & h_{0,2,2}  \end{array} \right] \; .
\end{equation}
\hfill \interlinepenalty500 $\Box$

Returning to the general case, by noting that $\varsigma_k(\cdot)$ and $\odot$ are both linear functions, the data estimate (\ref{eq:DataEst}) for the $k$th symbol can be written as
\begin{align}
\hat{x}_{0,k}
&= \frac{|\mathbf{H}_0||_F^2 x_{0,k} }{\sqrt{r_{\rm tr}^\alpha}} |+ \sum_{D_{\ell} \in \Phi} \sqrt{\frac{1}{|D_{\ell}|^\alpha}} ||\mathbf{M}_k \odot \varsigma_k \left(\mathbf{H}_{\ell} \mathbf{X}_{\ell} \right)||_1  + ||\mathbf{M}_k \odot \varsigma_k\left(\mathbf{N}_{0}\right)||_1 \; ,
\label{eq:data_est_ostbca}
\end{align}
from which the SINR is obtained as
\begin{align}\label{eq:ostbc_sinr_exact}
\gamma_{{\rm OSTBC},0,k} 
&= \frac{\frac{\rho}{R M r_{\rm tr}^\alpha} ||\mathbf{H}_0||_F^2}{\frac{\rho}{RM} \sum_{D_{\ell} \in \Phi} \frac{1}{|D_{\ell}|^\alpha} \mathcal{K}_{\ell,\sum} + 1 }
\end{align}
where
\begin{align}
\mathcal{K}_{\ell, \sum} = \frac{{\rm E}_{\mathbf{X}_{\ell}}\left[||\mathbf{M}_k \odot \varsigma_k \left(\mathbf{H}_{\ell} \mathbf{X}_{\ell} \right)||_1^2\right]}{||\mathbf{H}_0||_F^2}
\end{align}
is the normalized interference power for the $\ell$th transmitting node.

Deriving exact closed-form expressions for the SINR distribution is
difficult, based on the exact SINR expression in
(\ref{eq:ostbc_sinr_exact}). To proceed, we focus on deriving an approximation for the SINR,
based on assumptions relating to the distribution of the
$\mathcal{K}_{\ell, \sum}$ terms, and the independence between
different random variables. We explain and justify these assumptions in Appendix
\ref{app_ostbc_sinr_bound}, some of which were also used in
\cite{hunter07,choi07}. Our approximation for $\gamma_{{\rm OSTBC},0,k}$,
derived in Appendix \ref{app_ostbc_sinr_bound},  is given by
\begin{align}\label{eq:ostbc_sinr_approximate}
\tilde{\gamma}_{{\rm OSTBC},0,k} & = \frac{\frac{\rho}{R M r_{\rm tr}^\alpha} ||\mathbf{H}_0||_F^2}{\frac{\rho}{RM} \sum_{D_{\ell} \in \Phi}\frac{1}{|D_{\ell}|^\alpha}  \tilde{\mathcal{K}}_{\ell,\sum} + 1 }
\end{align}
where
\begin{align}
\frac{\rho}{RM} \tilde{\mathcal{K}}_{\ell,\sum} \dis {\rm
Gamma}\left(\frac{N_{I}}{M},\frac{\rho}{RM}  \right) \quad \; , \frac{\rho}{R M r_{\rm tr}^\alpha} ||\mathbf{H}_0||_F^2 \dis {\rm
Gamma}\left(MN,\frac{\rho}{R M r_{\rm tr}^\alpha} \right) \;
\end{align}
are independent random variables, and $N_{I}$ is the total number of non-zero elements in the columns of the OSTBC code matrix $\mathbf{X}_{\ell}$ containing either $\pm x_{\ell,k}$ or
$\pm x_{\ell,k}^*$. We will show in Section
\ref{sec:per} that the approximation in
(\ref{eq:ostbc_sinr_approximate}) is very accurate, significantly
more so than a previous approximation presented in
\cite{hunter07}.

\subsection{Performance Measures}

In this paper, we consider three main performance measures; namely, outage probability, network throughput, and transmission capacity. The outage probability is defined as the probability that the SINR falls below a certain threshold $\beta$, i.e.\footnote{Note that an alternative outage definition, adopted in \cite{vaze09}, would be to consider the probability that the \emph{overall sum-rate} (summed over all data streams) lies below a certain threshold. With this definition, different insights than those presented in this paper may be obtained, and this is the subject of ongoing work.},
\begin{align}\label{eq:SINR_cdf}
{\rm F}(\beta) = {\rm Pr}({\rm SINR} \le \beta)\; .
\end{align}
The network throughput is defined as the total number of successful
transmitted symbols/channel use/unit area. This is given by
\begin{align}\label{eq:through_def}
{\rm T} = \zeta p \lambda \left(1-{\rm F}(\beta)\right)
\end{align}
where $\zeta$ is the average number of transmitted symbols per node per channel use, and is given in Table \ref{table:gamma_param} for spatial multiplexing and OSTBC.

Although throughput is an important performance measure, it may be obtained at the expense of unacceptably high outage levels, which is undesirable for some applications due to, for example, significant delays caused by data retransmission. This has motivated the introduction of the \emph{transmission capacity} \cite{weber05}, defined as the maximum throughput subject to an outage constraint $\epsilon$, where the maximization is performed over all intensities $\lambda p$.  The transmission capacity is thus given by
\begin{align}\label{eq:tc_def}
{\rm c}(\epsilon) =  \zeta \lambda(\epsilon)(1 - \epsilon)
\end{align}
where $\lambda(\epsilon)$ is the \emph{contention density}, defined as the inverse of $\epsilon={\rm F}(\beta;\lambda p)$ taken w.r.t.\ $\lambda p$.  Note that here we have made explicit the dependence of the outage probability on $\lambda p$. In the following three sections, we will investigate each of these three performance measures for spatial multiplexing and OSTBC systems.

\section{Outage Probability and Network Throughput: Exact Analysis}\label{sec:per}

In this section, we derive new exact closed-form expressions for the
outage probability and network throughput for the spatial
multiplexing and OSTBC systems. To facilitate the derivations, we
first note that the received SINR for the spatial multiplexing and
OSTBC systems can be written in the general form
\begin{align}\label{eq:SINR_general}
\gamma = \frac{W}{Y + \sum_{\ell \in \Phi} |X_\ell|^{-\alpha} \Psi_{\ell 0}+ 1}
\end{align}
with the generic random variables
\begin{align}
& W \dis {\rm Gamma} \left(m,\theta \right) , \quad \quad  Y \dis {\rm
Gamma} \left(u,\Upsilon \right)  \Psi_{\ell 0} \dis
{\rm Gamma} \left(n, \Omega\right) \, .
\end{align}
Here $W$ represents the effective signal power from the desired
transmitter, $Y$ represents the effective self-interference power
from the desired transmitter, and $\Psi_{\ell 0}$ represents the
effective interference power from the interfering transmitting
nodes. These are all mutually independent. The particularizations of
the shape and scale parameters of $W$, $Y$, and $\Psi_{\ell 0}$ for
the spatial multiplexing and OSTBC systems are summarized in Table
\ref{table:gamma_param}.

\begin{table*}[!t]
\caption{Gamma parameters and $\zeta$ values for the spatial multiplexing and OSTBC systems.}\label{table:gamma_param}
\centering
\begin{tabular}{|c|c|c|c|}
  \hline
   & SM with MRC receivers  &  SM with ZF receivers & OSTBC \\
  \hline
Desired signal, $W \dis {\rm Gamma}(m,\theta)$ & $m=N$, $\theta=\frac{\rho}{M r_{\rm tr}^\alpha}$ & $m=N-M+1$, $\theta=\frac{\rho}{M r_{\rm tr}^\alpha}$ & $m=MN$, $\theta=\frac{\rho}{R M r_{\rm tr}^\alpha}$\\
Multi-node interference, $\Psi_{\ell0} \dis {\rm Gamma}(n, \Omega)$ & $n=M$, $\Omega=\frac{\rho}{M }$ & $n=M$, $\Omega=\frac{\rho}{M }$ & $n= \frac{N_I}{M}$, $\Omega=\frac{\rho}{RM}$\\
Self-interference, $Y \dis {\rm Gamma}(u,\Upsilon)$ &  $u=M-1$, $\Upsilon=\frac{\rho}{M r_{\rm tr}^\alpha}$ & N/A & N/A \\
Effective multiplexing gain, $\zeta$ &  $M$ & $M$ & $R$ \\
  \hline
\end{tabular}
\end{table*}

From Table \ref{table:gamma_param}, we can directly compare the SINR
distributions of each system and make some important observations.
Focusing on the spatial multiplexing systems, it is evident that the distribution of the effective
interference $\Psi_{\ell 0}$ caused by the interfering transmitting
nodes is the same for both MRC and ZF receivers. The difference lies in the effective signal power $W$ and the self-interference $Y$. For ZF receivers, the receive d.o.f.\ is split between boosting the signal power and canceling the total self-interference. For MRC receivers, the total receive d.o.f.\ is used to boost the signal power, but no effort is made to cancel self-interference..

Based on (\ref{eq:SINR_general}), we present the following general
theorem which, after substituting the parameters in Table
\ref{table:gamma_param}, yields exact closed-form expressions for
the outage probability of the spatial multiplexing and OSTBC
schemes.

\begin{theorem}\label{lemm_general_cdf}
If the SINR $\gamma$ takes the general form (\ref{eq:SINR_general}),
then its cumulative distribution function (c.d.f.) is
\begin{align}\label{eq:cdf_general}
{\rm F}_{\gamma}(\beta) &=  1 - \frac{(-1)^{m-1} e^{- \lambda
\left(\frac{\beta \Omega}{\theta} \right)^{\frac{2}{\alpha}} \eta(n)
} e^{-\frac{\beta}{\theta}}}{\Gamma(m) } \sum_{\ell=0}^{m-1}
\binom{m-1}{\ell} \left(-\frac{\beta}{\theta}\right)^{\ell}
\sum_{i=0}^{m-\ell-1} s(m-\ell,i+1) \left(\frac{2}{\alpha}\right)^i  \notag \\
& \hspace{1cm} \times \sum_{j=0}^i S(i,j) \left(-\lambda \left(
\frac{\beta \Omega}{\theta} \right)^{\frac{2}{\alpha}} \eta (n)
\right)^j   \sum_{\tau=0}^{\ell} \binom{\ell}{\tau} {\rm
E}_Y\left[e^{-\frac{ \beta Y}{\theta}} Y^\tau\right]
\end{align}
where
\begin{align} \label{eq:etaDefn}
\eta (n) :=  {\frac{\pi p \Gamma\left(n + \frac{2}{\alpha} \right)
\Gamma\left(1-\frac{2}{\alpha} \right)
 }{\Gamma(n)}}
\end{align}
and $s(n,m)$ and $S(n,m)$ are Stirling numbers of the first and second kind respectively \cite[pp. 824]{abramowitz70}.
Further, $ {\rm E}_Y\left[e^{-\frac{ \beta Y}{\theta}}
Y^\tau\right]$ captures the effects of the self-interference on the
outage probability, and is given by
\begin{align} \label{eq:exp_poly_selfint_term}
{\rm E}_Y\left[e^{-\frac{ \beta Y}{\theta}} Y^\tau\right] &=
\frac{\Gamma(\tau+u) }{\Gamma(u) \Upsilon^u } \left(\frac{\beta
}{\theta}+\frac{1}{\Upsilon}\right)^{-\tau-u} \; .
\end{align}
\end{theorem}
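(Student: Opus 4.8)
The plan is to compute the outage probability directly as a tail probability of the desired-signal power $W$, exploiting the fact that its shape parameter $m$ is an integer in all three cases of Table~\ref{table:gamma_param}. Writing $\{\gamma\le\beta\}=\{W\le\beta(Y+I+1)\}$, where $I=\sum_{\ell\in\Phi}|X_\ell|^{-\alpha}\Psi_{\ell0}$ is the aggregate interference, and using the elementary closed form for the complementary c.d.f.\ of a Gamma variate with integer shape, one gets
\begin{align}
{\rm F}_\gamma(\beta)=1-{\rm E}_{Y,I}\!\left[e^{-\frac{\beta}{\theta}(Y+I+1)}\sum_{\ell'=0}^{m-1}\frac{1}{\ell'!}\left(\tfrac{\beta}{\theta}(Y+I+1)\right)^{\ell'}\right]. \notag
\end{align}
Since $\Phi$ is a PPP, the interference $I$ cannot be retained as an elementary expectation, so I would carry it through its Laplace transform while keeping the self-interference $Y$ and the deterministic noise term $1$ explicit.

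The second step evaluates the Laplace transform of $I$ through the probability generating functional of the PPP,
\begin{align}
{\rm E}\!\left[e^{-sI}\right]=\exp\!\left(-\lambda p\int_{\mathbb{R}^2}\!\left(1-{\rm E}_\Psi\!\left[e^{-s|x|^{-\alpha}\Psi}\right]\right)dx\right). \notag
\end{align}
Passing to polar coordinates, substituting $v=s\Psi r^{-\alpha}$, and using $\int_0^\infty(1-e^{-v})v^{-2/\alpha-1}dv=\tfrac{\alpha}{2}\Gamma(1-\tfrac{2}{\alpha})$ together with the fractional Gamma moment ${\rm E}[\Psi^{2/\alpha}]=\Gamma(n+\tfrac{2}{\alpha})\Omega^{2/\alpha}/\Gamma(n)$, this collapses to $\exp(-\lambda\eta(n)\Omega^{2/\alpha}s^{2/\alpha})$, which at $s=\beta/\theta$ reproduces the factor $\exp(-\lambda(\beta\Omega/\theta)^{2/\alpha}\eta(n))$ and pins down the definition \eqref{eq:etaDefn} of $\eta(n)$.

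The third step treats the polynomial prefactor. Writing $T=(Y+1)+I$ with $Y+1$ and $I$ independent, I would use $T^{\ell'}e^{-sT}=(-1)^{\ell'}\frac{d^{\ell'}}{ds^{\ell'}}e^{-sT}$ and Leibniz's rule to split the $\ell'$-th derivative of ${\rm E}[e^{-sT}]={\rm E}_Y[e^{-s(Y+1)}]\,{\rm E}[e^{-sI}]$ into a product of a $(Y+1)$-derivative and an $I$-derivative. Reindexing the resulting double sum by the power $\ell=\ell'-k$ of $(Y+1)$ isolates the factor ${\rm E}_Y[(Y+1)^\ell e^{-s(Y+1)}]=e^{-\beta/\theta}\sum_{\tau=0}^{\ell}\binom{\ell}{\tau}{\rm E}_Y[Y^\tau e^{-\beta Y/\theta}]$, whose individual moments are the elementary Gamma integrals giving \eqref{eq:exp_poly_selfint_term}. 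This accounts for the $\tau$-sum and the $e^{-\beta/\theta}$ prefactor, leaving the finite sum $\sum_{k=0}^{m-1-\ell}\frac{(-s)^k}{k!}\frac{d^k}{ds^k}e^{-\lambda\eta(n)\Omega^{2/\alpha}s^{2/\alpha}}$ to be simplified.

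The main obstacle is evaluating and collapsing this derivative sum into the single first-kind Stirling number $s(m-\ell,i+1)$ appearing in the statement. I would apply the Euler-operator identity $s^k\frac{d^k}{ds^k}=\sum_i s(k,i)\vartheta^i$ (signed Stirling numbers of the first kind, $\vartheta=s\frac{d}{ds}$) and, since $\vartheta\phi=\tfrac{2}{\alpha}\phi$ for $\phi:=-\lambda\eta(n)\Omega^{2/\alpha}s^{2/\alpha}$, the relation $\vartheta^i e^\phi=(\tfrac{2}{\alpha})^i e^\phi\sum_j S(i,j)\phi^j$ (Stirling numbers of the second kind). After interchanging the $k$- and $i$-summations, the inner $k$-sum would be collapsed by the finite identity $\sum_{k=i}^{N}\frac{(-1)^k}{k!}s(k,i)=\frac{(-1)^N}{N!}s(N+1,i+1)$ with $N=m-1-\ell$, producing exactly $s(m-\ell,i+1)$ and the upper limit $m-\ell-1$. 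Finally, collecting the prefactors $\frac{s^\ell}{\ell!}\cdot\frac{(-1)^{m-1-\ell}}{(m-1-\ell)!}$ and using $\frac{1}{\ell!(m-1-\ell)!}=\frac{1}{(m-1)!}\binom{m-1}{\ell}$ reproduces $\frac{(-1)^{m-1}}{\Gamma(m)}\binom{m-1}{\ell}(-\beta/\theta)^\ell$, completing the identification with \eqref{eq:cdf_general}. Establishing the two Euler-operator identities and, especially, the finite Stirling-collapse identity (together with justifying the summation interchange) is where the real work lies.
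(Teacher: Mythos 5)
Your proposal is correct, and it reaches (\ref{eq:cdf_general}) by a genuinely different route from the paper's own proof. The paper conditions on $W$ and $Y$, imports the infinite-series expansion of the shot-noise interference c.d.f.\ from \cite[Eq.~(113)]{weber07}, integrates over $W$ term by term (a binomial expansion that, like your Erlang tail, silently uses that $m$ is an integer), converts $\Gamma(m-\ell-2k/\alpha)/\Gamma(1-2k/\alpha)$ into first-kind Stirling numbers via \cite[Eq.~(24.1.3)]{abramowitz70}, and only then resums the infinite $k$-series with Dobi\'{n}ski's formula---that resummation is where the second-kind Stirling numbers and the factor $e^{-\lambda(\beta\Omega/\theta)^{2/\alpha}\eta(n)}$ appear; $Y$ is averaged last. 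You never touch an infinite series: the Erlang tail of $W$ reduces the outage probability to finitely many moments ${\rm E}\left[T^{\ell'}e^{-sT}\right]$ of $T=Y+I+1$, the PGFL gives ${\rm E}\left[e^{-sI}\right]=e^{\phi(s)}$ in closed form (your polar-coordinate computation correctly pins down $\eta(n)$), Leibniz's rule splits the derivatives between the $(Y+1)$-factor and $e^{\phi}$, and your operator identities plus the finite collapse identity deliver the same Stirling structure. All three combinatorial ingredients check out: $s^{k}\frac{d^{k}}{ds^{k}}=\sum_{i}s(k,i)\vartheta^{i}$ is the standard falling-factorial expansion of the Euler operator; $\vartheta^{i}e^{\phi}=\left(\tfrac{2}{\alpha}\right)^{i}e^{\phi}\sum_{j}S(i,j)\phi^{j}$ is the Bell-polynomial identity exploiting $\vartheta\phi=\tfrac{2}{\alpha}\phi$ (it is the same Dobi\'{n}ski-type fact the paper uses, in operator form); and the collapse identity $\sum_{k=i}^{N}\frac{(-1)^{k}}{k!}s(k,i)=\frac{(-1)^{N}}{N!}s(N+1,i+1)$ follows by a one-line induction on $N$ from the recurrence $s(N+2,i+1)=s(N+1,i)-(N+1)s(N+1,i+1)$, after which the prefactor bookkeeping $\frac{1}{\ell!\,(m-1-\ell)!}=\frac{1}{(m-1)!}\binom{m-1}{\ell}$ reproduces (\ref{eq:cdf_general}) exactly, with (\ref{eq:exp_poly_selfint_term}) as the elementary Gamma integral you describe. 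What the paper's route buys is economy: the hard analytic fact (the distribution of PPP shot noise) is cited rather than derived, at the price of manipulating and resumming an infinite series. What your route buys is self-containedness and finiteness: the only analytic input is the standard Campbell/PGFL computation, every sum is finite, and the remaining justifications (differentiating the Laplace transform under the expectation, interchanging finite sums) are routine; like the paper, you do need $m\in\mathds{Z}^{+}$, which holds for every parameter set in Table \ref{table:gamma_param}.
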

\begin{proof}
See Appendix \ref{app:general_cdf_proof}.
\end{proof}

Note that the outage probability and transmission capacity of
systems with SINRs of the general form (\ref{eq:SINR_general}) were
also considered previously in \cite{weber07} and \cite{hunter07},
focusing specifically on the special case $Y=0$ (i.e., no
self-interference). However, in contrast to our results,
\cite{weber07} presented bounds, rather than exact expressions;
whereas the results in \cite{hunter07}, whilst exact, were derived
using different methods and were expressed in a more complicated
form involving summations over subsets.

\begin{corollary}
For $m = 1$ and $Y=0$, the SINR c.d.f.\ (\ref{eq:cdf_general})
becomes
\begin{align}\label{eq:cdf_general_simple}
{\rm F}_{\gamma}(\beta) &=  1 - e^{- \lambda \left( \frac{\beta
\Omega}{\theta}\right)^{\frac{2}{\alpha}} \eta(n) }
e^{-\frac{\beta}{\theta}}  \; .
\end{align}
\end{corollary}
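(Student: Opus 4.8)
The plan is to obtain the corollary as a direct specialization of the general c.d.f.\ (\ref{eq:cdf_general}) in Theorem \ref{lemm_general_cdf}, setting $m=1$ and $Y=0$ and showing that every nested summation collapses to a single term. First I would set $m=1$. The outermost sum runs over $\ell=0,\dots,m-1$, so only $\ell=0$ survives; for that term the binomial is $\binom{m-1}{\ell}=\binom{0}{0}=1$ and the factor $(-\beta/\theta)^\ell=1$. With $\ell=0$ the next sum runs over $i=0,\dots,m-\ell-1=0$, leaving only $i=0$; the sum over $j=0,\dots,i$ then leaves only $j=0$, and the sum over $\tau=0,\dots,\ell$ leaves only $\tau=0$. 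Thus the entire quadruple sum reduces to the single summand indexed by $(\ell,i,j,\tau)=(0,0,0,0)$.

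Next I would evaluate the surviving factors. The algebraic prefactor gives $(-1)^{m-1}/\Gamma(m)=1/\Gamma(1)=1$, leaving only the two exponentials $e^{-\lambda(\beta\Omega/\theta)^{2/\alpha}\eta(n)}$ and $e^{-\beta/\theta}$. The Stirling numbers that appear are $s(m-\ell,i+1)=s(1,1)=1$ and $S(i,j)=S(0,0)=1$, with accompanying powers $(2/\alpha)^i=1$ and $(-\lambda(\beta\Omega/\theta)^{2/\alpha}\eta(n))^j=1$ since $i=j=0$. The only remaining piece is the self-interference expectation ${\rm E}_Y[e^{-\beta Y/\theta}Y^\tau]$ at $\tau=0$, which for the degenerate case $Y\equiv 0$ equals ${\rm E}[e^{0}]=1$ (consistent with the vanishing-self-interference value of (\ref{eq:exp_poly_selfint_term})). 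Collecting these factors yields exactly $1-e^{-\lambda(\beta\Omega/\theta)^{2/\alpha}\eta(n)}e^{-\beta/\theta}$, which is (\ref{eq:cdf_general_simple}).

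As an independent check of the specialization I would rederive (\ref{eq:cdf_general_simple}) from first principles. With $m=1$ the desired-signal power $W$ is exponential with mean $\theta$, and with $Y=0$ the SINR is $\gamma=W/(I+1)$, where $I=\sum_{\ell\in\Phi}|X_\ell|^{-\alpha}\Psi_{\ell 0}$ is the aggregate shot noise. Conditioning on $I$ and using ${\rm Pr}(W\le x)=1-e^{-x/\theta}$ gives ${\rm F}_\gamma(\beta)=1-e^{-\beta/\theta}{\rm E}_I[e^{-\beta I/\theta}]$. The Laplace functional of the marked PPP of intensity $p\lambda$ then yields ${\rm E}_I[e^{-\beta I/\theta}]=\exp(-p\lambda\int_{\mathds{R}^2}(1-(1+(\beta/\theta)|x|^{-\alpha}\Omega)^{-n})\,dx)$, and evaluating the radial integral reproduces the exponent $-\lambda(\beta\Omega/\theta)^{2/\alpha}\eta(n)$ with $\eta(n)$ as in (\ref{eq:etaDefn}).

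I do not expect a genuine obstacle here, as the result is a bookkeeping specialization of the theorem. The only points requiring a moment's care are the Stirling-number conventions $s(1,1)=S(0,0)=1$ and the correct handling of the degenerate $Y\equiv 0$ expectation, both of which are routine.
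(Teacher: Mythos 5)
Your proposal is correct and follows the same route as the paper, which states the corollary as a direct specialization of Theorem \ref{lemm_general_cdf}: with $m=1$ and $Y=0$ every sum collapses to the $(\ell,i,j,\tau)=(0,0,0,0)$ term, the Stirling numbers $s(1,1)=S(0,0)=1$ and the prefactor $(-1)^{m-1}/\Gamma(m)=1$ leave only the two exponentials, and the degenerate expectation equals $1$. Your additional first-principles check via the exponential distribution of $W$ and the Laplace functional of the PPP is a sound (and reassuring) verification, consistent with how the exponent $\eta(n)$ arises in the proof of the theorem itself.
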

Note that this very simple expression can be used to give the outage
probability of spatial multiplexing with ZF receivers when $M=N$.

\subsection{Throughput of Spatial Multiplexing}

To compute the throughput achieved by spatial multiplexing with MRC
and ZF receivers, we substitute the relevant parameters from Table
\ref{table:gamma_param} into (\ref{eq:cdf_general}), and substitute
the resulting expression into (\ref{eq:through_def}).

Figs. \ref{fig:mrc_throughput_changingM} and
\ref{fig:zf_throughput_changingM} show the throughputs achieved by
spatial multiplexing with MRC and ZF receivers for different SINR
operating values $\beta$, and different numbers of transmission streams
$M$.  In both cases, results are shown for $N=4$ antennas. We see
that for all curves, the throughput increases monotonically with
$\lambda$ up to a certain value, after which it decreases
monotonically. This behavior is intuitive, since increasing
$\lambda$ results in a higher number of transmissions in the
network, however it also yields more interference for a given link.
The inherent trade-off between these competing factors is clear from
the figures.

For both receiver
structures, we see that in many cases, the throughput is significantly higher when less
antennas are used for transmission.  This is particularly
significant if the SINR operating value $\beta$ is high, and when the spatial intensity
$\lambda$ is large.  This behavior can be explained by noting that
in these regimes, the additional interference in the network caused
by each transmitting node employing more antennas outweighs the
benefits of an increased multiplexing gain. As we may expect,
however, for the contrasting scenario where $\beta$ and $\lambda$
are low, it is beneficial (in terms of network throughput) to use
more transmit antennas. The results in Figs. \ref{fig:mrc_throughput_changingM} and
\ref{fig:zf_throughput_changingM} demonstrate various important
tradeoffs which arise in multi-antenna ad hoc networks when using
spatial multiplexing transmission. We will examine these further in
the following section, where we derive simplified expressions for
various asymptotic regimes.

\begin{figure}[tb!]
\centerline{\includegraphics[width=0.7\columnwidth]{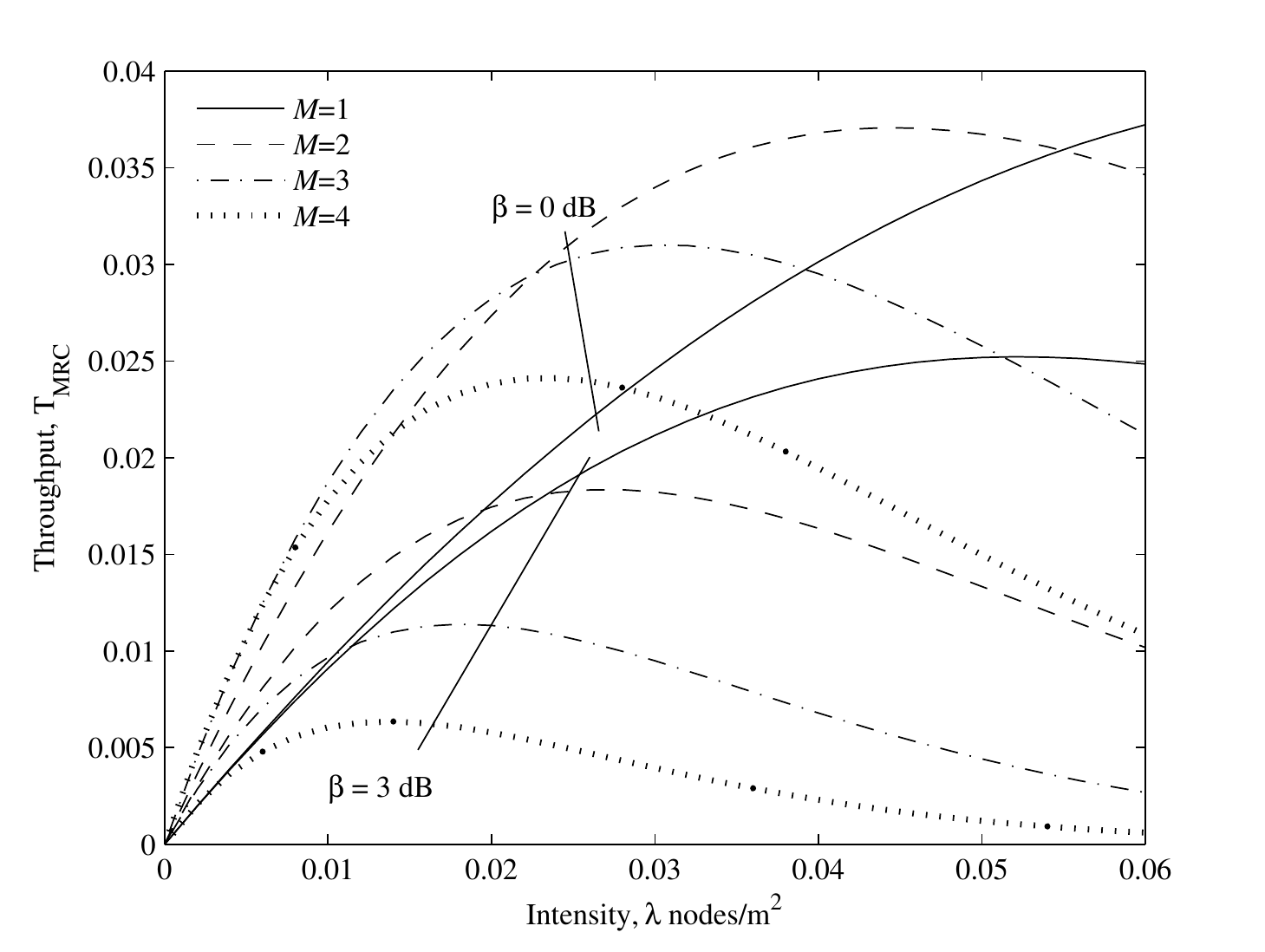}}
\caption{Throughput vs intensity of slotted ALOHA with spatial multiplexing and MRC receivers, and with $N=4$, $\alpha=3.1$, ${r_{\rm tr}}=2$ m, $\rho=25$ dB and $p=1$.}
\label{fig:mrc_throughput_changingM}
\end{figure}

\begin{figure}[tb!]
\centerline{\includegraphics[width=0.7\columnwidth]{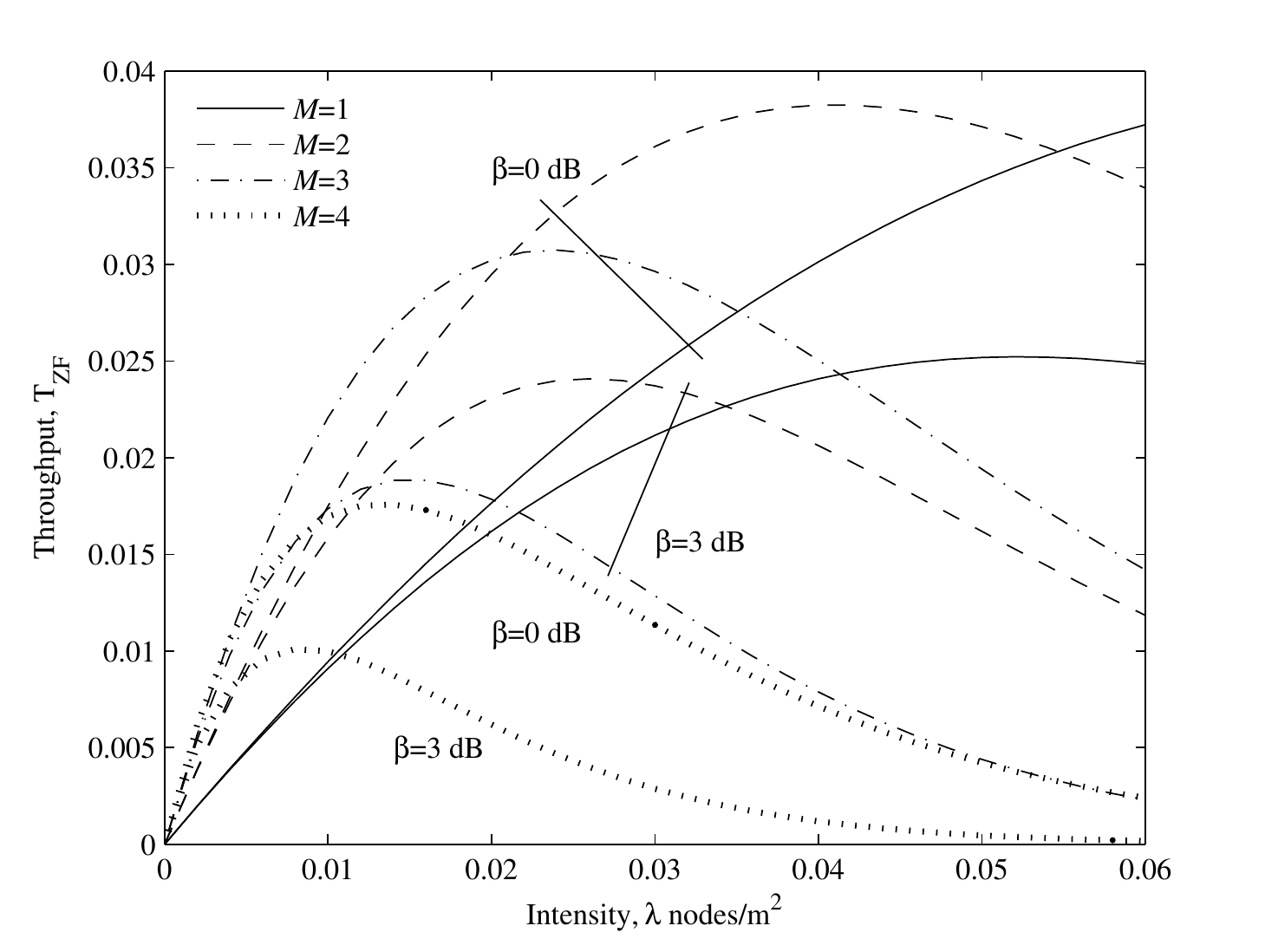}}
\caption{Throughput vs intensity of slotted ALOHA with spatial multiplexing and ZF receivers, and with $N=4$, $\alpha=3.1$, ${r_{\rm tr}}=2$ m, $\rho=25$ dB and $p=1$.}
\label{fig:zf_throughput_changingM}
\end{figure}

\subsubsection{Special Case: Spatial Multiplexing with ZF ($M = N$)}

For this special case, as evident from
(\ref{eq:cdf_general_simple}), the exact outage probability and
throughput expressions reduce to very simple forms, and we can gain
some interesting analytical insights.  In particular, taking the
derivative of ${\rm T}_{\rm ZF}$, we find that with all other
parameters fixed, the throughput is maximized if $\lambda$ is chosen
to satisfy:
\begin{align}\label{eq:cond_ZF_lambdaincrease}
\lambda^{\rm opt} = \frac{\Gamma(N)}{\pi p \Gamma\left(N +
\frac{2}{\alpha} \right) \Gamma\left(1-\frac{2}{\alpha} \right)
\beta^{\frac{2}{\alpha}}  r_{\rm tr}^{2} } \; .
\end{align}
This point is analogous to the ``peaks'' identified in Figs.
\ref{fig:mrc_throughput_changingM} and
\ref{fig:zf_throughput_changingM}, and gives the optimal tradeoff in
terms of network interference and spatial multiplexing gain.
As we may expect, we see that this optimal spatial intensity
decreases when $N$ increases (since there is more network
interference, whilst the per-link multiplexing gain is unchanged), or
when $\beta$ or $r_{\rm tr}$ increases (since this puts higher
reliability requirements on the per-link performance, whilst not
changing the network interference). Note also that the probability
of successful transmission, $1-{\rm F}_{\gamma_{\rm ZF}}(\beta)$,
obtained by using $\lambda^{\rm opt}$, is $e^{-1}$. This is
interesting since it coincides precisely with the probability of
successful transmission for \emph{single} antenna systems, presented
in \cite{baccelli06,weber07}.

In addition to considering the optimal $\lambda$, it is also of
interest to study the optimal number of antennas.  In particular, if
all other parameters are kept fixed, then we find that the
throughput is maximized by choosing $N$ as 
\begin{align}
N^{\rm opt} = \max ( \lfloor x\rfloor, 1 )
\end{align}
with $x$ the solution to
\begin{align}\label{eq:cond_ZF_equalant}
\frac{\ln \left(1+\frac{1}{x}\right) \Gamma(x+1)}{\Gamma\left(x +
\frac{2}{\alpha} \right)} = \frac{2 \pi p \lambda
\Gamma\left(1-\frac{2}{\alpha} \right) \beta^{\frac{2}{\alpha}}
r_{\rm tr}^{2} }{\alpha }+ \frac{\beta r_{\rm tr}^\alpha}{\rho} \; .
\end{align}
This is illustrated in Figs.\ \ref{fig:maxantenna_lambda}, which
plots the optimal number of antennas for different node intensities.
We see that $N^{\rm opt}$ is decreasing in $\lambda$, in
line with the observations in Figs.\
\ref{fig:mrc_throughput_changingM} and
\ref{fig:zf_throughput_changingM}.

\begin{figure}[tb!]
\centerline{\includegraphics[width=0.7\columnwidth]{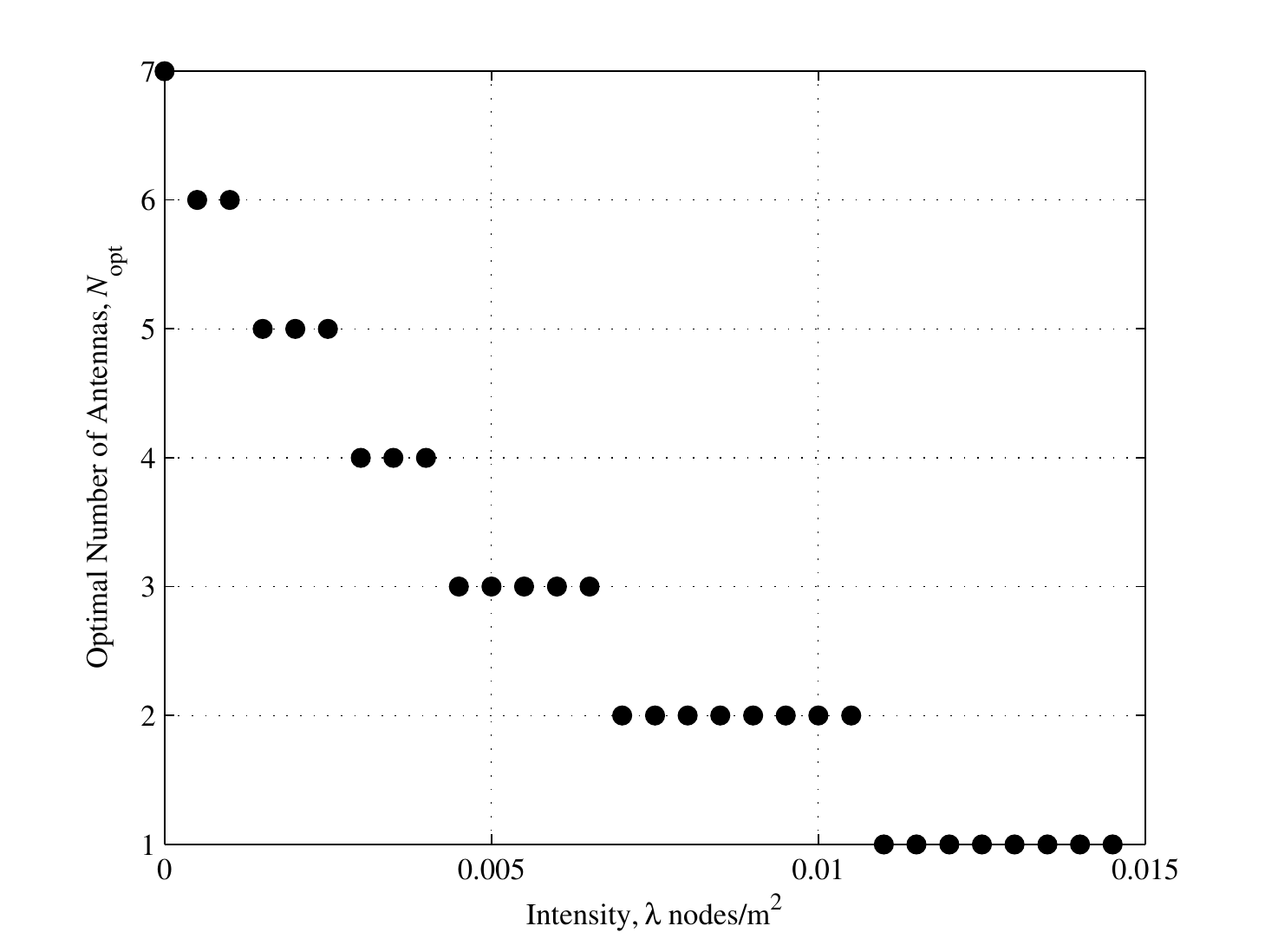}}
\caption{Optimal number of antennas vs.\ intensity with ${r_{\rm tr}}=3$ m, $\rho=25$ dB, $\beta=0$ dB, $\alpha=4$ and $p=1$.}
\label{fig:maxantenna_lambda}
\end{figure}

\subsection{Throughput of OSTBC}

To compute the throughput achieved by OSTBC, we substitute the
relevant parameters from Table \ref{table:gamma_param} into
(\ref{eq:cdf_general}), and substitute the resulting expression into
(\ref{eq:through_def}).  Recall that the parameters from Table
\ref{table:gamma_param} give approximations for the outage
probability and throughput of OSTBC, rather than exact results.

Figs.\ \ref{fig:ostbc_outage_compare_nt4},
\ref{fig:ostbc_outage_compare_nt3}, and
\ref{fig:ostbc_outage_compare_nt4Ns2} plot the outage probability
vs.\ SINR threshold for three different OSTBC codes, with $N=M$.
Fig.\ \ref{fig:ostbc_outage_compare_nt4} is based on the code
\begin{equation}\label{eq:ostbc_code3}
\mathbf{X}_\ell=\left[ \begin{array}{cccc}
x_{\ell,1}  & -x_{\ell,2}^* & x_{\ell,3}^* & 0 \\
x_{\ell,2} & x_{\ell,1}^* & 0 & x_{\ell,3}^* \\
x_{\ell,3} & 0 & -x_{\ell,1}^* & -x_{\ell,2}^* \\
0 & x_{\ell,3} & x_{\ell,2} &  -x_{\ell,1}\end{array} \right] \; ,
\end{equation}
Fig.\ \ref{fig:ostbc_outage_compare_nt3} is based on the code
\begin{equation}\label{eq:ostbc_code2}
 \mathbf{X}_\ell= \left[ \begin{array}{cccc}
x_{\ell,1}  & 0  & x_{\ell,2} & -x_{\ell,3}\\
0 & x_{\ell,1} & x_{\ell,3}^* & x_{\ell,2}^* \\
-x_{\ell,2}^* & -x_{\ell,3} & x_{\ell,1}^* & 0\end{array}  \right]
\; ,
\end{equation}
and Fig. \ref{fig:ostbc_outage_compare_nt4Ns2} is based on the code
\begin{equation}\label{eq:ostbc_code_M4_2}
\mathbf{X}_\ell=\left[ \begin{array}{cccc}
x_{\ell,1}  & -x_{\ell,2}^*  & 0 & 0 \\
x_{\ell,2}  & x_{\ell,1}^*  & 0 & 0 \\
0 & 0 & x_{\ell,1}  & -x_{\ell,2}^*\\
0 & 0 & x_{\ell,2}  & x_{\ell,1}^*\end{array} \right] \; .
\end{equation}
The analytical curves are seen to accurately approximate the Monte
Carlo simulated curves for all SINR thresholds. For further
comparison, the outage probability approximation given in
\cite{hunter07} is also shown. The increased accuracy of our
approximation is clearly evident.

\begin{figure}[tb!]
\centerline{\includegraphics[width=0.7\columnwidth]{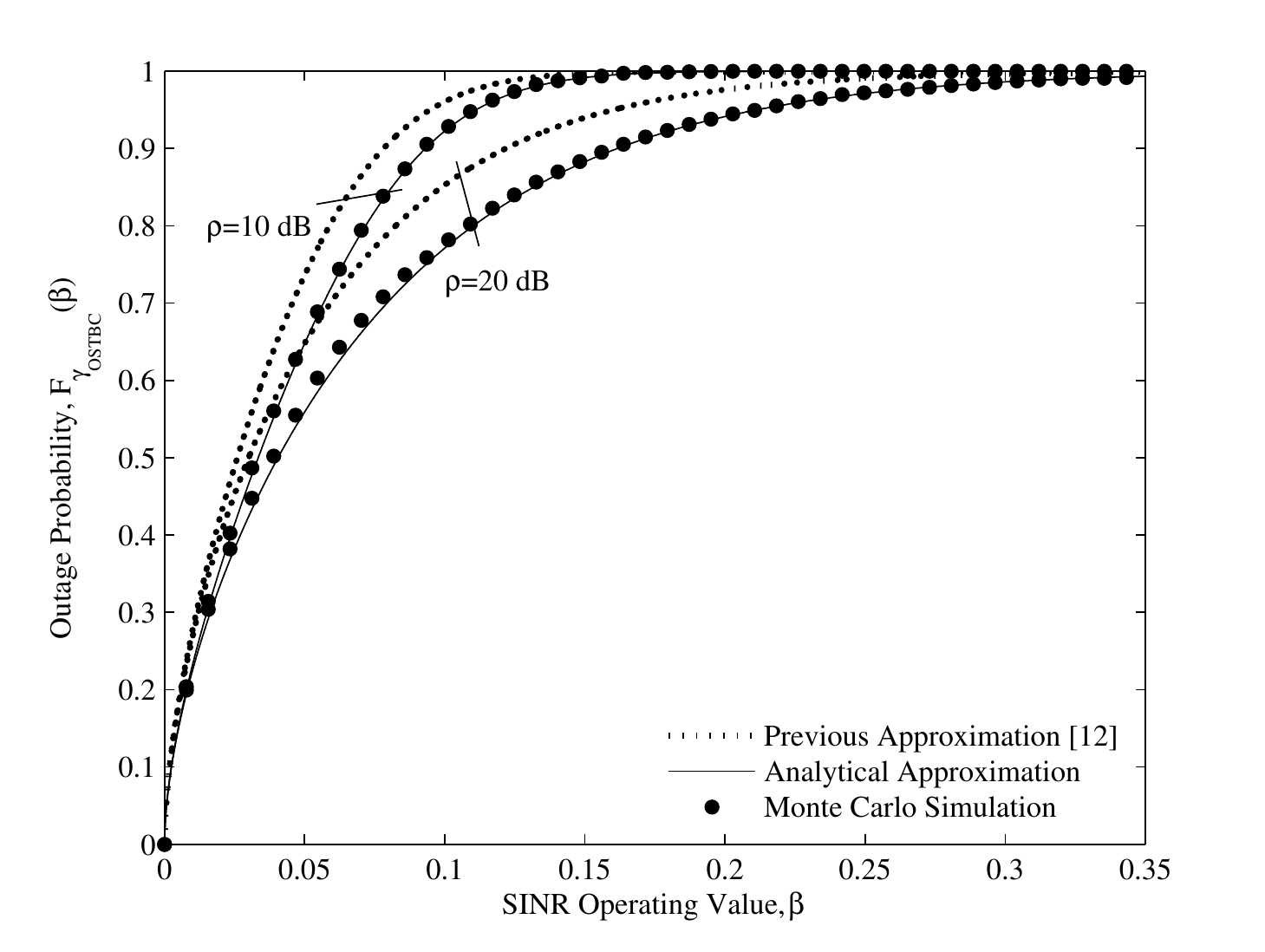}}
\caption{Outage probability vs SINR of slotted ALOHA with OSTBC using the code in (\ref{eq:ostbc_code3}), and with $\alpha=3.5$, ${r_{\rm tr}}=5$ m, $\lambda=0.1$ nodes/${\rm m}^2$, $N=4$ and $p=1$.}
\label{fig:ostbc_outage_compare_nt4}
\end{figure}

\begin{figure}[tb!]
\centerline{\includegraphics[width=0.7\columnwidth]{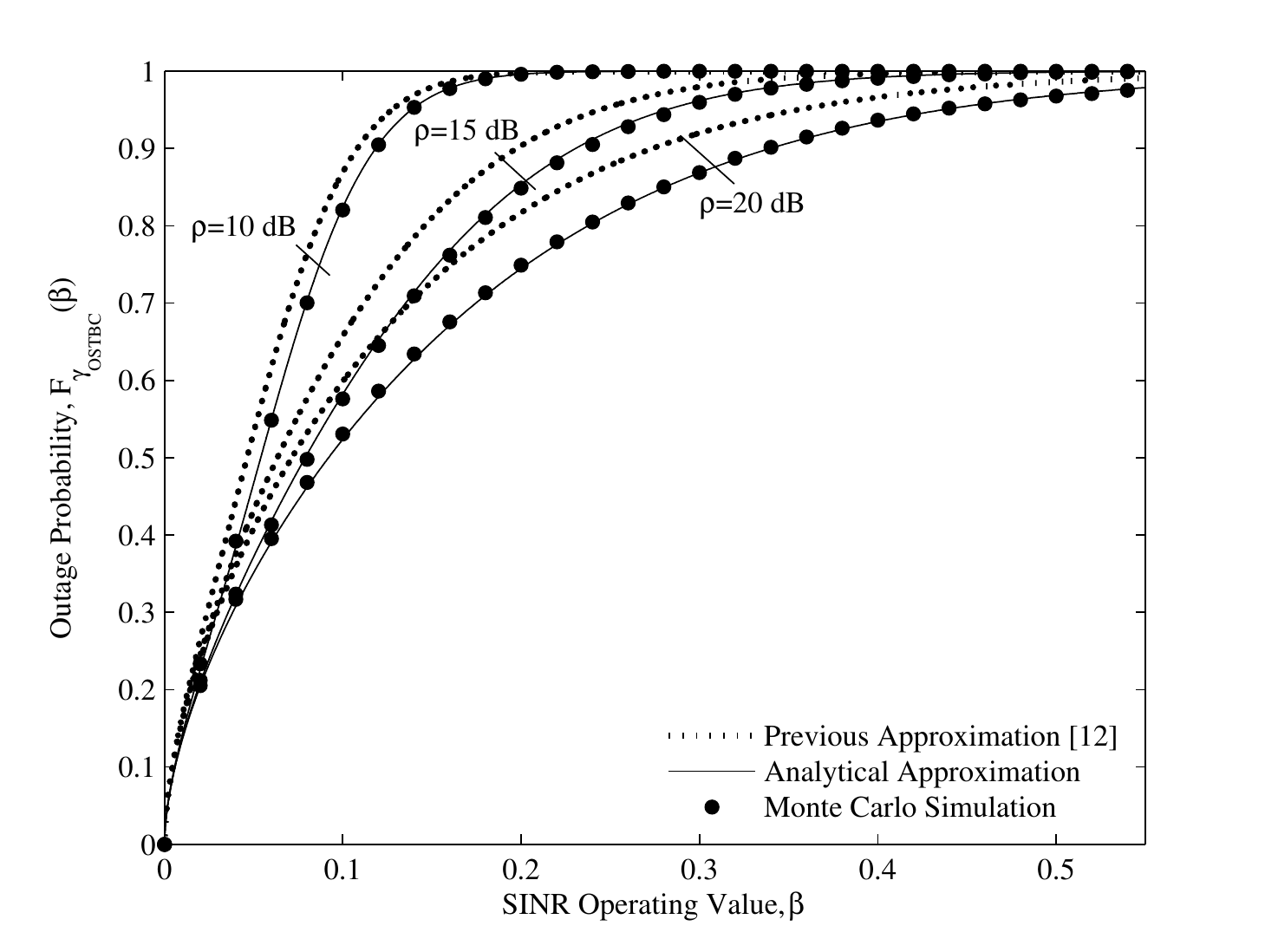}}
\caption{Outage probability vs SINR of slotted ALOHA with OSTBC using the code in (\ref{eq:ostbc_code2}), and with $\alpha=3.5$, ${r_{\rm tr}}=5$ m, $\lambda=0.05$ nodes/${\rm m}^2$, $N=3$ and $p=1$.}
\label{fig:ostbc_outage_compare_nt3}
\end{figure}

\begin{figure}[tb!]
\centerline{\includegraphics[width=0.7\columnwidth]{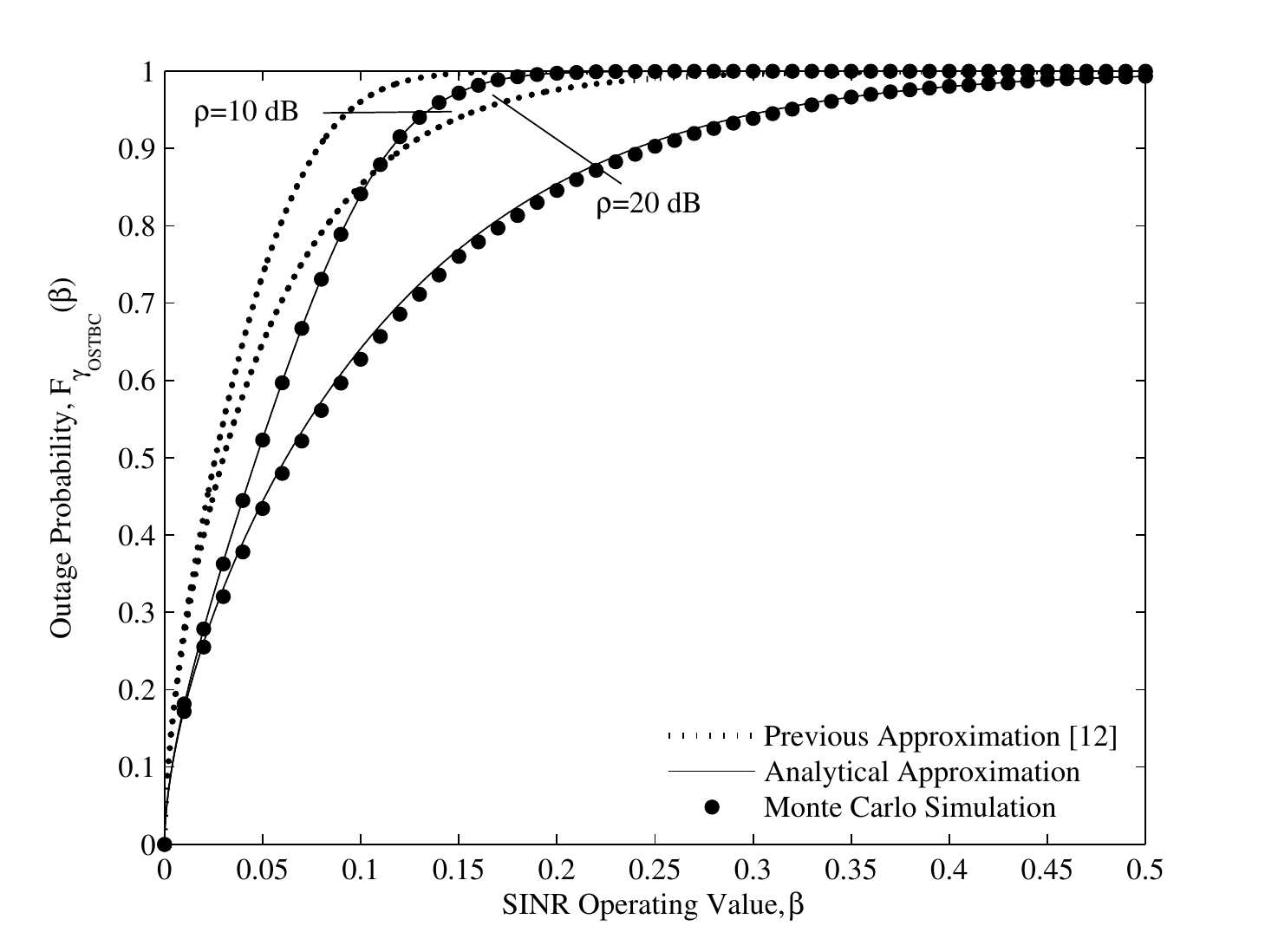}}
\caption{Outage probability vs SINR of slotted ALOHA with OSTBC using the code in (\ref{eq:ostbc_code_M4_2}), and with $\alpha=3.5$, ${r_{\rm tr}}=5$ m, $\lambda=0.1$ nodes/${\rm m}^2$, $N=4$ and $p=1$.}
\label{fig:ostbc_outage_compare_nt4Ns2}
\end{figure}

Note that our outage probability approximation becomes an
\emph{exact} result for the important class of cyclic antenna
diversity codes, which transmit only one symbol per OSTBC codeword,
with the symbol being sent out of a different antenna during each
channel use. For example, for the case $M=4$, the codeword matrix
$\mathbf{X}_\ell$ for this coding scheme would have the form
\begin{align}\label{eq:ostbc_code_M4_1}
\mathbf{X}_\ell = x_{\ell,1} \mathbf{I}_4 \; .
\end{align}
This type of code, whilst achieving full spatial diversity order,
results in the lowest code rate among all OSTBC codes, under the
assumption that at least one symbol is transmitted per time slot.
However, as we show in Section \ref{sec:analysis_compare}, if the
network is sufficiently dense, then this type of coding scheme
becomes \emph{optimal} in terms of maximizing throughput, due to the
minimal network interference it yields compared with other higher
rate codes.

Fig.\ \ref{fig:ostbc_throughput_changingM} shows the throughput achieved by OSTBC, based on (\ref{eq:through_def}), for different SINR operating values $\beta$, and different numbers of antennas used for transmission $M$.  The codes used for $M=2$ and $M=3$ are given by  (\ref{eq:alamouti_param}) and (\ref{eq:ostbc_code2}) respectively. Note that both of these codes correspond to maximum-rate codes for the particular antenna configuration employed \cite{liang03}. As with spatial multiplexing, we see that for all curves, the throughput increases monotonically with $\lambda$ up to a certain peak point, after which the throughput decreases monotonically.

Fig. \ref{fig:ostbc_throughput_changingM} also reveals the interesting fact that the throughput can be significantly higher if less antennas are used for transmission.  This can be explained by first noting that for $M=2$ when using the code in (\ref{eq:alamouti_param}), we have $\frac{N_I}{M}=2$, and for $M=3$ when using the code in (\ref{eq:ostbc_code2}), we have $\frac{N_I}{M}=\frac{7}{3}$ . Hence, whilst increasing $M$ increases the spatial diversity order, it also increases $\frac{N_I}{M}$. (Note that for maximum-rate codes, such as the ones used in Fig.\ \ref{fig:ostbc_throughput_changingM}, it can easily be shown that $N_I$ is always an increasing function of $M$.) Further, by invoking \emph{Lemma \ref{lem:gamma_increase_shape}}, we see that the moments of the approximate normalized interference term $\tilde{\mathcal{K}}_{\ell}$ in (\ref{eq:ostbc_sinr_approximate}) are increasing functions of $\frac{N_I}{M}$. This suggests that the additional interference in the network caused by each transmitting node employing $M = 3$ antennas using the code in (\ref{eq:ostbc_code2}), compared with $M = 2$ antennas using the code in (\ref{eq:alamouti_param}), outweighs the benefits in terms of an increased spatial diversity order.

Fig. \ref{fig:ostbc_throughput_changingNI} shows the throughput achieved by OSTBC, based on (\ref{eq:through_def}), for different SINR operating values $\beta$, and different
values of $N_I$. In particular, for $N_I=4$, $N_I=8$, and $N_I=12$, we use the codes in (\ref{eq:ostbc_code_M4_1}), (\ref{eq:ostbc_code_M4_2}), and (\ref{eq:ostbc_code3}) respectively. Note that each of these codes have the same diversity order, but different code rates $R$.  Specifically, for the case $N_I=4$, $R=\frac{1}{4}$; for the case $N_I=8$, $R=\frac{1}{2}$; and for the case $N_I=12$, $R=\frac{3}{4}$. Clearly the code rate is an increasing function of $N_I$.  Fig.\ \ref{fig:ostbc_throughput_changingNI} reveals the interesting fact that in many cases, the throughput is higher when a lower code rate is used. This is because the additional interference caused by more simultaneous symbol transmissions in the network outweighs the benefits of an increased code rate.

\begin{figure}[tb!]
\centerline{\includegraphics[width=0.7\columnwidth]{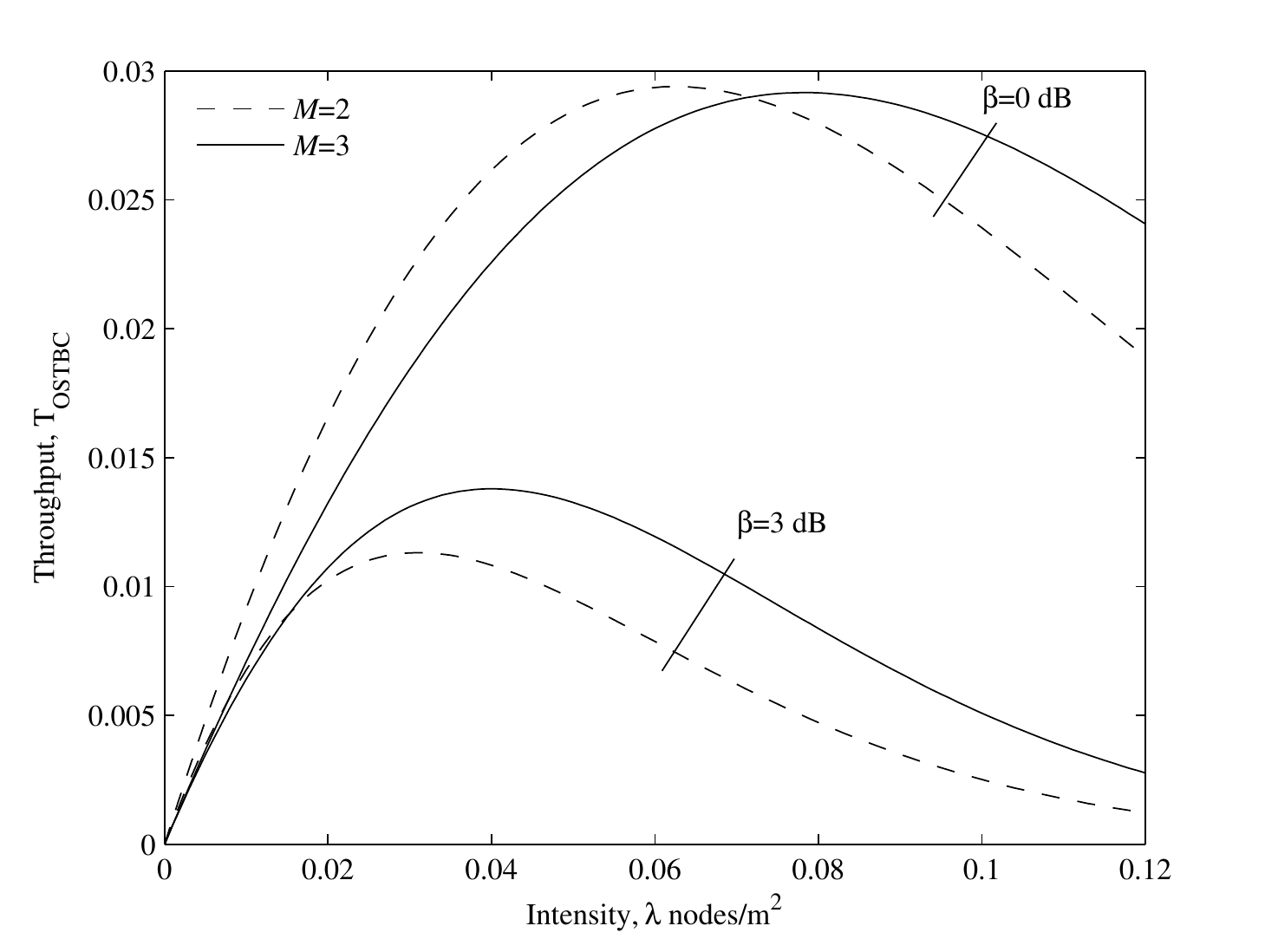}}
\caption{Throughput vs intensity of slotted ALOHA with OSTBC, and with $N=4$, $\alpha=3.1$, ${r_{\rm tr}}=2$ m, $\rho=25$ dB and $p=1$.}
\label{fig:ostbc_throughput_changingM}
\end{figure}

\begin{figure}[tb!]
\centerline{\includegraphics[width=0.7\columnwidth]{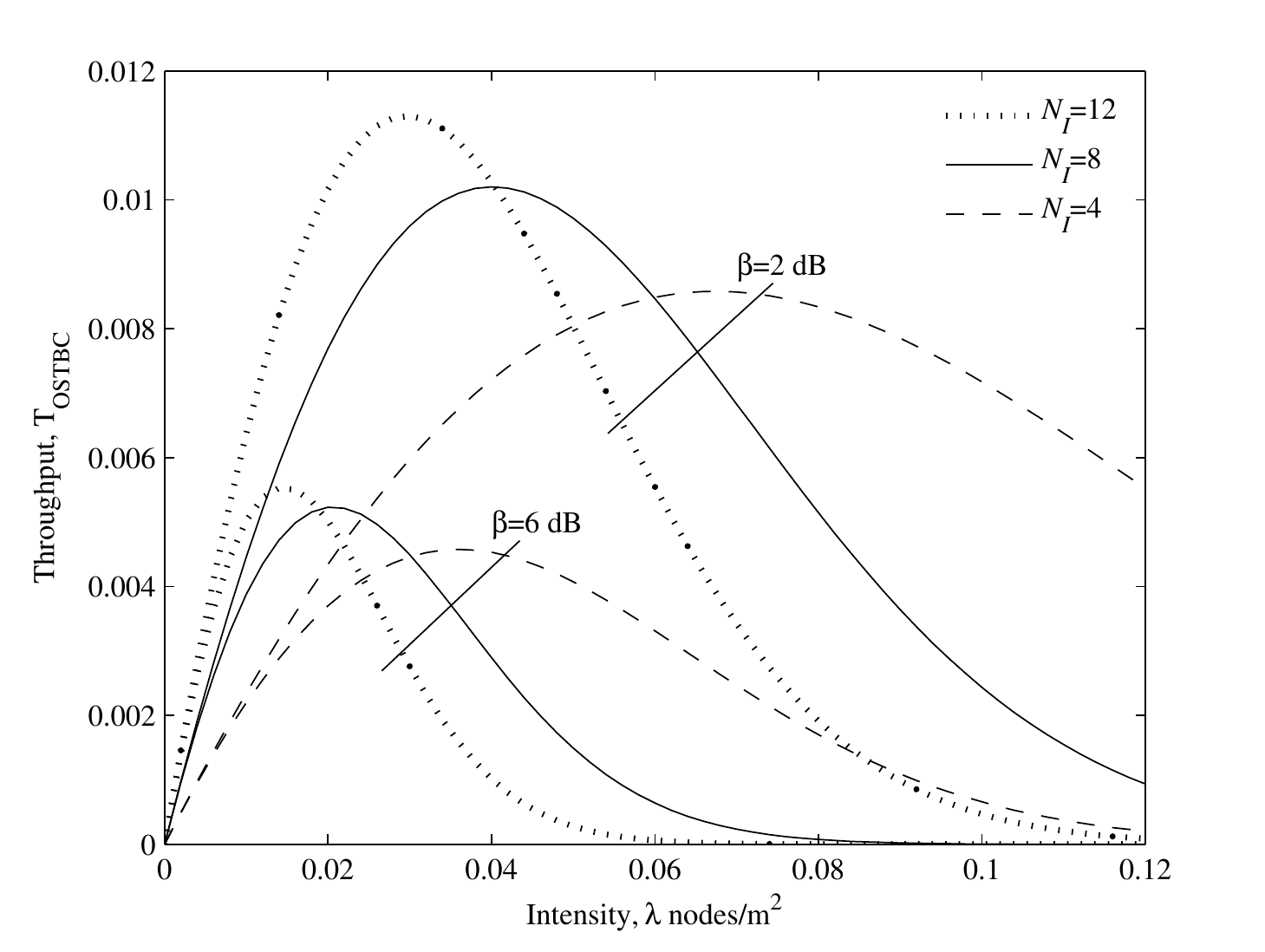}}
\caption{Throughput vs intensity of slotted ALOHA with OSTBC, and with $N=4$, $\alpha=3.1$, ${r_{\rm tr}}=3$ m, $\rho=20$ dB and $p=1$.}
\label{fig:ostbc_throughput_changingNI}
\end{figure}

\subsection{Comparison}

We see in Table \ref{table:gamma_param} that the signal and interference powers for the spatial multiplexing and OSTBC
schemes are different, and are dependent on the system parameters $M$, $N$, $R$ and $N_I$. As such, it is not straightforward to determine which scheme performs the best, and under which scenario. For example, when $M=2$, it can be shown that the signal power of OSTBC using the Alamouti code is greater than spatial multiplexing with ZF receivers, while the interference power of both OSTBC and spatial multiplexing schemes are the same. However, for a fixed $\beta$, the per-link data rate for spatial multiplexing is twice that of OSTBC, due to the $M=2$ transmitted streams per node for spatial multiplexing. Thus although the SINR of OSTBC is greater than spatial multiplexing with ZF receivers for $M=2$, the overall network throughput may not be greater.

The best performing receiver used for spatial multiplexing is also dependent on the particular scenario.  For example, it is well known that for a single user MIMO system with no interfering nodes, spatial multiplexing with ZF receivers performs better than spatial multiplexing with MRC receivers at high SNR, hence ZF receivers is expected to perform better than MRC receivers in sparse networks. However, for MRC receivers, as the node density increases, the impact of the interference from other transmitting nodes becomes more dominant than the self-interference. By noting that the distribution of the interference power from these interfering transmitting nodes for both MRC and ZF receivers are equal, the distribution of the \emph{total interference} for both MRC and ZF receivers thus converge with increasing node density. The key factor which distinguishes between these two schemes in dense networks is the signal power, which is greater for MRC receivers. This suggests that the best performing receiver used for spatial multiplexing is dependent on the particular node density.

The preceding discussion suggests that the best performing scheme is dependent on the system and network parameters. This can be seen in Fig.\ \ref{fig:compare_throughput_changingN}, which plots the network
throughput vs.\ node intensity $\lambda$, of spatial multiplexing
with MRC and ZF receivers, and OSTBC, based on (\ref{eq:through_def}), for different numbers
of antennas used for transmission, $M$. We observe that spatial
multiplexing with ZF performs the best in sparse network
configurations (i.e., small $\lambda$), while OSTBC performs the
best in dense networks (i.e., high $\lambda$). Further, spatial
multiplexing with MRC receivers is seen to perform better than ZF
receivers at high $\lambda$, which agrees with previous discussion. We will also prove this analytically in Section \ref{sec:analysis_compare}.

\begin{figure}[tb!]
\centerline{\includegraphics[width=0.7\columnwidth]{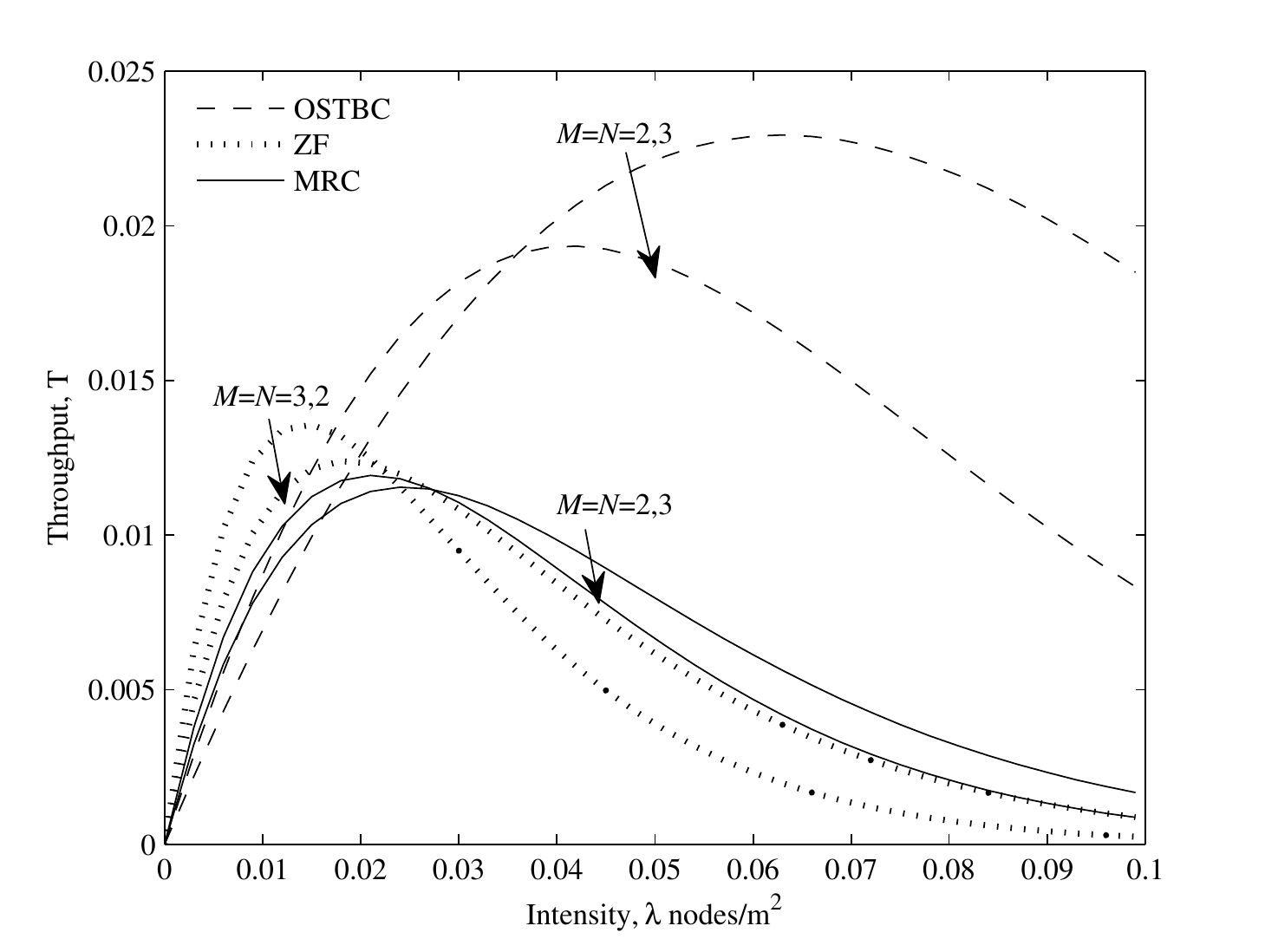}}
\caption{Throughput vs intensity of slotted ALOHA with spatial multiplexing and OSTBC, and with $\alpha=3.3$, ${r_{\rm tr}}=2$ m, $\rho=25$ dB, $\beta=2$ dB and $p=1$.}
\label{fig:compare_throughput_changingN}
\end{figure}

The best performing scheme can also be shown to be dependent on the SINR $\beta$ operating value, as shown in Fig.\ \ref{fig:compare_throughput_changeSINR}, which plots the network
throughput vs.\ SINR $\beta$ for all three schemes, again
for different $M$. We see that spatial multiplexing with MRC
receivers performs the best for low $\beta$, while OSTBC performs
the best for high $\beta$.

\begin{figure}[tb!]
\centerline{\includegraphics[width=0.7\columnwidth]{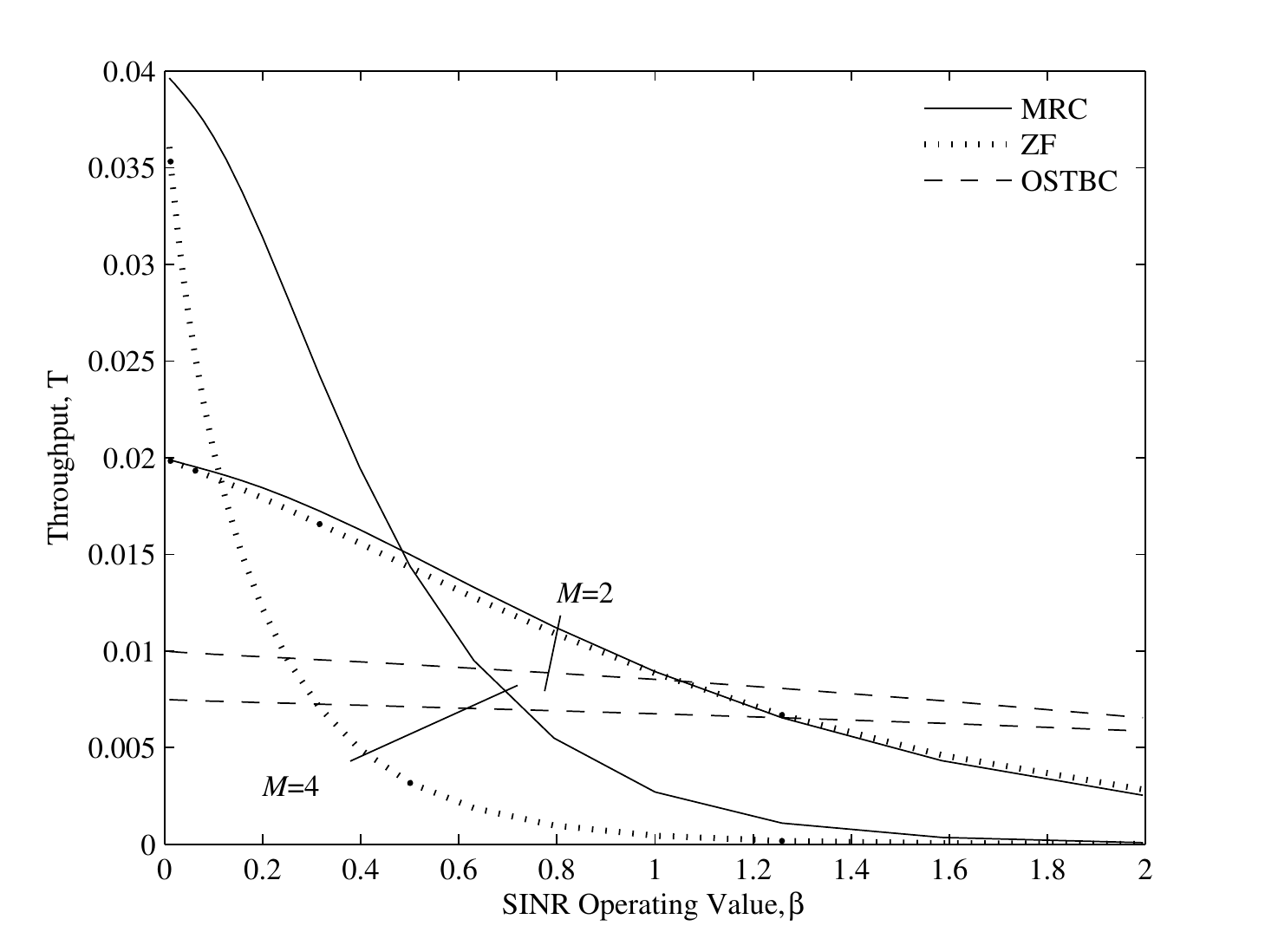}}
\caption{Throughput vs SINR operating value $\beta$ of slotted ALOHA with spatial multiplexing and OSTBC, and with $N=4$, $\alpha=2.5$, ${r_{\rm tr}}=3$ m, $\rho=20$ dB, $\lambda=0.01$ nodes/${\rm m}^2$ and $p=1$.}
\label{fig:compare_throughput_changeSINR}
\end{figure}

As can be seen from Figs.
\ref{fig:mrc_throughput_changingM}--\ref{fig:compare_throughput_changeSINR},
the impact of the number of antennas used for transmission, and the
relative throughput of the spatial multiplexing and OSTBC systems
are dependent on different network parameters. To explore this
further, it is convenient to analyze the throughput of spatial
multiplexing and OSTBC in asymptotic regimes.  This is considered in
the following section.

\section{Network Throughput: Asymptotic Analysis}\label{sec:analysis_compare}

In this section, we analyze the performance of spatial multiplexing
and OSTBC in dense networks, and for low and high $\beta$ operating
values. We note that in sparse networks, the performance of spatial
multiplexing and OSTBC approaches the performance of single user MIMO
systems, for which the performance is well
known (see e.g.,\ \cite{tse05,louie08,larsson03,forenza06b}). Thus,
although we investigate dense network scenarios, we do not consider
the opposite case of sparse networks in this paper. Note also, that for $M=1$, corresponding to single-input multiple-output (SIMO) transmission, the performance of all schemes are the same.

\subsection{Dense Networks (Large $\lambda$)}

Under dense network conditions, we have the following corollary.
  \begin{corollary}
As $\lambda \to \infty$, the SINR
c.d.f.\ in (\ref{eq:cdf_general}) behaves as\footnote{The notation
$f(x) =o( g(x))$ as $x \to \infty$ means that for every $M>0$, there exists a constant $x_0$ such that $|f(x)| \le M |g(x)|$ for all $x > x_0$.}
\begin{align}\label{eq:cdf_general_dense}
{\rm F}_{\gamma}(\beta)  =  {\rm F}_{\gamma,\lambda \to \infty}(\beta) + o\left(e^{- \lambda \left( \frac{\beta
\Omega}{\theta}\right)^{\frac{2}{\alpha}} \eta(n) }
  \lambda^{m-1}\right)
\end{align}
where
\begin{align}\label{eq:cdf_general_dense2}
{\rm F}_{\gamma,\lambda \to \infty}(\beta)  = 1 - \frac{2 \eta (n)}{\alpha \Gamma(m)}
\left( \frac{ \beta \Omega}{\theta} \right)^{\frac{2}{\alpha}}
  {\rm E}_Y\left[e^{-\frac{ \beta Y}{\theta}} \right]
e^{-\frac{\beta}{\theta}} e^{- \lambda \left( \frac{\beta
\Omega}{\theta}\right)^{\frac{2}{\alpha}} \eta(n) }
  \lambda^{m-1} \notag
\end{align}
and
\begin{align}
{\rm E}_Y\left[e^{-\frac{ \beta Y}{\theta}} \right] 
&=  \left(\frac{\theta}{\Upsilon \beta +\theta}\right)^{u} \; .
\end{align}
\end{corollary}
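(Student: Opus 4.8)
The plan is to read off the dense-network asymptotics directly from the exact c.d.f.\ (\ref{eq:cdf_general}) of Theorem~\ref{lemm_general_cdf}, rather than re-deriving anything from the Poisson functional: since the closed form is already available, the whole task reduces to isolating its leading behaviour in $\lambda$. The key structural observation is that every summand of the finite triple sum in (\ref{eq:cdf_general}) carries the \emph{same} exponential prefactor $e^{-\lambda(\beta\Omega/\theta)^{2/\alpha}\eta(n)}e^{-\beta/\theta}/\Gamma(m)$, and that the \emph{only} $\lambda$-dependence inside the brackets enters through the factor $\left(-\lambda(\beta\Omega/\theta)^{2/\alpha}\eta(n)\right)^{j}$. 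Hence, once that common exponential is factored out, the bracketed quantity is simply a polynomial in $\lambda$, and the asymptotic analysis collapses to identifying its top-degree monomial and bounding everything else.

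First I would determine the degree of this polynomial. Because the summation indices obey $j \le i \le m-\ell-1 \le m-1$, the largest attainable power is $\lambda^{m-1}$, and it is reached \emph{only} at the single index choice $\ell=0$, $i=m-1$, $j=m-1$. I would then evaluate the coefficient of this unique top term: with $\ell=0$ the binomial $\binom{m-1}{0}=1$, the factor $(-\beta/\theta)^{\ell}=1$, and the inner $\tau$-sum collapses to its $\tau=0$ entry, contributing ${\rm E}_Y[e^{-\beta Y/\theta}]$; the Stirling numbers reduce to $s(m,m)=1$ and $S(m-1,m-1)=1$; and the two sign factors combine as $(-1)^{m-1}\cdot(-1)^{m-1}=1$. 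Assembling the surviving powers of $2/\alpha$, $\eta(n)$ and $(\beta\Omega/\theta)^{2/\alpha}$ then produces the leading term ${\rm F}_{\gamma,\lambda\to\infty}(\beta)$ appearing in (\ref{eq:cdf_general_dense2}).

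Next I would control the remainder. Since (\ref{eq:cdf_general}) is a \emph{finite} sum there are no convergence concerns, so this step is essentially bookkeeping: every summand other than the one isolated above has $j\le m-2$, so after the common $e^{-\lambda(\beta\Omega/\theta)^{2/\alpha}\eta(n)}$ is pulled out it is at most $O(\lambda^{m-2})$, and therefore $o\!\left(e^{-\lambda(\beta\Omega/\theta)^{2/\alpha}\eta(n)}\lambda^{m-1}\right)$ as $\lambda\to\infty$. This gives (\ref{eq:cdf_general_dense}). Finally, the closed form for ${\rm E}_Y[e^{-\beta Y/\theta}]$ follows at once by putting $\tau=0$ in (\ref{eq:exp_poly_selfint_term}) and simplifying $(\beta/\theta+1/\Upsilon)^{-u}/\Upsilon^{u}$ to $(\theta/(\Upsilon\beta+\theta))^{u}$.

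The main obstacle here is organisational rather than analytical: one must carefully navigate the nested Stirling-number sums to confirm that the top-degree monomial is genuinely unique and to pin down its coefficient, including the cancellation of signs. A miscount of the powers of $2/\alpha$, $\eta(n)$ or $(\beta\Omega/\theta)^{2/\alpha}$ carried by that coefficient is the easiest place to slip, so I would verify it against a small case such as $m=2$ before stating the general result. Once the dominant term is correctly extracted, the little-$o$ bound on the remaining finitely many lower-degree terms is immediate.
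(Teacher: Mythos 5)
Your overall strategy --- factor out the common prefactor $e^{-\lambda(\beta\Omega/\theta)^{2/\alpha}\eta(n)}e^{-\beta/\theta}/\Gamma(m)$ from the finite triple sum in (\ref{eq:cdf_general}), observe that what remains is a polynomial in $\lambda$ whose unique top-degree monomial sits at $\ell=0$, $i=m-1$, $j=m-1$, and bound the finitely many remaining terms by $O(\lambda^{m-2})$ times that exponential --- is the right route, and indeed the only sensible one given that the paper states this corollary with no explicit proof. Your index bookkeeping, the Stirling values $s(m,m)=S(m-1,m-1)=1$, the sign cancellation $(-1)^{m-1}\cdot(-1)^{m-1}=1$, the little-$o$ bound on the lower-degree terms, and the closed form $\left(\theta/(\Upsilon\beta+\theta)\right)^u$ for ${\rm E}_Y[e^{-\beta Y/\theta}]$ are all correct. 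The gap is in the one step you describe but never execute: ``assembling the surviving powers.'' The top monomial enters through $\bigl(-\lambda(\beta\Omega/\theta)^{2/\alpha}\eta(n)\bigr)^{j}$ together with $\left(\frac{2}{\alpha}\right)^{i}$, both at $i=j=m-1$, so the leading term of $1-{\rm F}_\gamma(\beta)$ is $\frac{1}{\Gamma(m)}\left(\frac{2}{\alpha}\,\eta(n)\left(\frac{\beta\Omega}{\theta}\right)^{\frac{2}{\alpha}}\lambda\right)^{m-1}{\rm E}_Y\!\left[e^{-\frac{\beta Y}{\theta}}\right]e^{-\frac{\beta}{\theta}}e^{-\lambda\left(\frac{\beta\Omega}{\theta}\right)^{\frac{2}{\alpha}}\eta(n)}$, i.e., the \emph{entire} bracket raised to the power $m-1$. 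The statement you claim to recover, (\ref{eq:cdf_general_dense2}), instead carries $\frac{2}{\alpha}\eta(n)(\beta\Omega/\theta)^{2/\alpha}$ to the \emph{first} power times $\lambda^{m-1}$. The two coincide only for $m=2$; for any other $m$ they differ by a nonzero constant factor, and since both expressions live at the order $e^{-\lambda(\cdot)}\lambda^{m-1}$, that difference is \emph{not} absorbed by the $o\bigl(e^{-\lambda(\cdot)}\lambda^{m-1}\bigr)$ remainder. So the proof, executed faithfully, establishes a statement different from the one quoted.

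Two further remarks. First, your proposed sanity check at $m=2$ is exactly the one case where the two coefficients agree, so it is blind to the discrepancy; the informative check is $m=1$ with $Y=0$, where (\ref{eq:cdf_general_simple}) gives the exact value $1-{\rm F}_\gamma(\beta)=e^{-\lambda(\beta\Omega/\theta)^{2/\alpha}\eta(n)}e^{-\beta/\theta}$, while (\ref{eq:cdf_general_dense2}) would insert a spurious factor $\frac{2}{\alpha}\eta(n)(\beta\Omega/\theta)^{2/\alpha}$. You can also corroborate the $(m-1)$-th power independently of Theorem~\ref{lemm_general_cdf}: each differentiation of the interference Laplace transform $\exp\bigl(-\lambda\eta(n)\Omega^{2/\alpha}s^{2/\alpha}\bigr)$ brings down, to leading order in $\lambda$, one full factor $\lambda\eta(n)\Omega^{2/\alpha}\frac{2}{\alpha}s^{2/\alpha-1}$, so the dominant ($k=m-1$) term of ${\rm Pr}(W>\beta(Y+I+1))$ necessarily carries the bracket to the power $m-1$. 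The upshot is that the printed corollary has the exponent $m-1$ attached only to $\lambda$ rather than to the whole factor (a slip with no downstream consequence, since (\ref{eq:through_dense_general}) and the ensuing design conclusions use only the proportionality ${\rm T}\propto e^{-\lambda\beta^{2/\alpha}r_{\rm tr}^2\eta(n)}\lambda^{m}$, in which constants are immaterial). A complete write-up must either state and prove the corrected coefficient or explicitly restrict to $m=2$; asserting that the assembly reproduces (\ref{eq:cdf_general_dense2}) as written is untenable.
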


To give an indication as to when a network is ``sufficiently dense''
such that the expansion (\ref{eq:cdf_general_dense2}) is accurate,
Table \ref{table:dense_verify} tabulates the quantity
$1/\lambda_{\rm min}$, with $\lambda_{\min}$ representing the
minimum node density required such that (\ref{eq:cdf_general_dense2})
is within at least $85 \%$ of the true non-asymptotic value. Intuitively, the
quantity $1/\lambda_{\rm min}$ gives a measure of the maximum
allowable separation between adjacent transmitting nodes (on
average), in order for the asymptotic expansion
(\ref{eq:cdf_general_dense2}) to serve as a good approximation. Note
that a scenario with $M = 1$ data streams is chosen, because as we will discuss later, single-stream transmission is throughput-optimal in dense networks. Moreover, a
transmit-receive distance of $5$ ${\rm m}$ is chosen, which is practically
relevant (e.g., for wireless local area networks).

\begin{table*}[!t]
\caption{Average maximum area around each transmitting node, such that $\frac{\biggr|{\rm F}_{\gamma,\lambda \to \infty}(\beta)-{\rm F}_{\gamma}(\beta)\biggr|}{1-{\rm F}_{\gamma}(\beta)}>0.15$, for various SINR thresholds $\beta$ and transmit SNRs $\rho$ with $M=1$, $N=3$, $r_{\rm tr}=5$ m and $\alpha=2.1$.}\label{table:dense_verify}
\centering
\begin{tabular}{|c|c|c|c|c|c|}
  \hline
 $\rho$ dB  $\setminus \beta$ dB & $1$  & $2$  & $5$  &  $10$    &  $20$   \\ \hline
$2$ & 6.500&  6.486&  6.383&  6.114&  5.506\\
$5$ & 12.447&  12.519&  12.520&  12.133&  10.981\\
$10$   & 33.478&  34.674&  36.900&  37.481&  34.638\\
$20$   & 111.483 &  130.208&  193.050&  286.532&  335.569 \\
  \hline
\end{tabular}
\end{table*}

The results in the table show that the large--$\lambda$ expansion
(\ref{eq:cdf_general_dense2}) serves as an accurate performance
measure for practical network configurations; in some cases,
applying even when the networks are \emph{relatively sparse}.  This
is particularly true for moderate to large SNRs. For example, at
$10$ dB SNR, a transmitter spacing of roughly $35$ ${\rm m}^2$ on
average is sufficient, which is relatively large compared with the
transmit-receive distance of $5$ m.  As the SNR is reduced, e.g., to
$2$ dB, a closer transmitter spacing of roughly $6$ ${\rm m}^2$ is
needed.  This behavior is intuitive, since dense networking
conditions are representative of interference-limited scenarios, in
which case the interference in the network is much more significant
compared with the noise.  If the SNR is reduced, then the
noise has greater relative effect, and there must be more
interference (i.e., a greater density $\lambda$) in order for the
interference-limited behavior to be apparent.

From (\ref{eq:through_def}) and
(\ref{eq:cdf_general_dense}), and recalling that $\Omega / \theta =
r_{\rm tr}^{\alpha}$ for each scheme (c.f.\ Table
\ref{table:gamma_param}), it follows that for large $\lambda$ the
throughput becomes
\begin{align}\label{eq:through_dense_general}
{\rm T} \propto e^{- \lambda \beta^{\frac{2}{\alpha}} r_{\rm tr}^2
\eta(n) } \;  \lambda^{m} \; .
\end{align}
From this, we can obtain some useful insights into the network
performance and optimization:

\begin{itemize}

\item The throughput decays exponentially in $\lambda$. This loss of throughput indicates that in dense networks, the negative effects of interference will dominate any positive throughput gains obtained by an increase in the number of communication links.
\item Since the exponential in (\ref{eq:through_dense_general}) dominates for large
$\lambda$, the component $\eta(n)$ is a critical factor which
determines performance. Moreover, since $n$ is proportional to the
effective interference power caused by each interfering transmitter,
and $\eta(n)$ increases with $n$, for all three transmission schemes
it is best to choose $n$ as small as possible. Thus, recalling the
parameters in Table \ref{table:gamma_param}, we have the following
design criteria for dense networks:
\begin{itemize}
\item For spatial multiplexing with either MRC or ZF receivers, it is optimal to use only a \emph{single transmit stream} (i.e., $M=1$).
\item For OSTBC, it is optimal to use a \emph{cyclic antenna diversity} coding scheme.
This code, illustrated in (\ref{eq:ostbc_code_M4_1}) for $M=4$,  minimizes $n$
by maximizing the coding parameter $N_I$ (i.e., $N_I=M$). To determine the optimal $M$ for cyclic antenna diversity codes, we consider the leading order factor $\lambda^m$. As $m$ is proportional to the spatial diversity order, we note that the optimal choice is $M=N$.
\end{itemize}
\item For systems with\footnote{Although $M > 1$ is sub-optimal for dense
networks, for other networking scenarios this configuration will
become important.  Thus, it is still of interest to study the
throughput of such configurations under dense network conditions.}
$M
> 1$, OSTBC codes with $N_I < M^2$ will
yield a higher throughput than both spatial multiplexing schemes;
whereas, if $N_I > M^2$, the throughput will be worse than both.
This can be seen by again noting that the exponential in
(\ref{eq:through_dense_general}) dominates for large $\lambda$. As $N_I=M$ for cyclic antenna diversity code, we see that these codes always perform better than spatial multiplexing. For
the remaining case $N_I = M^2$, which only occurs for the Alamouti
code and therefore $M=2$, the throughputs of all three schemes have
the same exponential decay in (\ref{eq:through_dense_general});
however, OSTBC has the largest leading order factor $\lambda^m$, and
therefore achieves the highest throughput. Finally, for all $M > 1$,
spatial multiplexing with MRC achieves a higher throughput than ZF.
\end{itemize}
In general, these results reveal that OSTBC with a $M=N$ cyclic antenna diversity code is the optimal scheme in dense networks.

\subsection{Networks with High SINR Operating Values (Large $\beta$)}

For networks with high SINR operating values, we have the following
corollary.
\begin{corollary}
As $\beta \to \infty$, the SINR
c.d.f.\ (\ref{eq:cdf_general}) behaves as
\begin{align}\label{eq:cdf_general_highbeta}
 {\rm F}_{\gamma}(\beta)   &=  {\rm F}_{\gamma,\beta \to \infty}(\beta) + o\left(  {\rm E}_Y\left[e^{-\frac{ \beta Y}{\theta}} \right]
e^{-\frac{\beta}{\theta}} e^{- \lambda \left( \frac{\beta
\Omega}{\theta}\right)^{\frac{2}{\alpha}} \eta(n) } \beta^{m-1} \right)
 \end{align}
 where
 \begin{align}\label{eq:cdf_general_highbeta2}
{\rm F}_{\gamma,\beta \to \infty}(\beta)  = 1 - \frac{1}{\Gamma(m) \theta^{m-1}} {\rm E}_Y\left[e^{-\frac{ \beta Y}{\theta}} \right] e^{-\frac{\beta}{\theta}} e^{- \lambda \left( \frac{\beta \Omega}{\theta}\right)^{\frac{2}{\alpha}} \eta(n) } \beta^{m-1}  \; .
\end{align}
\end{corollary}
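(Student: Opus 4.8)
The plan is to extract the single dominant summand from the exact c.d.f.\ (\ref{eq:cdf_general}) as $\beta \to \infty$. First I would pull out the common factor $g(\beta) := e^{-\beta/\theta}\,e^{-\lambda(\beta\Omega/\theta)^{2/\alpha}\eta(n)}$, which multiplies every term, so that what remains is a finite quadruple sum whose generic summand is a $\beta$-independent constant (a product of a binomial coefficient, Stirling numbers, signs, and powers of $\theta$, $\lambda$, $\Omega/\theta$ and $\eta(n)$) times $\beta^{\ell}\,\beta^{2j/\alpha}\,{\rm E}_Y[e^{-\beta Y/\theta}Y^{\tau}]$. Using the closed form (\ref{eq:exp_poly_selfint_term}), the ratio ${\rm E}_Y[e^{-\beta Y/\theta}Y^{\tau}]/{\rm E}_Y[e^{-\beta Y/\theta}] = \frac{\Gamma(\tau+u)}{\Gamma(u)}(\beta/\theta+1/\Upsilon)^{-\tau} \sim \frac{\Gamma(\tau+u)}{\Gamma(u)}(\theta/\beta)^{\tau}$, so each summand is, to leading order, a constant times $\beta^{\,\ell+2j/\alpha-\tau}\,{\rm E}_Y[e^{-\beta Y/\theta}]\,g(\beta)$.

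The heart of the argument is then a finite maximization: I would find the index tuple $(\ell,i,j,\tau)$ maximizing the exponent $\ell+\tfrac{2}{\alpha}j-\tau$ over the admissible ranges $0\le\ell\le m-1$, $0\le i\le m-\ell-1$, $0\le j\le i$, $0\le\tau\le\ell$. Since $\tau$ enters negatively, the maximizer has $\tau=0$; for fixed $\ell$ the exponent increases in $j$, so $j=i=m-\ell-1$ is best, reducing the problem to maximizing $\ell+\tfrac{2}{\alpha}(m-\ell-1)=(1-\tfrac{2}{\alpha})\ell+\tfrac{2}{\alpha}(m-1)$. Because $\alpha>2$ gives $1-\tfrac{2}{\alpha}>0$, this is strictly increasing in $\ell$ and is maximized uniquely at $\ell=m-1$, which forces $i=j=0$. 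Hence the unique dominant tuple is $(m-1,0,0,0)$, with exponent $m-1$, and every other summand has a strictly smaller power of $\beta$.

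Evaluating the dominant summand, I would substitute $s(1,1)=S(0,0)=1$ and $\binom{m-1}{m-1}=\binom{m-1}{0}=1$, and note that $(-\beta/\theta)^{m-1}$ supplies a factor $(-1)^{m-1}$ which cancels the overall $(-1)^{m-1}$ in (\ref{eq:cdf_general}); since this tuple has $\tau=0$, its expectation is exactly ${\rm E}_Y[e^{-\beta Y/\theta}]$ with no approximation, and the survivor is $\frac{1}{\Gamma(m)\theta^{m-1}}\,\beta^{m-1}\,{\rm E}_Y[e^{-\beta Y/\theta}]\,g(\beta)$, which is exactly $1-{\rm F}_{\gamma,\beta\to\infty}(\beta)$ in (\ref{eq:cdf_general_highbeta2}). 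Since all remaining summands share the factor ${\rm E}_Y[e^{-\beta Y/\theta}]\,g(\beta)$ but carry a strictly smaller power of $\beta$ than $\beta^{m-1}$, their collected contribution is $o\!\left(\beta^{m-1}{\rm E}_Y[e^{-\beta Y/\theta}]\,g(\beta)\right)$, which is precisely the error order claimed in (\ref{eq:cdf_general_highbeta}).

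I expect the only delicate point to be the exponent maximization, specifically verifying that the fractional gains $\beta^{2j/\alpha}$ obtained by lowering $\ell$ can never offset the loss of a full integer power of $\beta$ — this is exactly where the hypothesis $\alpha>2$ enters — together with confirming that the boundary Stirling numbers $s(1,1)$ and $S(0,0)$ are nonzero so that the leading term does not accidentally vanish. The subsequent extraction of the leading coefficient and the sign cancellation are then routine bookkeeping, entirely parallel to the large-$\lambda$ corollary in (\ref{eq:cdf_general_dense2}).
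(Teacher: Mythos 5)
Your proposal is correct, and it is exactly the derivation the paper implicitly relies on: the corollary is stated without a separate proof, as a direct asymptotic expansion of the finite quadruple sum in (\ref{eq:cdf_general}), and your dominant-term extraction (with the exponent maximization $\ell+\tfrac{2}{\alpha}j-\tau$ settled by $\alpha>2$, the unique maximizer $(\ell,i,j,\tau)=(m-1,0,0,0)$, and the sign cancellation via $(-1)^{m-1}$) reproduces precisely the leading coefficient $\frac{1}{\Gamma(m)\theta^{m-1}}$ and the claimed error order. The only point worth stressing, which you already flag, is that uniqueness of the maximizer plus $s(1,1)=S(0,0)=1$ guarantees the leading term cannot vanish or be cancelled by other summands.
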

To check the accuracy of this expansion, Table
\ref{table:highbeta_verify} tabulates for various scenarios the
minimum SINR operating value $\beta$ such that
(\ref{eq:cdf_general_highbeta2}) is within $85 \%$ of the true non-asymptotic value.
These results demonstrate that the large--$\beta$ expansion
(\ref{eq:cdf_general_highbeta2}) serves as an accurate performance
measure for practical network configurations; in some cases,
applying even when $\beta$ is low. This is particularly true for low
to moderate SNRs. For example, at $2$ dB SNR and $\lambda$ chosen
such that the average transmitter spacing is $20$ ${\rm m}^2$, a
SINR operating value of $-12.832$ dB or above is
sufficient---something which is expected to be true for most
wireless applications. As the SNR is increased, e.g., to $10$ dB, a
higher SINR operating value of $-1.726$ dB or above is needed. This
matches with intuition, since, if the network is relatively sparse
(as in the example above), then for the outage probability to remain
constant as the SINR is increased, the SNR must increase
accordingly.

\begin{table*}[t!]
\caption{Minimum SINR $\beta$ (dB) operating value such that $\frac{\biggr|{\rm F}_{\gamma,\beta \to \infty}(\beta)-{\rm F}_{\gamma}(\beta)\biggr|}{1-{\rm F}_{\gamma}(\beta)}>0.15$, for various $1/\lambda$, and transmit SNRs $\rho$ with $M=1$, $N=3$, $r_{\rm tr}=5$ m and $\alpha=4$.}\label{table:highbeta_verify}
\centering
\begin{tabular}{|c|c|c|c|c|c|}
  \hline
 $\rho$ dB  $\setminus \frac{1}{\lambda} {\rm m}^2$ & $100$  & $50$  & $20$  &  $12.5$    &  $10$   \\ \hline
$2$ & -14.547 & -14.078  &  -12.832 &  -11.543 &  -10.701\\
$5$ & -11.421 &  -10.804 &  -8.925 &  -7.211 &  -6.142 \\
$10$ & -5.933 &  -4.827 &  -1.726 &  0.885 &  2.366\\
$15$   & -0.078 &  1.833 &  6.569 &  9.948 &  11.699 \\
  \hline
\end{tabular}
\end{table*}

From (\ref{eq:through_def}) and
(\ref{eq:cdf_general_highbeta}), and recalling that $\Omega / \theta =
r_{\rm tr}^{\alpha}$ for each scheme (c.f.\ Table
\ref{table:gamma_param}), it follows that for large $\beta$ the
throughput becomes
\begin{align}\label{eq:through_highbeta_general}
{\rm T} \propto  \frac{1}{\Gamma(m) \theta^{m-1} \Upsilon^u} e^{-\frac{\beta}{\theta}} e^{- \lambda r_{\rm tr}^2 \beta
^{\frac{2}{\alpha}} \eta(n) } \beta^{m-1-u} \; .
\end{align}
From this, we can obtain some useful insights into the network
performance and optimization:

\begin{itemize}
\item Recalling that $\alpha > 2$, we see that the exponential $e^{-\frac{\beta}{\theta}}$ in (\ref{eq:through_highbeta_general}) dominates for large
$\beta$.  This is intuitive, by recalling the direct correspondence
between the SINR operating value $\beta$ and the outage probability.
\item The component $\theta$, which represents the transmit SNR, is a critical factor, and this
quantity should be maximized in order to maximize the throughput. When $\theta$ is independent of $M$, as is the case for cyclic antenna diversity codes, the $\beta^{m-1-u} $ polynomial term should be maximized in order to maximize the throughput.
Recalling the parameters from Table \ref{table:gamma_param},
we arrive at the following design criteria for networks with high
SINR operating values:
\begin{itemize}
\item The optimal transmission scheme is the \emph{same as for dense networks}. That is,
for spatial multiplexing with either MRC or ZF receivers it is
optimal to use only a single transmit stream, whereas for OSTBC it
is optimal to use the $M=N$  cyclic antenna diversity coding scheme.
\end{itemize}
Note that if the SNR is also sufficiently high, then the exponential
$e^{- \lambda r_{\rm tr}^2 \beta ^{\frac{2}{\alpha}} \eta(n) }$ may
dominate $e^{-\frac{\beta}{\theta}}$; however, it is easy to see
that the same optimality criteria still applies.
\item For systems with $M > 1$, OSTBC codes yield a higher throughput than each of the
spatial multiplexing schemes. Moreover, since the throughput of both
spatial multiplexing schemes have the same exponential decay and
also the same polynomial factor $\beta^{m-1-u} $ in
(\ref{eq:through_highbeta_general}), their relative performance is
determined by the constant factor $\frac{1}{\Gamma(m) \theta^{m-1}
\Upsilon^u}$. Thus, by substituting the relevant parameters from
Table \ref{table:gamma_param}, we can show that ZF will deliver a
higher throughput than MRC if the following condition is met:
\begin{align}\label{eq:zf_mrc_highbeta}
\rho  \ge M r_{\rm tr}^\alpha
\left(\frac{\Gamma(N-M+1)}{\Gamma(N)}\right)^{\frac{1}{M-1}} ,
\end{align}
otherwise MRC will perform better.
\end{itemize}
In general, these results indicate that, as for dense networks, OSTBC with a $M=N$ cyclic antenna diversity code is the preferable scheme for high SINR operating values.

\subsection{Networks with Low SINR Operating Values (Small $\beta$)}

For networks with low SINR operating values, we have the following
corollary.
\begin{corollary}
As $\beta \to 0^+$, the SINR c.d.f.\ (\ref{eq:cdf_general}) behaves
as\footnote{The notation
$f(x) =O( g(x))$ as $x \to 0^{+}$ means there exists positive numbers $\delta$ and $M$ such that $|f(x)| \le M |g(x)|$ for $|x|< \delta$.}
\begin{align}\label{eq:cdf_general_lowbeta}
{\rm F}_{\gamma}(\beta)  &=     {\rm F}_{\gamma,\beta \to 0^{+}}(\beta) + O\left(\beta\right)
\end{align}
where
\begin{align}\label{eq:cdf_general_lowbeta2}
{\rm F}_{\gamma,\beta \to 0^{+}}(\beta)=  \lambda \frac{
\Gamma\left(m-\frac{2}{\alpha}\right)}{ \Gamma(m) }  \frac{ \eta (n)
}{\Gamma\left(1-\frac{2}{\alpha}\right) }  \left( \frac{\beta
\Omega}{\theta}\right)^{\frac{2}{\alpha}} \; .
\end{align}
\end{corollary}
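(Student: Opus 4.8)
The plan is to derive the statement as the leading term of a small-$\beta$ expansion of the exact c.d.f.\ (\ref{eq:cdf_general}). First I would set $a := \lambda\,\eta(n)\left(\beta\Omega/\theta\right)^{2/\alpha}$ and $b := \beta/\theta$, and record that, because $\alpha>2$ forces $2/\alpha<1$, we have $b=O(\beta)$ of strictly higher order than $a=O(\beta^{2/\alpha})$ as $\beta\to0^{+}$. Writing $1-{\rm F}_{\gamma}(\beta)=\frac{(-1)^{m-1}}{\Gamma(m)}e^{-a}e^{-b}\sum_{\ell=0}^{m-1}\binom{m-1}{\ell}(-b)^{\ell}P_{\ell}(a)\,Q_{\ell}(\beta)$, where $P_{\ell}(a)$ denotes the inner $(i,j)$ double sum and $Q_{\ell}(\beta)=\sum_{\tau=0}^{\ell}\binom{\ell}{\tau}{\rm E}_Y[e^{-\beta Y/\theta}Y^{\tau}]$, I observe that every $\ell\ge1$ term carries a factor $(-b)^{\ell}=O(\beta)$, that $e^{-b}=1+O(\beta)$, and that, by (\ref{eq:exp_poly_selfint_term}) with $\tau=0$, $Q_{0}(\beta)={\rm E}_Y[e^{-\beta Y/\theta}]=1+O(\beta)$. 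Hence only the $\ell=0$ block contributes to the leading order, and it suffices to expand $e^{-a}P_{0}(a)$ to first order in $a$.

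I would then check that the $O(1)$ part reproduces the constant $1$ and extract the coefficient of $a$. At $a=0$ only $i=j=0$ survives (using $S(i,0)=0$ for $i\ge1$ and $S(0,0)=1$), so $P_{0}(0)=s(m,1)=(-1)^{m-1}(m-1)!$ and $\frac{(-1)^{m-1}}{\Gamma(m)}P_{0}(0)=1$, consistent with ${\rm F}_{\gamma}(0)=0$. The order-$a$ term of $P_{0}(a)$ arises only from $j=1$, where $S(i,1)=1$, giving coefficient $-\sum_{i=1}^{m-1}s(m,i+1)(2/\alpha)^{i}$; adding the contribution of $-a$ acting on $P_{0}(0)$ from $e^{-a}$, the coefficient of $a$ in $e^{-a}P_{0}(a)$ equals $-\sum_{i=0}^{m-1}s(m,i+1)(2/\alpha)^{i}$.

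Next I would evaluate this Stirling sum in closed form. From the generating identity $\sum_{k=0}^{m}s(m,k)x^{k}=x(x-1)\cdots(x-m+1)$ one gets $\sum_{i=0}^{m-1}s(m,i+1)x^{i}=(x-1)(x-2)\cdots(x-m+1)=(-1)^{m-1}\Gamma(m-x)/\Gamma(1-x)$; setting $x=2/\alpha$ turns the coefficient of $a$ into $-(-1)^{m-1}\Gamma(m-2/\alpha)/\Gamma(1-2/\alpha)$. Substituting back and using $(-1)^{m-1}(-1)^{m-1}/\Gamma(m)=1/\Gamma(m)$ yields
\begin{align}
1-{\rm F}_{\gamma}(\beta)=1-\frac{\Gamma(m-\tfrac{2}{\alpha})}{\Gamma(m)\,\Gamma(1-\tfrac{2}{\alpha})}\,a+O(\beta),
\end{align}
and recalling $a=\lambda\,\eta(n)(\beta\Omega/\theta)^{2/\alpha}$ reproduces (\ref{eq:cdf_general_lowbeta2}) exactly, with the neglected pieces (the $\ell\ge1$ terms, the $e^{-b}$ and ${\rm E}_Y$ corrections, and the $O(a^{2})$ tail of the exponential) absorbed into the stated $O(\beta)$ remainder.

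I expect the main obstacle to be the order bookkeeping: one must verify that no combination of the nested summation indices produces a contribution intermediate between $\beta^{2/\alpha}$ and $\beta$, so that truncating at a single factor of $a$ is legitimate; the Stirling-number evaluation is the other delicate step. As an independent check I would reproduce the constant probabilistically: since ${\rm F}_{\gamma}(\beta)={\rm Pr}(W\le\beta(Y+I+1))$ with aggregate interference $I=\sum_{\ell}|X_{\ell}|^{-\alpha}\Psi_{\ell 0}$ obeying the heavy-tail law ${\rm Pr}(I>t)\sim p\lambda\pi\,{\rm E}[\Psi^{2/\alpha}]\,t^{-2/\alpha}$ for the planar PPP, averaging $p\lambda\pi\,{\rm E}[\Psi^{2/\alpha}](W/\beta)^{-2/\alpha}$ over $W\dis{\rm Gamma}(m,\theta)$ (using ${\rm E}[W^{-2/\alpha}]=\theta^{-2/\alpha}\Gamma(m-\tfrac{2}{\alpha})/\Gamma(m)$ and ${\rm E}[\Psi^{2/\alpha}]=\Omega^{2/\alpha}\Gamma(n+\tfrac{2}{\alpha})/\Gamma(n)$) gives precisely $\lambda\frac{\Gamma(m-2/\alpha)}{\Gamma(m)}\frac{\eta(n)}{\Gamma(1-2/\alpha)}(\beta\Omega/\theta)^{2/\alpha}$, confirming the coefficient.
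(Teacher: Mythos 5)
Your route is the right one and, for the leading term, it is essentially the paper's own: the corollary is stated without a separate proof and is obtained exactly as you do, by expanding (\ref{eq:cdf_general}) around $\beta=0$. Your bookkeeping for the $\ell=0$ block, the evaluation $P_{0}(0)=s(m,1)=(-1)^{m-1}(m-1)!$, and the resummation $\sum_{i=0}^{m-1}s(m,i+1)x^{i}=(-1)^{m-1}\Gamma(m-x)/\Gamma(1-x)$ (which is the identity \cite[Eq.\ (24.1.3)]{abramowitz70} used in Appendix \ref{app:general_cdf_proof}, read in reverse) are all correct, and your probabilistic cross-check of the constant via the heavy tail of the aggregate interference is sound. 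Incidentally, the Stirling resummation can be bypassed altogether by truncating the intermediate series (\ref{eq:cdf_conditionY2}) at $k=1$, before Dobi\'{n}ski's formula is ever applied; the coefficient $\Gamma(m-\frac{2}{\alpha})/\Gamma(1-\frac{2}{\alpha})$ then appears immediately.

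The gap is in the remainder, and it sits exactly at the point you yourself flagged as ``the main obstacle.'' You assert that the $O(a^{2})$ tail of $e^{-a}P_{0}(a)$, with $a=\lambda\eta(n)(\beta\Omega/\theta)^{2/\alpha}$, can be absorbed into the stated $O(\beta)$ remainder. But $a^{2}=\Theta\bigl(\beta^{4/\alpha}\bigr)$, and the model only assumes $\alpha>2$: when $\alpha>4$ (a regime the paper itself uses, e.g.\ $\alpha=4.23$ in Fig.\ \ref{fig:mrc_tc_outage} and $\alpha$ up to $6$ in Figs.\ \ref{fig:tc_compare_changeN}--\ref{fig:tc_compare_changeM}) one has $4/\alpha<1$, so $\beta^{4/\alpha}/\beta\to\infty$ as $\beta\to 0^{+}$ and $O(a^{2})$ is \emph{not} $O(\beta)$ under the footnote's definition. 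The quadratic coefficient does not vanish either: for $m=1$, $Y=0$ the exact c.d.f.\ is $1-e^{-a-\beta/\theta}$ by (\ref{eq:cdf_general_simple}), whose expansion contains $-a^{2}/2$. In other words, the $j\ge 2$ terms of the nested sums together with the quadratic term of $e^{-a}$ are precisely a ``contribution intermediate between $\beta^{2/\alpha}$ and $\beta$'' when $\alpha>4$ --- the thing you claimed does not occur. The correct conclusion is that the remainder is $O\bigl(\beta^{\min(4/\alpha,1)}\bigr)$, equivalently $o\bigl(\beta^{2/\alpha}\bigr)$, which still identifies (\ref{eq:cdf_general_lowbeta2}) as the leading term and supports every design insight drawn from it, but is weaker than the $O(\beta)$ claimed. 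This imprecision is inherited from the paper's own statement; your write-up should either restrict to $2<\alpha\le 4$ or restate the error term accordingly.
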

To give an indication of when the expansion
(\ref{eq:cdf_general_lowbeta2}) is accurate, Table
\ref{table:lowbeta_verify} tabulates the maximum SINR operating
values such that (\ref{eq:cdf_general_lowbeta2}) is within $85 \%$ of
the true non-asymptotic value. From the table, we see that quite low SINR operating
values are required, and these may or may not be practical.
Nevertheless, as we discuss shortly, the insights we will obtain
from (\ref{eq:cdf_general_lowbeta2}) also turn out to be valid for
much higher SINR operating values than those tabulated in Table
\ref{table:lowbeta_verify}.

\begin{table*}[t!]
\caption{Maximum SINR $\beta$ (dB) operating value such that $\frac{\biggr|{\rm F}_{\gamma,\beta \to 0^{+}}(\beta)-{\rm F}_{\gamma}(\beta)\biggr|}{1-{\rm F}_{\gamma}(\beta)}>0.15$, for various $1/\lambda$ and transmit SNRs $\rho$. Spatial multiplexing with MRC receivers is considered with $M=4$, $N=4$, $r_{\rm tr}=5$ m and $\alpha=3$.}\label{table:lowbeta_verify}
\centering
\begin{tabular}{|c|c|c|c|c|c|}
  \hline
 $\rho$ dB  $\setminus \frac{1}{\lambda} {\rm m}^2$ & $100$  & $50$  & $20$  &  $12.5$    &  $10$   \\ \hline
$10$ & -22.055 & -22.660  &  -24.559 &  -27.033 &  -30.362 \\
$15$ & -18.164 &  -19.462 &  -22.518 &  -25.850 &  -29.788 \\
$20$ & -15.476 &  -17.467 &  -21.487 &  -25.346 &  -29.586\\
$25$   & -14.0351 &  -16.501 &  -21.068 &  -25.171 &  -29.547 \\
$30$   & -13.435 &  -16.128 &  -20.921 &  -25.100 &  -29.508 \\
  \hline
\end{tabular}
\end{table*}

From (\ref{eq:through_def}) and
(\ref{eq:cdf_general_lowbeta}), and recalling that $\Omega / \theta =
r_{\rm tr}^{\alpha}$ for each scheme (c.f.\ Table
\ref{table:gamma_param}), it follows that for small $\beta$ the
throughput becomes
\begin{align}\label{eq:through_lowbeta_general}
{\rm T} 
  \propto \zeta p \lambda\left(1- \frac{
\Gamma\left(m-\frac{2}{\alpha}\right)}{ \Gamma(m) }  \frac{ \eta (n)
}{\Gamma\left(1-\frac{2}{\alpha}\right) }  \lambda  r_{\rm tr}^2
\beta^{\frac{2}{\alpha}}\right) \; .
\end{align}

From the preceding equations, we can obtain the following useful
insights:
\begin{itemize}

\item If $\beta$ is ``very'' small (i.e., such that the $\beta$-dependent term in
(\ref{eq:through_lowbeta_general}) is negligible), then the
throughput approaches $\zeta \lambda p$, which trivially gives the
following design criteria:
\begin{itemize}
\item For spatial multiplexing with either MRC or ZF receivers, it is optimal
to use the \emph{maximum number of transmit streams}.
\item For OSTBC, it is optimal to use maximum-rate codes.
This follows by noting that the maximum code rate is a decreasing function of the number of
antennas used for transmission when $M \ge 2$ \cite{liang03}, and thus increasing the number of antennas for
transmission leads to a lower throughput. The codes with the maximum-rate corresponds to either SIMO transmission with $M=1$ or the Alamouti code with $M=2$. From (\ref{eq:cdf_general_lowbeta}), the Alamouti code performs better than SIMO if
\begin{align}
\frac{\Gamma\left(N-\frac{2}{\alpha}\right)\Gamma(2N)}{\Gamma(N)\Gamma\left(2N-\frac{2}{\alpha}\right)} > \left(1+\frac{2}{\alpha}\right) \; .
\end{align}
We observe that this occurs for low path loss exponents, i.e., as $\alpha \to 2$.
\end{itemize}
Intuitively, the outage probability is very close to zero, and
therefore it makes sense to send as much information as possible
during each channel use. For this scenario, spatial multiplexing thus performs better than OSTBC.
\item  More generally, for all $\beta$ values for which (\ref{eq:through_lowbeta_general}) is
accurate (i.e., the $\beta$-dependent term in
(\ref{eq:through_lowbeta_general}) is not necessarily negligible),
we can derive the following conditions:
\begin{itemize}
\item  For spatial multiplexing with MRC and ZF receivers, the optimal number of data streams is given by
\begin{align} \label{eq:MoptDefn}
M^{\rm opt} = \min\left(\max\left(\lfloor x\rfloor,1\right),N\right)
\,
\end{align}
where, for MRC, $x$ is the solution to
 \begin{align}\label{eq:MRC_cond_smallbeta}
 \frac{ 1}{ \lambda \pi p r_{\rm tr}^4 \beta
^{\frac{2}{\alpha}} } &=\frac{\Gamma\left(N-\frac{2}{\alpha}\right)} { \Gamma(N) }{\frac{\Gamma\left(x -1+ \frac{2}{\alpha}
\right)
 }{\Gamma(x-1)}}\left(1+\frac{2x }{\alpha(x-1)}\right) \; ,
 \end{align}
whilst, for ZF, it is the solution to
 \begin{align}\label{eq:ZF_cond_smallbeta}
&\frac{1}{\lambda \pi p r_{\rm tr}^4 \beta ^{\frac{2}{\alpha}} }
 =
\frac{\Gamma\left(N-x+1-\frac{2}{\alpha}\right)}{ \Gamma(N-x+1) }
{\frac{\Gamma\left(x -1+ \frac{2}{\alpha}\right)
}{\Gamma(x-1)}} \left(1 +\frac{2x}{\alpha(x-1)}
+\frac{2(x-1)}{\alpha(N-x+1)}\right) \; .
\end{align}

Note that the right hand sides of both
(\ref{eq:MRC_cond_smallbeta}) and (\ref{eq:ZF_cond_smallbeta}) are
increasing in $x$, hence $M_{\rm opt}$ is unique.
\item As $N \to \infty$,
\begin{align} \label{eq:MoptAsym}
\frac{M_{\rm opt}}{N} \to \varrho^{\rm opt}
\end{align}
where, for MRC,
\begin{align}\label{eq:MRC_cond_largeM_smallbeta}
& \varrho^{\rm opt}=\frac{1}{\beta} \left(\frac{\alpha }{ \lambda \pi p
r_{\rm tr}^4(\alpha+2)  } \right)^{\frac{\alpha}{2}}    \; ,
 \end{align}
whist, for ZF, $\varrho^{\rm opt}$ is the solution to
\begin{align}\label{eq:ZF_cond_largeM_smallbeta}
\varrho^{\rm opt} &=\frac{ \frac{1}{\beta}\left(\frac{\alpha}{\lambda \pi p
r_{\rm tr}^4  (\alpha+2) } \right)^{\frac{\alpha}{2}} \left(1-\varrho^{\rm
opt}\right)}{\left(1 +\left(\frac{\varrho^{\rm opt}}{1-\varrho^{\rm
opt}}\right)\left(\frac{2}{\alpha+2}
\right)\right)^{\frac{\alpha}{2}}} \; .
\end{align}
\end{itemize}
These equations confirm the intuition that, for both receivers, as
the node density is increased, less transmit antennas should be
used.  This is because adding more nodes to the network increases
the aggregate interference, and thus, it is better for each node to
transmit will less data streams in order to ``balance'' the overall
network interference. This behavior was also observed for SINR
operating values as high as $3$ dB in Figs.\
\ref{fig:mrc_throughput_changingM} and
\ref{fig:zf_throughput_changingM}.
\item The results above also reveal that less transmit antennas should
be used if the SINR operating value $\beta$ increases, a phenomenon
which is confirmed in Fig.\ \ref{fig:compare_throughput_changeSINR}.
Moreover, the optimal number of transmit antennas is \emph{at
least} as many for MRC as for ZF. For the example shown in Fig.\
\ref{fig:compare_throughput_changeSINR}, for MRC, all transmit
antennas should be used when $\beta$ is below $-3$ dB; whereas for
ZF, $\beta$ must be below $-10$ dB.
\item By noting that $R(M)<1$, it is clear that spatial multiplexing performs better than OSTBC.
Moreover, for spatial multiplexing, since $\frac{\Gamma\left(m-\frac{2}{\alpha}\right)}{ \Gamma(m) }$ is
decreasing in $m$, for $M>2$, MRC achieves a higher throughput than ZF.
\end{itemize}
In general, these results indicate  that spatial multiplexing with MRC, with all transmit antennas active, is the most favorable scheme for low SINR operating values.

\section{Transmission Capacity}

In this section, we turn to the analysis of transmission capacity.
In general, we find that for both spatial multiplexing and OSTBC, an
exact analysis of the transmission capacity is intractable due to
the complexity involved with inverting the exact outage probability
expressions. One exception is the case of spatial multiplexing with
ZF receivers with $M=N$, for which an exact expression for the
transmission capacity is obtained from (\ref{eq:cdf_general_simple})
and (\ref{eq:tc_def}) as
\begin{align}\label{eq:tc_zf_exact}
{\rm c}_{\rm ZF}(\epsilon) 
 = \frac{N(1-\epsilon)}{(\beta R^\alpha)^{\frac{2}{\alpha}} \eta(N)}\left(\log\left(\frac{1}{1-\epsilon}\right) - \frac{\beta r_{\rm tr}^\alpha N}{\rho}\right) \; .
\end{align}
For all other scenarios, we focus on studying the transmission
capacity for small outage levels, which is representative of
practical systems. Under these conditions, based on
(\ref{eq:SINR_general}), we present the following key theorem which,
after substituting the parameters in Table \ref{table:gamma_param},
yields closed-form expressions for the transmission capacity of the
spatial multiplexing and OSTBC schemes.

\begin{theorem}\label{lemm:tc_general}
If the SINR takes the general form in (\ref{eq:SINR_general}), then the
transmission capacity as $\epsilon - {\rm F}^{\rm SU} (\beta) \to 0$ can be written as
\begin{align}\label{eq:tc_general}
{\rm c}(\epsilon) &=\frac{\zeta \theta^{\frac{2}{\alpha}} \Gamma(n)
\Gamma(m)}  {\pi   \beta^{\frac{2}{\alpha}}
\Omega^{\frac{2}{\alpha}}\Gamma\left(n+\frac{2}{\alpha}\right)
\Gamma\left(m-\frac{2}{\alpha}\right) } \frac{\left(\epsilon- {\rm F}^{\rm SU}
(\beta) \right)^{+} }{{\rm E}_Y\left[e^{-\frac{\beta(Y+1)}{\theta}}
{}_1
F_1\left(1-m,1+\frac{2}{\alpha}-m,\frac{\beta(Y+1)}{\theta}\right)\right]}
+ O\left(\left(\left(\epsilon - {\rm F}^{\rm SU} (\beta)\right)^2\right)^+\right)
\end{align}
where ${}_1 F_1(\cdot,\cdot,\cdot)$ is the Kummer confluent
hypergeometric function, and the notation $(\cdot)^+$ implies
$(a)^{+} = \max (a, 0)$.  Also, the expectation in
(\ref{eq:tc_general}) can be written as
\begin{align}\label{eq:hypergeometric_TC}
{\rm E}_Y [ \cdots ] &=
\frac{e^{-\frac{\beta}{\theta}}}{\Gamma\left(m-\frac{2}{\alpha}\right)
} \sum_{\ell=0}^{m-1} \binom{m-1}{\ell} \left(\frac{\beta
}{\theta}\right)^{\ell} \Gamma\left(m-\ell-\frac{2}{\alpha}\right) \sum_{\tau=0}^{\ell} \binom{\ell}{\tau} {\rm
E}_Y\left[e^{-\frac{\beta Y}{\theta}} Y^\tau\right] ,
 \end{align}
and ${\rm F}^{\rm SU} (\beta)$ represents the outage probability with \emph{no
multi-node interference}, computed as
\begin{align}
{\rm F}^{\rm SU} (\beta) = {\rm Pr} \left( \frac{ W }{Y + 1} \leq
\beta\right) = 1 - e^{-\frac{\beta}{\theta}} \sum_{k=0}^{m-1} \frac{
\left(\frac{\beta}{\theta}\right)^k}{k!}  \sum_{\tau=0}^k
\binom{k}{\tau} {\rm E}_Y\left[ e^{-\frac{\beta Y}{\theta}} Y^\tau
\right] \; .
\end{align}
The remaining expectations are given in closed-form in
(\ref{eq:exp_poly_selfint_term}).
\end{theorem}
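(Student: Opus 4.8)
The plan is to obtain the transmission capacity by inverting the outage probability of Theorem~\ref{lemm_general_cdf} in the low-density regime and substituting into the definition ${\rm c}(\epsilon) = \zeta\lambda(\epsilon)(1-\epsilon)$. The first observation is that the limit $\epsilon - {\rm F}^{\rm SU}(\beta)\to 0$ corresponds precisely to $\lambda p \to 0$. Setting $\lambda=0$ in (\ref{eq:cdf_general}) kills the multi-node interference: the exponential $e^{-\lambda(\beta\Omega/\theta)^{2/\alpha}\eta(n)}$ becomes unity and only the $i=j=0$ term of the inner sums survives (since $(-\lambda K)^j$ vanishes for $j\ge 1$ and $S(i,0)=\delta_{i0}$), leaving $s(m-\ell,1)$. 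Using $s(m-\ell,1)=(-1)^{m-\ell-1}(m-\ell-1)!$ together with $\binom{m-1}{\ell}(m-\ell-1)!=\Gamma(m)/\ell!$, the residual expression collapses to exactly ${\rm F}^{\rm SU}(\beta)$, the interference-free outage quoted in the statement. This confirms that the contention density $\lambda(\epsilon)$ vanishes as $\epsilon\downarrow{\rm F}^{\rm SU}(\beta)$ and motivates expanding ${\rm F}_\gamma(\beta;\lambda p)$ to first order about $\lambda=0$.

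Next I would compute the slope $\kappa := \partial_{\lambda}{\rm F}_\gamma(\beta;\lambda p)\big|_{\lambda=0}$ directly from (\ref{eq:cdf_general}). Writing the $\lambda$-dependent part as $e^{-\lambda K}G_\ell(\lambda)$, with $K=(\beta\Omega/\theta)^{2/\alpha}\eta(n)$ the coefficient of $\lambda$ and $G_\ell$ the inner $(i,j)$ sums, the derivative at the origin receives $-K\,G_\ell(0)$ from the exponential and a contribution from the single $j=1$ term of $G_\ell$; these combine into $-K\sum_{i=0}^{m-\ell-1}s(m-\ell,i+1)(2/\alpha)^i$. The key simplification is that this is a truncation of the generating function $\sum_k s(N,k)x^k = x(x-1)\cdots(x-N+1)$ of the signed Stirling numbers of the first kind, so it collapses to the falling factorial $\prod_{j=1}^{m-\ell-1}(2/\alpha-j)=\Gamma(2/\alpha)/\Gamma(2/\alpha-m+\ell+1)$. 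Re-assembling the $\ell$-sum, writing $\sum_\tau\binom{\ell}{\tau}{\rm E}_Y[e^{-\beta Y/\theta}Y^\tau]={\rm E}_Y[e^{-\beta Y/\theta}(1+Y)^\ell]$, and absorbing $e^{-\beta/\theta}$ and $(\beta/\theta)^\ell$ into $a:=\beta(1+Y)/\theta$, I obtain $\kappa\propto{\rm E}_Y\!\big[e^{-a}\sum_\ell\binom{m-1}{\ell}(-a)^\ell/\Gamma(1+2/\alpha-m+\ell)\big]$. Recognizing this finite alternating sum as the series of ${}_1F_1(1-m,1+2/\alpha-m,a)$ at a non-positive integer first parameter (via $(1-m)_s/s!=(-1)^s\binom{m-1}{s}$), and finally applying the reflection formula $\Gamma(z)\Gamma(1-z)=\pi/\sin(\pi z)$ twice to reduce $(-1)^{m-1}\Gamma(2/\alpha)/\Gamma(1+2/\alpha-m)$ to $\Gamma(m-2/\alpha)/\Gamma(1-2/\alpha)$, the $\Gamma(1-2/\alpha)$ inside $\eta(n)$ cancels and $\kappa$ attains the clean closed form whose reciprocal is the prefactor of (\ref{eq:tc_general}).

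With $\kappa$ in hand the inversion is a one-line Taylor argument: since $\kappa>0$ (outage is strictly increasing in the interferer density), ${\rm F}_\gamma$ is locally invertible and $\lambda(\epsilon)=(\epsilon-{\rm F}^{\rm SU}(\beta))^{+}/\kappa + O(((\epsilon-{\rm F}^{\rm SU})^2)^+)$, the $p$-bookkeeping being automatic because $\lambda$ enters (\ref{eq:cdf_general}) only through $\lambda\eta(n)\propto\lambda p$. Substituting into ${\rm c}(\epsilon)=\zeta\lambda(\epsilon)(1-\epsilon)$ and retaining the leading order in $\epsilon-{\rm F}^{\rm SU}(\beta)$ yields (\ref{eq:tc_general}). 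The alternate representation (\ref{eq:hypergeometric_TC}) of the confluent-hypergeometric expectation then follows by re-expanding ${}_1F_1$ through its defining series and matching against the closed form (\ref{eq:exp_poly_selfint_term}) of the self-interference moments, while ${\rm F}^{\rm SU}(\beta)$ is read off from the $\lambda=0$ evaluation already performed.

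The main obstacle is the chain of special-function manipulations in the second paragraph: correctly combining the two sources of $\lambda$-dependence in (\ref{eq:cdf_general}), collapsing the Stirling-number sum through its generating function into a falling factorial, recognizing the resulting terminating sum as a Kummer ${}_1F_1$, and then discharging the Gamma prefactors with the reflection formula so that everything aligns with the stated constants. A useful consistency check is the exactly solvable case $m=1$ (spatial multiplexing with ZF and $M=N$), where ${}_1F_1(0,\cdot,\cdot)=1$ and the computation must reproduce the linearization of the exact expression (\ref{eq:tc_zf_exact}) about $\epsilon={\rm F}^{\rm SU}(\beta)$.
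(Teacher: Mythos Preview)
Your proposal is correct and follows the same approach as the paper, which merely states ``take a Taylor expansion of (\ref{eq:cdf_general}) around $\lambda=0$ and invert w.r.t.\ $\lambda p$''; your Stirling-number collapse via the generating function, the ${}_1F_1$ identification, and the reflection-formula cleanup are precisely the suppressed details. A marginally shorter route is to expand the intermediate series (\ref{eq:cdf_conditionY2}) instead, whose $k=1$ term already carries the factor $\Gamma\!\left(m-\ell-\tfrac{2}{\alpha}\right)$ and delivers (\ref{eq:hypergeometric_TC}) directly without any Stirling or reflection gymnastics (this is also how the integral form quoted in Appendix~\ref{app:tc_scaling} arises), but both paths are equivalent.
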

\begin{proof}
Follows by taking a Taylor expansion of (\ref{eq:cdf_general})
around $\lambda=0$, and then finding the inverse of the resulting
expression w.r.t.\ $\lambda p$.
\end{proof}

Note that the factor ${\rm F}^{\rm SU} (\beta)$ represents the outage
probability of a single user MIMO system, for which outages are
caused by self-interference and AWGN. Due to the
additional multi-node interference in ad hoc networks, any specified
outage constraint $\epsilon$ which falls below ${\rm F}^{\rm SU} (\beta)$ can
never be met, and therefore the transmission capacity in such cases
is zero. This phenomenon is illustrated in Fig.\
\ref{fig:mrc_tc_outage}, where the outage probability is plotted
versus intensity for spatial multiplexing with MRC receivers, for
different antenna configurations. For the results shown, an outage
probability of less than $0.2$ can never be achieved when $M \ge 2$,
due to the effects of AWGN and self-interference, which ensures that
$\epsilon>0$ when $\lambda=0$.  Note however, that for OSTBC and
spatial multiplexing with ZF receivers, there is no
self-interference, and ${\rm F}^{\rm SU} (\beta)$ in this case accounts for
outages due to AWGN only.

\begin{figure}[tb!]
\centerline{\includegraphics[width=0.7\columnwidth]{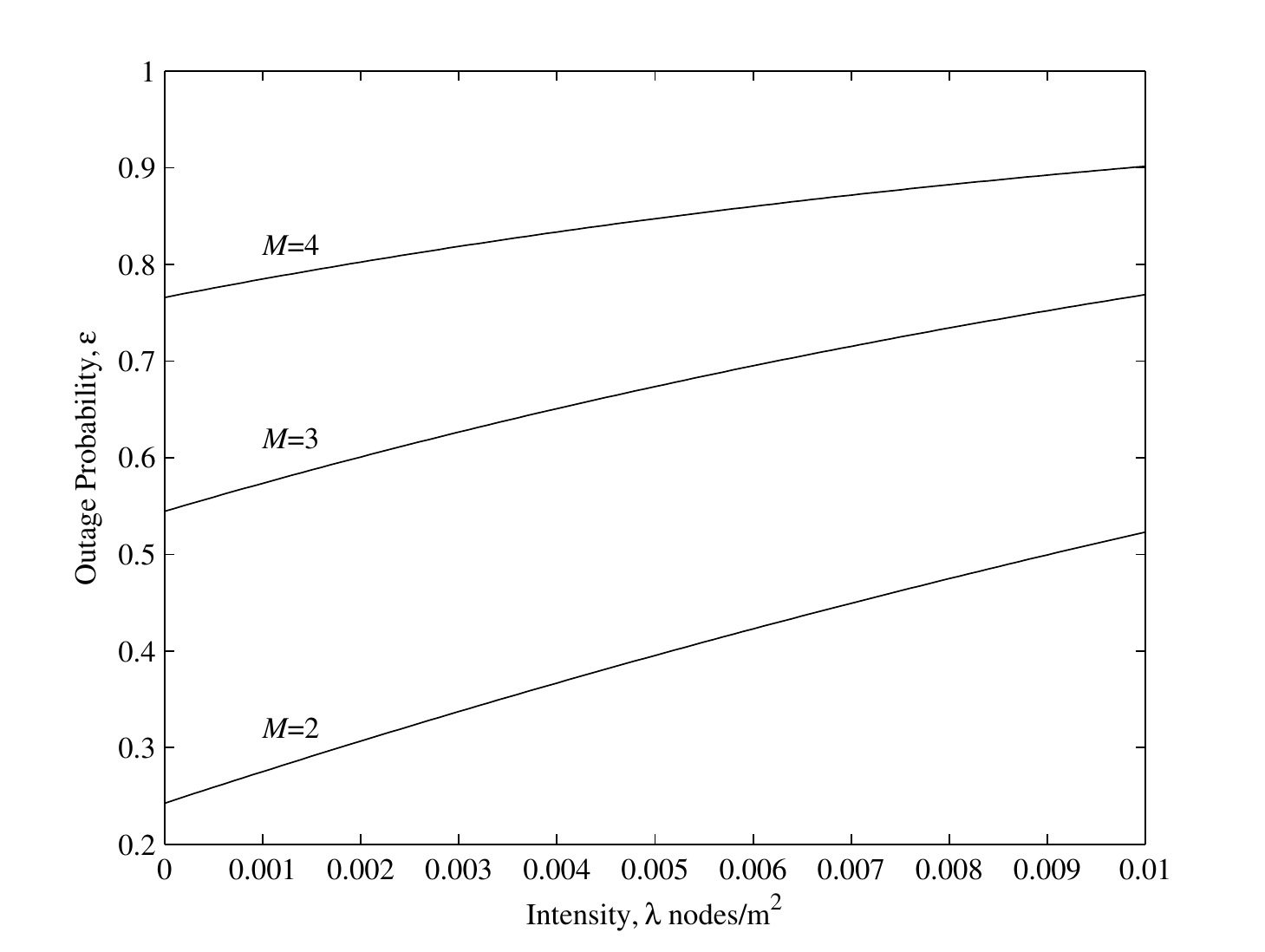}}
\caption{Outage probability vs.\ intensity of slotted ALOHA with spatial multiplexing using MRC receivers, and with $N=4$, ${r_{\rm tr}}=3$ m, $\alpha=4.23$, $\beta=3$ dB, $\rho=30$ dB and $p=1$.}
\label{fig:mrc_tc_outage}
\end{figure}

To compute the transmission capacity achieved by spatial
multiplexing and OSTBC, we substitute the relevant parameters from
Table \ref{table:gamma_param} into (\ref{eq:tc_general}). The
accuracy of our transmission capacity expression is confirmed in
Fig.\ \ref{fig:zf_tc_confirm_outage}, which plots the transmission
capacity vs.\ outage probability $\epsilon$ for spatial multiplexing
with ZF receivers. We see that for outage probabilities as high as
$\epsilon=0.1$, our expression is accurate. Although not shown,
similar accuracy has been observed for the other schemes also.

\begin{figure}[tb!]
\centerline{\includegraphics[width=0.7\columnwidth]{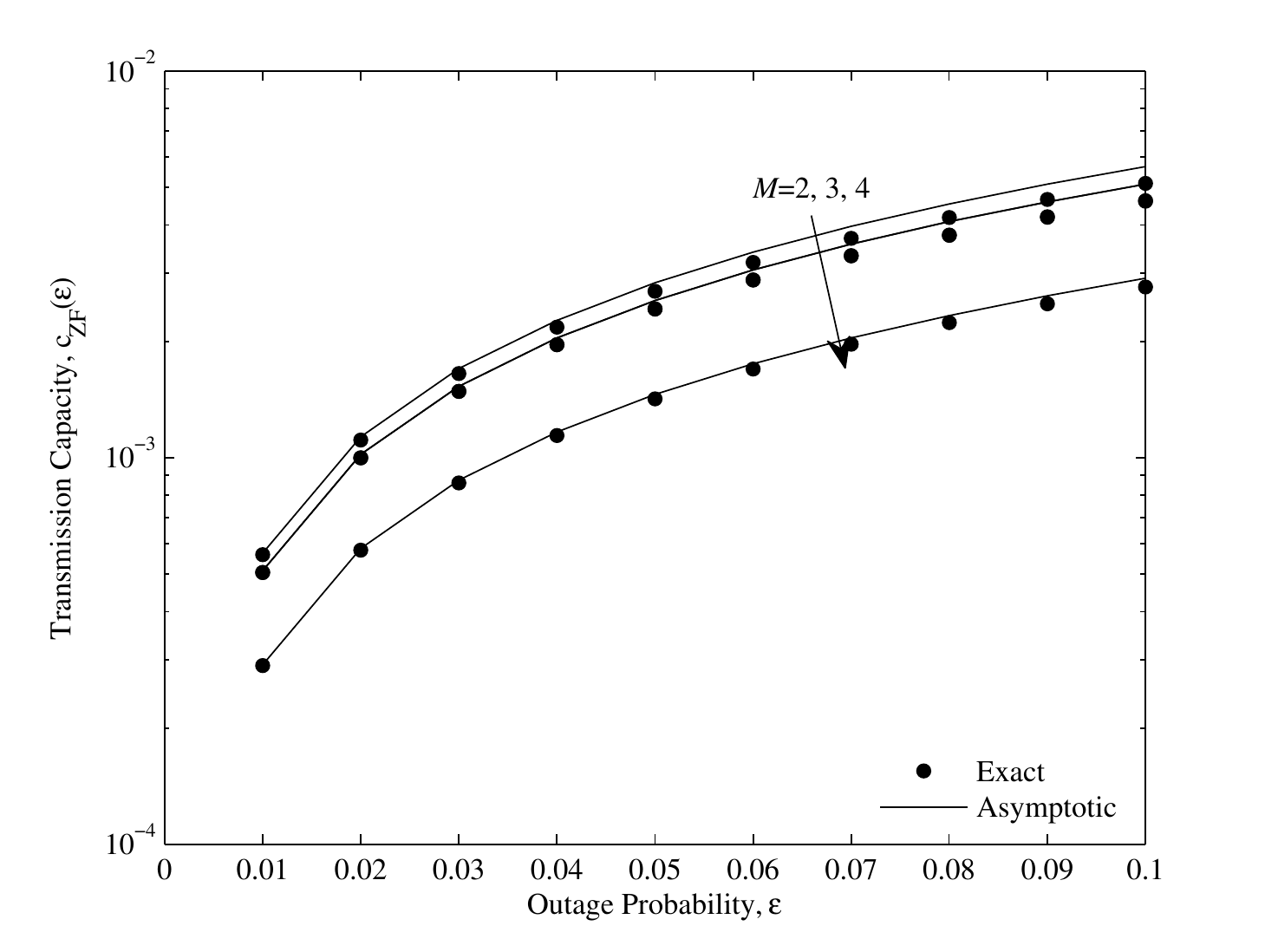}}
\caption{Transmission capacity vs.\ outage probability of slotted ALOHA using  ZF receivers, and with ${r_{\rm tr}}=3$ m, $\beta=3$ dB, $\alpha=4$, $N=4$ and $p=1$.}
\label{fig:zf_tc_confirm_outage}
\end{figure}

For large numbers of antennas, we present the following corollary for the transmission capacity.
\begin{corollary}\label{corr:tc_scaling}
As $M,N \to \infty$, the transmission capacity as $\epsilon - {\rm F}^{\rm SU} (\beta) \to 0$ satisfies
\begin{align}\label{eq:tc_general_scale}
\frac{{\rm c}(\epsilon)}{\zeta} &= \left(\frac{\left(m \theta-\beta - \beta u \Upsilon\right)^{+}}{n \Omega}\right)^{\frac{2}{\alpha}} \frac{\left(
\epsilon - {\rm F}^{\rm SU} (\beta) \right)^{+}}{ {\pi r_{\rm tr}^2  \beta^{\frac{2}{\alpha}}
} } + O\left(\left(\left(
\epsilon - {\rm F}^{\rm SU} (\beta)\right)^2 \right)^{+}\right)\; .
\end{align}
\end{corollary}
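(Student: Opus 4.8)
The plan is to start from the leading-order term of the transmission capacity in Theorem~\ref{lemm:tc_general} and track its behaviour in the regime where the generic shape parameters $m,n,u\to\infty$ (which is how these grow when $M,N\to\infty$; c.f.\ Table~\ref{table:gamma_param}). I would split the prefactor into three pieces whose asymptotics can be handled separately: the ratio $\Gamma(n)/\Gamma(n+\frac{2}{\alpha})$, the ratio $\Gamma(m)/\Gamma(m-\frac{2}{\alpha})$, and the reciprocal of the self-interference expectation ${\rm E}_Y[\cdots]$. For the two Gamma ratios I would invoke the standard ratio asymptotic $\Gamma(x+a)/\Gamma(x)\to x^{a}$, giving $\Gamma(n)/\Gamma(n+\frac{2}{\alpha})\to n^{-2/\alpha}$ and $\Gamma(m)/\Gamma(m-\frac{2}{\alpha})\to m^{2/\alpha}$. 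The first of these immediately combines with the $\Omega^{-2/\alpha}$ already present in the denominator to produce the $(n\Omega)^{-2/\alpha}$ appearing in the claim.

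The crux is the expectation term. Here I would first apply Kummer's transformation ${}_1F_1(a,b,z)=e^{z}\,{}_1F_1(b-a,b,-z)$ with $a=1-m$ and $b=1+\frac{2}{\alpha}-m$; since $b-a=\frac{2}{\alpha}$, this cancels the outer factor $e^{-\beta(Y+1)/\theta}$ exactly and leaves ${}_1F_1\!\left(\frac{2}{\alpha},\,1+\frac{2}{\alpha}-m,\,-\frac{\beta(Y+1)}{\theta}\right)$. Writing $w=\beta(Y+1)/\theta$ and expanding the series, each Pochhammer factor satisfies $(1+\frac{2}{\alpha}-m)_k=(-1)^{k}m^{k}(1+O(1/m))$, so term-by-term the series tends to the binomial series $\sum_{k}\frac{(2/\alpha)_k}{k!}(w/m)^{k}=(1-w/m)^{-2/\alpha}$, valid when $w<m$, equivalently $m\theta-\beta(Y+1)>0$ --- this inequality is the origin of the $(\cdot)^{+}$ truncation in the statement. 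Thus the integrand behaves like $\big(m\theta/(m\theta-\beta(Y+1))\big)^{2/\alpha}$.

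To evaluate the expectation I would then use that $Y\dis {\rm Gamma}(u,\Upsilon)$ concentrates at its mean $u\Upsilon$ as $u\to\infty$ with $\Upsilon\to0$ (its variance $u\Upsilon^{2}\to0$), so that $\beta(Y+1)\to\beta+\beta u\Upsilon$ and ${\rm E}_Y[\cdots]\to\big(m\theta/(m\theta-\beta-\beta u\Upsilon)\big)^{2/\alpha}$; for the ZF and OSTBC cases $Y=0$ and this limit is immediate. Assembling the three pieces, the entire $m$-dependence collapses to $(m\theta-\beta-\beta u\Upsilon)^{2/\alpha}$, and after using $\Omega/\theta=r_{\rm tr}^{\alpha}$ to reconcile the scale factors one arrives at the stated expression~(\ref{eq:tc_general_scale}); the remainder $O\big(\big((\epsilon-{\rm F}^{\rm SU}(\beta))^{2}\big)^{+}\big)$ is inherited unchanged from Theorem~\ref{lemm:tc_general}.

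I expect the main obstacle to be making the hypergeometric limit rigorous: because the argument $w=\beta(Y+1)/\theta$ grows at the same rate as the large parameter $m$ (both scale like $M$), this is a confluent/coalescing-parameter regime rather than the elementary fixed-argument case, so the term-by-term passage to the binomial series must be justified by a dominated-convergence or uniform-tail estimate. Moreover this control must be uniform enough in $Y$ to legitimately interchange the $M,N\to\infty$ limit with the expectation over the ${\rm Gamma}(u,\Upsilon)$ law. The delicate region is where $Y$ is atypically large, so that $1-w/m$ approaches or crosses zero; handling this tail is precisely what yields the positive-part truncation in the final expression.
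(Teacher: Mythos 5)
Your proposal is correct in substance, but it takes a genuinely different route from the paper's. The paper's proof (Appendix \ref{app:tc_scaling}) never touches the hypergeometric form of Theorem \ref{lemm:tc_general}: it re-derives the transmission capacity in the integral form (\ref{eq:TC_alt_general})--(\ref{eq:gSNR}), in which the role of your ${\rm E}_Y[\cdots]$ term is played by $g=\int_{\beta(Y+1)}^{\infty}(w-\beta Y-\beta)^{-\frac{2}{\alpha}}f_W(w)\,{\rm d}w$, and then applies one concentration argument to all three Gamma variates: since each shape parameter diverges while its scale vanishes with converging product (Table \ref{table:gamma_param} with $M=\kappa N$, $0<\kappa<1$ --- a restriction the paper states explicitly, whereas your parenthetical ``$m,n,u\to\infty$'' glosses over the fact that ZF needs $N-M\to\infty$ and that $n$ does not diverge for all OSTBC codes), one has $W\to m\theta$, $Y\to u\Upsilon$, $\Psi\to n\Omega$, and constants are substituted into $g$ and ${\rm E}[\Psi^{\frac{2}{\alpha}}]$. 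Your treatment is the special-function counterpart of this: the Gamma-ratio asymptotics replace ${\rm E}[\Psi^{\frac{2}{\alpha}}]\to(n\Omega)^{\frac{2}{\alpha}}$, and your Kummer transformation plus the confluence limit ${}_1F_1\bigl(\tfrac{2}{\alpha},1+\tfrac{2}{\alpha}-m,-w\bigr)\to(1-w/m)^{-\frac{2}{\alpha}}$ is exactly the statement ``$W\to m\theta$ inside $g$'' in disguise, since ${\rm E}_Y[\cdots]=\Gamma(m)\theta^{\frac{2}{\alpha}}g/\Gamma\bigl(m-\tfrac{2}{\alpha}\bigr)$. What the paper's probabilistic formulation buys is precisely the avoidance of your declared main obstacle: the coalescing-parameter ${}_1F_1$ limit becomes a transparent law-of-large-numbers statement. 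Note, though, that the paper is no more rigorous than you are --- substituting an in-probability limit into the expectation of the unbounded integrand $(w-\beta(Y+1))^{-\frac{2}{\alpha}}$, which blows up at the boundary of the positivity region, requires the same uniform-integrability control you flag --- and your identification of $m\theta>\beta(1+u\Upsilon)$ as the source of the $(\cdot)^{+}$ matches the paper's mechanism.

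One bookkeeping point you should fix. Carefully assembling your three pieces from Theorem \ref{lemm:tc_general} gives
\begin{align*}
\frac{{\rm c}(\epsilon)}{\zeta}=\left(\frac{\left(m\theta-\beta-\beta u\Upsilon\right)^{+}}{n\Omega}\right)^{\frac{2}{\alpha}}\frac{\left(\epsilon-{\rm F}^{\rm SU}(\beta)\right)^{+}}{\pi\beta^{\frac{2}{\alpha}}}+O\left(\left(\left(\epsilon-{\rm F}^{\rm SU}(\beta)\right)^{2}\right)^{+}\right),
\end{align*}
i.e.\ \emph{without} the factor $r_{\rm tr}^{2}$; the identity $\Omega/\theta=r_{\rm tr}^{\alpha}$ can only trade $n\Omega$ for $n\theta$ while extracting $r_{\rm tr}^{2}$ from the bracket, so no amount of ``reconciling scale factors'' lands on (\ref{eq:tc_general_scale}) as printed, which carries both $n\Omega$ and $r_{\rm tr}^{2}$. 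This is a defect of the paper's statement, not of your argument: (\ref{eq:tc_general_scale}) is inconsistent with Theorem \ref{lemm:tc_general} by exactly a factor $r_{\rm tr}^{2}$ (the paper's intermediate form (\ref{eq:TC_alt_general}) already carries the spurious factor relative to Theorem \ref{lemm:tc_general}), and the downstream Corollaries \ref{th:MRC_TransCap} and \ref{th:ZF_TransCap}, which the paper obtains by substituting Table \ref{table:gamma_param} into (\ref{eq:tc_general_scale}), in fact agree with the formula displayed above and not with the printed corollary. So you should state your assembled result explicitly rather than asserting that it coincides with the printed expression: a careful assembly exposes this discrepancy rather than resolving it.
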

\begin{proof}
See Appendix \ref{app:tc_scaling}.
\end{proof}
(\ref{eq:tc_general_scale}) implies that if $\epsilon > {\rm F}^{\rm SU} (\beta)$ and $\frac{\left(m \theta-\beta - \beta Y\right)^{+}}{n \Omega}$ converges to a constant for large $M,N$, then the
transmission capacity will scale \emph{linearly} with $\zeta$. In the
following, we will use these results to investigate how the
transmission capacity scales with the number of antennas, for the
spatial multiplexing and OSTBC systems which we consider.  Under
some conditions, we will show that linear scaling is indeed possible.

\subsection{Transmission Capacity of Spatial Multiplexing}

Here we use \emph{Corollary \ref{corr:tc_scaling}} to gain insights
into the scaling behavior of the transmission capacity for spatial
multiplexing with MRC and ZF receivers, as given by the following
corollaries.

\begin{corollary} \label{th:MRC_TransCap}
As $N \to \infty$ with $M=\kappa N$ where $0<\kappa \le 1$, the
transmission capacity of spatial
multiplexing with MRC receivers as $\epsilon-H \left(\beta - \bar{\beta}_{\rm MRC}\right)\to 0$ behaves as
\begin{align}\label{eq:tc_mrc_general_scale}
& \frac{{\rm c}_{\rm MRC}(\epsilon)}{N} \to \kappa \left(\frac{r_{\rm tr}^\alpha}{\rho} + 1 \right)^{\frac{2}{\alpha}} \left( \left(\bar{\beta}_{\rm MRC}-\beta\right)^{+}
\right)^{\frac{2}{\alpha}} \frac{\left(\epsilon-H \left(\beta - \bar{\beta}_{\rm MRC}\right)\right)^{+}}  {\pi r_{\rm tr}^2  \beta^{\frac{2}{\alpha}}}  + O\left(\left(\left(\epsilon-H \left(\beta - \bar{\beta}_{\rm MRC}\right)\right)^2\right)^{+} \right)
\end{align}
where
\begin{align}
\bar{\beta}_{\rm MRC} =  \frac{1}{\kappa\left(\frac{r_{\rm tr}^\alpha}{\rho} + 1\right)}
\end{align}
and $H(\cdot)$ is the Heaviside step function.
\end{corollary}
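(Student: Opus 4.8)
The plan is to obtain (\ref{eq:tc_mrc_general_scale}) as a direct specialization of the general scaling law in Corollary~\ref{corr:tc_scaling}, evaluated with the MRC entries of Table~\ref{table:gamma_param}, namely $m=N$, $\theta=\frac{\rho}{M r_{\rm tr}^\alpha}$, $n=M$, $\Omega=\frac{\rho}{M}$, $u=M-1$, $\Upsilon=\frac{\rho}{M r_{\rm tr}^\alpha}$ and $\zeta=M$. Since $M=\kappa N$ with $\kappa\in(0,1]$ fixed, the hypothesis $M,N\to\infty$ of Corollary~\ref{corr:tc_scaling} is satisfied, and because $\zeta=M=\kappa N$ we have $\frac{{\rm c}_{\rm MRC}(\epsilon)}{N}=\kappa\,\frac{{\rm c}_{\rm MRC}(\epsilon)}{\zeta}$, which already supplies the leading factor $\kappa$ in (\ref{eq:tc_mrc_general_scale}). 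It then remains to evaluate the two $N$-dependent ingredients of (\ref{eq:tc_general_scale}): the prefactor $\bigl(\frac{(m\theta-\beta-\beta u\Upsilon)^+}{n\Omega}\bigr)^{2/\alpha}$ and the single-user outage ${\rm F}^{\rm SU}(\beta)$.

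First I would simplify the prefactor. Using $n\Omega=\rho$, $m\theta=\frac{N\rho}{M r_{\rm tr}^\alpha}$ and $\beta u\Upsilon=\frac{\beta(M-1)\rho}{M r_{\rm tr}^\alpha}$, together with $\frac{N}{M}=\frac{1}{\kappa}$ and $\frac{M-1}{M}\to1$ as $N\to\infty$, a short computation gives
\begin{align}
\frac{m\theta-\beta-\beta u\Upsilon}{n\Omega}\;\longrightarrow\;\frac{1}{r_{\rm tr}^\alpha}\left(\frac{r_{\rm tr}^\alpha}{\rho}+1\right)\left(\bar{\beta}_{\rm MRC}-\beta\right),
\end{align}
where $\bar{\beta}_{\rm MRC}=\bigl(\kappa(\tfrac{r_{\rm tr}^\alpha}{\rho}+1)\bigr)^{-1}$ is precisely the constant that lets the limit factor as a multiple of $(\bar{\beta}_{\rm MRC}-\beta)$. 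Raising this to the power $2/\alpha$ and carrying out the routine $r_{\rm tr}$ and $\rho$ bookkeeping (recalling $\Omega/\theta=r_{\rm tr}^\alpha$) together with the $\pi^{-1}\beta^{-2/\alpha}$ factor of (\ref{eq:tc_general_scale}) reproduces the coefficient $\bigl(\frac{r_{\rm tr}^\alpha}{\rho}+1\bigr)^{2/\alpha}\bigl((\bar{\beta}_{\rm MRC}-\beta)^+\bigr)^{2/\alpha}/(\pi r_{\rm tr}^2\beta^{2/\alpha})$; I would not dwell on this algebra.

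The conceptually substantive step is identifying the limit of ${\rm F}^{\rm SU}(\beta)$. Here ${\rm F}^{\rm SU}(\beta)={\rm Pr}\bigl(\frac{W}{Y+1}\le\beta\bigr)$ with independent $W\dis{\rm Gamma}(N,\theta)$ and $Y\dis{\rm Gamma}(M-1,\Upsilon)$. Since a Gamma variate of shape $v$ has coefficient of variation $v^{-1/2}\to0$, both $W$ and $Y$ concentrate at their means: $W\to N\theta=\frac{\rho}{\kappa r_{\rm tr}^\alpha}$ and $Y\to(M-1)\Upsilon\to\frac{\rho}{r_{\rm tr}^\alpha}$ in probability. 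Hence $\frac{W}{Y+1}$ converges in probability to the deterministic constant $\frac{\rho/(\kappa r_{\rm tr}^\alpha)}{\rho/r_{\rm tr}^\alpha+1}=\bar{\beta}_{\rm MRC}$, so that ${\rm F}^{\rm SU}(\beta)\to H(\beta-\bar{\beta}_{\rm MRC})$ at every continuity point $\beta\ne\bar{\beta}_{\rm MRC}$. I would highlight the built-in consistency: the same threshold $\bar{\beta}_{\rm MRC}$ is simultaneously the limiting single-user SINR (the Heaviside jump point) and the zero of the prefactor $(\bar{\beta}_{\rm MRC}-\beta)^+$, so both factors vanish for $\beta>\bar{\beta}_{\rm MRC}$ and the limiting transmission capacity is correctly zero there. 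Substituting these two limits into (\ref{eq:tc_general_scale}) and multiplying by $\kappa$ yields (\ref{eq:tc_mrc_general_scale}).

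The main obstacle is the rigour of the double limit. Corollary~\ref{corr:tc_scaling} is an expansion valid as $\epsilon-{\rm F}^{\rm SU}(\beta)\to0$ for each fixed large pair $(M,N)$, whereas here I am simultaneously sending $N\to\infty$ along $M=\kappa N$; I would need to argue that the remainder $O\bigl(((\epsilon-{\rm F}^{\rm SU}(\beta))^2)^+\bigr)$ stays uniformly controlled along this sequence and that the convergence ${\rm F}^{\rm SU}(\beta)\to H(\beta-\bar{\beta}_{\rm MRC})$ may be passed through the leading coefficient, which is exactly why the statement is phrased as a limit as $\epsilon-H(\beta-\bar{\beta}_{\rm MRC})\to0$. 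The non-uniformity concentrates at the boundary $\beta=\bar{\beta}_{\rm MRC}$, where $H$ is discontinuous and the prefactor is non-differentiable; away from this point the concentration of $W$ and $Y$ is exponentially fast (standard Gamma tail / Chernoff bounds), so the interchange of limits is justified. I would therefore treat the boundary case separately and otherwise rely on the exponential concentration to dominate all error terms.
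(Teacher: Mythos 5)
Your proposal is correct and follows essentially the same route as the paper: the paper's entire proof is the two-line recipe of substituting the MRC row of Table \ref{table:gamma_param} into (\ref{eq:tc_general_scale}) and invoking \cite[Eq.~(8)]{tse99} for the convergence in probability of $W/(Y+1)$ to $\bar{\beta}_{\rm MRC}$, which is exactly your plan; your only departures are that you re-derive that convergence elementarily from the vanishing coefficient of variation of the Gamma variates (equivalent and self-contained), and that you flag the double-limit uniformity issue that the paper silently ignores.

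One concrete warning about the step you explicitly chose not to carry out. The ``routine bookkeeping'' does \emph{not} close if (\ref{eq:tc_general_scale}) is taken at face value: your (correct) limit
\begin{align}
\frac{m\theta-\beta-\beta u\Upsilon}{n\Omega}\;\longrightarrow\;\frac{1}{r_{\rm tr}^{\alpha}}\left(\frac{r_{\rm tr}^{\alpha}}{\rho}+1\right)\left(\bar{\beta}_{\rm MRC}-\beta\right)
\end{align}
contributes a factor $\left(1/r_{\rm tr}^{\alpha}\right)^{2/\alpha}=1/r_{\rm tr}^{2}$ after exponentiation, which together with the $\pi r_{\rm tr}^{2}\beta^{2/\alpha}$ already printed in the denominator of (\ref{eq:tc_general_scale}) produces $\pi r_{\rm tr}^{4}\beta^{2/\alpha}$, not the $\pi r_{\rm tr}^{2}\beta^{2/\alpha}$ appearing in (\ref{eq:tc_mrc_general_scale}). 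The mismatch is an inconsistency internal to the paper, not a flaw in your limiting arguments: redoing the concentration step directly from Theorem \ref{lemm:tc_general}, where the expectation in (\ref{eq:hypergeometric_TC}) equals $\Gamma(m)\theta^{2/\alpha}\Gamma(m-2/\alpha)^{-1}\,{\rm E}\bigl[(W-\beta Y-\beta)^{-2/\alpha}\bigr]$ and $\Gamma(n)\Omega^{-2/\alpha}/\Gamma(n+2/\alpha)\to(n\Omega)^{-2/\alpha}$, gives ${\rm c}(\epsilon)/\zeta\to\bigl((m\theta-\beta-\beta u\Upsilon)^{+}/(n\Omega)\bigr)^{2/\alpha}\bigl(\epsilon-{\rm F}^{\rm SU}(\beta)\bigr)^{+}/\bigl(\pi\beta^{2/\alpha}\bigr)$, i.e.\ (\ref{eq:tc_general_scale}) should carry no $r_{\rm tr}^{2}$ at all (the stray factor is inherited from (\ref{eq:TC_alt_general}) in the appendix). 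With that correction, your substitution reproduces (\ref{eq:tc_mrc_general_scale}) exactly, and the single $r_{\rm tr}^{2}$ there is also what the exact special case (\ref{eq:tc_zf_exact}) and classical single-antenna transmission-capacity formulas demand. So your final answer stands, but the verification you deferred would have surfaced this factor-$r_{\rm tr}^{2}$ discrepancy rather than confirmed the coefficient as claimed.
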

\begin{proof}
Follows by substituting the relevant parameters for MRC receivers in
Table \ref{table:gamma_param} into (\ref{eq:tc_general_scale}), and
noting that \cite[Eq. (8)]{tse99} $\frac{W}{Y+1}$ converges in probability to  $\frac{1}{\kappa\left(\frac{ r_{\rm
tr}^\alpha}{\rho} + 1\right)}$ as $N \to \infty$  with $M= \kappa N$.
\end{proof}

\begin{corollary} \label{th:ZF_TransCap}
As $N \to \infty$ with $M=\kappa N$ where $0<\kappa < 1$,  the
transmission capacity of spatial multiplexing with ZF receivers as $\epsilon- H\left(\beta-\bar{\beta}_{\rm ZF}\right) \to 0$ behaves as
\begin{align}\label{eq:tc_zf_general_scale}
\frac{{\rm c}_{\rm ZF}(\epsilon)}{N} 
&\to \kappa
\left(\frac{ r_{\rm tr}^\alpha }{\rho}\right)^{\frac{2}{\alpha}} \left( \left( \bar{\beta}_{\rm ZF}-\beta\right)^{+}\right)^{\frac{2}{\alpha}} \frac{\left(\epsilon- H
\left(\beta-\bar{\beta}_{\rm ZF}\right) \right)^{+}}{ \pi r_{\rm tr}^2  \beta^{\frac{2}{\alpha}}} + O\left(\left(\left(\epsilon- H\left(\beta-\bar{\beta}_{\rm ZF}\right) \right)^2\right)^{+} \right)
\end{align}
where
\begin{align}
\bar{\beta}_{\rm ZF} = \frac{\rho}{r_{\rm tr}^\alpha}
\left(\frac{1}{\kappa}-1\right)\; .
\end{align}
\end{corollary}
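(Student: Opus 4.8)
The plan is to mirror the proof of Corollary \ref{th:MRC_TransCap}, specializing the general large-antenna scaling law (\ref{eq:tc_general_scale}) of Corollary \ref{corr:tc_scaling} to the zero-forcing parameters of Table \ref{table:gamma_param}. For ZF receivers we have $m=N-M+1$, $\theta=\frac{\rho}{M r_{\rm tr}^\alpha}$, $n=M$, $\Omega=\frac{\rho}{M}$, $\zeta=M$, and crucially $Y=0$ (no self-interference), so that $u\Upsilon=0$ and the self-interference correction drops out of (\ref{eq:tc_general_scale}). First I would substitute these directly, noting that $n\Omega=\rho$ exactly, and that the single-user outage factor reduces to ${\rm F}^{\rm SU}(\beta)={\rm Pr}(W\le\beta)$.

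The core of the argument is the large-system behaviour of the desired-signal power $W\dis{\rm Gamma}\left(N-M+1,\frac{\rho}{M r_{\rm tr}^\alpha}\right)$. Writing $M=\kappa N$, its mean is $m\theta=(N-M+1)\frac{\rho}{M r_{\rm tr}^\alpha}\to\left(\frac{1}{\kappa}-1\right)\frac{\rho}{r_{\rm tr}^\alpha}=\bar{\beta}_{\rm ZF}$, while its variance $(N-M+1)\theta^2=O(1/N)\to0$. Hence, exactly as for MRC where \cite{tse99} supplies the analogous limit, $W$ (equivalently $\frac{W}{Y+1}$, since $Y=0$) converges in probability to the constant $\bar{\beta}_{\rm ZF}$. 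This delivers the two limits needed to collapse (\ref{eq:tc_general_scale}): the leading base factor $(m\theta-\beta-\beta u\Upsilon)^{+}=(m\theta-\beta)^{+}\to(\bar{\beta}_{\rm ZF}-\beta)^{+}$, and the single-user outage ${\rm F}^{\rm SU}(\beta)={\rm Pr}(W\le\beta)\to H(\beta-\bar{\beta}_{\rm ZF})$, a Heaviside step reflecting that in the limit an outage occurs if and only if the operating threshold exceeds $\bar{\beta}_{\rm ZF}$. Assembling the limiting base factor $\left(\frac{(m\theta-\beta-\beta u\Upsilon)^{+}}{n\Omega}\right)^{2/\alpha}$, the multiplexing gain $\zeta=\kappa N$, and the residual $(\epsilon-{\rm F}^{\rm SU}(\beta))^{+}/(\pi r_{\rm tr}^2\beta^{2/\alpha})$ then produces the stated prefactor $\kappa\left(\frac{r_{\rm tr}^\alpha}{\rho}\right)^{2/\alpha}$ in (\ref{eq:tc_zf_general_scale}).

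The main obstacle---and the reason the hypothesis is $0<\kappa<1$ here rather than $0<\kappa\le1$ as for MRC---is that the concentration of $W$ requires its shape parameter $m=N-M+1=N(1-\kappa)+1$ to grow without bound. For $\kappa=1$ this collapses to $m=1$, an exponential random variable that does not concentrate, so $\bar{\beta}_{\rm ZF}$ would fail to be a deterministic limit and the step-function reduction of ${\rm F}^{\rm SU}$ would break down; excluding $\kappa=1$ is therefore essential. A secondary technical point I would verify is the legitimacy of passing to the limit inside (\ref{eq:tc_general_scale}): since ${\rm F}^{\rm SU}(\beta)$ converges to the discontinuous $H(\beta-\bar{\beta}_{\rm ZF})$, one must argue that the continuity used to control the $(\epsilon-{\rm F}^{\rm SU}(\beta))^{+}$ term and the $O\big(((\epsilon-{\rm F}^{\rm SU}(\beta))^2)^{+}\big)$ remainder remains valid away from the threshold point $\beta=\bar{\beta}_{\rm ZF}$---which is precisely the regime $\epsilon-H(\beta-\bar{\beta}_{\rm ZF})\to0$ in which the statement is asserted.
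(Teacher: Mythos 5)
Your proposal is correct and takes essentially the same route as the paper's proof: substitute the ZF parameters of Table \ref{table:gamma_param} into the general scaling law (\ref{eq:tc_general_scale}) and use the fact that the desired-signal power $W$ concentrates at $\bar{\beta}_{\rm ZF}$, so that $\left(m\theta-\beta-\beta u \Upsilon\right)^{+} \to \left(\bar{\beta}_{\rm ZF}-\beta\right)^{+}$ and ${\rm F}^{\rm SU}(\beta) \to H\left(\beta-\bar{\beta}_{\rm ZF}\right)$. The only difference is cosmetic: where the paper invokes \cite[Theorem 7.2]{tse99} for the convergence in probability of $W$, you derive the same limit directly from the Gamma mean and variance under $M=\kappa N$, which is precisely the concentration device the paper itself employs in Appendix \ref{app:tc_scaling} to prove Corollary \ref{corr:tc_scaling}, so the substance of the argument is identical.
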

\begin{proof}
Follows by substituting the relevant parameters for ZF receivers in
Table \ref{table:gamma_param} into (\ref{eq:tc_general_scale}), and
noting that \cite[Theorem 7.2]{tse99} $W$ converges in probability to  $\frac{\rho}{ r_{\rm tr}^\alpha}
\left(\frac{1}{\kappa}-1\right) $ as $N \to \infty$ with $M=\kappa N$.
\end{proof}

At low outage probabilities $\epsilon$, these results imply that
for MRC\footnote{The notation $g(N)= \Theta(f(N))$ means that
$\phi_1 f(N) < g(N) < \phi_2 f(N)$ for $N \to \infty$, where
$\phi_1$ and $\phi_2$ are constants independent of $N$.},
\begin{align}\label{eq:TC_linear_MRC}
{\rm c}_{\rm MRC}(\epsilon) = \left\{
\begin{array}{ll}
\Theta\left( N \right) , & \bar{\beta}_{\rm MRC} > \beta \\
0 , & {\rm otherwise}
\end{array}
\right.
\end{align}
while for ZF,
\begin{align}\label{eq:TC_linear_ZF}
{\rm c}_{\rm ZF}(\epsilon) = \left\{
\begin{array}{ll}
\Theta \left( N \right) , &
   \bar{\beta}_{\rm ZF}  > \beta \\
0 , & {\rm otherwise}
\end{array}
\right. .
\end{align}
Moreover, from (\ref{eq:tc_mrc_general_scale}) and (\ref{eq:tc_zf_general_scale}), assuming that $\bar{\beta}_{\rm MRC} > \beta$ and $\bar{\beta}_{\rm ZF} > \beta$, the transmission capacity can be approximated for large $M$ and $N$ as
\begin{align}\label{eq:tc_zf2}
& {\rm c}_{\rm MRC}(\epsilon) \approx M^{1-\frac{2}{\alpha}} \left( N - M \beta-\frac{M \beta r_{\rm tr}^\alpha}{\rho}
\right)^{\frac{2}{\alpha}}  \frac{\epsilon}{\pi r_{\rm tr}^2  \beta^{\frac{2}{\alpha}}}, \\
& {\rm c}_{\rm ZF}(\epsilon) \approx M^{1-\frac{2}{\alpha}}
\left( N-M-\frac{ M \beta r_{\rm tr}^\alpha}{\rho}\right)^{\frac{2}{\alpha}} \frac{\epsilon}{\pi r_{\rm tr}^2  \beta^{\frac{2}{\alpha}}} \notag  \; .
\end{align}
From the preceding equations, we observe the following:
\begin{itemize}
\item Under the assumption that the operating SINR $\beta$ is sufficiently
small such that $\bar{\beta}_{\rm MRC} > \beta$ and $
\bar{\beta}_{\rm ZF} > \beta$, both receivers achieve a transmission capacity scaling which is \emph{linear} in $N$.
\item ZF receivers are preferable over MRC receivers for sufficiently high SINR operating values, i.e., $\beta >1$, and vice-versa for small SINR operating values.
\item Spatial multiplexing with MRC receivers achieves the same performance as a spatial multiplexing system with ZF receivers, where the ZF receivers utilize $N-M \beta +1$ degrees of freedom (d.o.f.) to boost the signal power and $M\beta-1$ d.o.f.\ to cancel the self-interference from $M \beta - 1$ self-interfering streams.
\end{itemize}

This preceding discussion only tells part of the story,
and to gain a clearer picture we must also investigate the
conditions on $\beta$ in (\ref{eq:TC_linear_MRC}) and
(\ref{eq:TC_linear_ZF}) which ultimately dictate when linear scaling
is achievable for each receiver. Comparing these conditions, we find
that
\begin{align} \label{eq:ZFMRC_Condition}
\bar{\beta}_{\rm MRC} < \bar{\beta}_{\rm ZF} \quad \Leftrightarrow
\quad   \frac{\rho}{r_{\rm tr}^\alpha} > \frac{ \kappa }{ 1 - \kappa
} \; .
\end{align}
This result is interesting, since it gives insight into which
receiver, MRC or ZF, can achieve linear scaling ``more easily'',
i.e., for a larger range of operating SINRs $\beta$, in terms of the
average received SNR $\rho r_{\rm tr}^{-\alpha}$ and the antenna
parameter $\kappa$.  In particular, (\ref{eq:ZFMRC_Condition}) will
be satisfied when either the average received SNR is sufficiently
high or $\kappa$ is sufficiently small (i.e., $M \ll N$), in which
case, ZF will achieve linear scaling over a wider range of operating
SINRs than MRC.  This is intuitive, since by increasing the average
received SNR, the SINR of ZF grows proportionately, while the SINR
of MRC approaches a constant. Moreover, by decreasing $\kappa$, the
signal power of ZF will converge to the signal power of MRC, while
the self-interference power of MRC approaches a constant. On the
other hand, if the average received SNR is sufficiently low or
$\kappa$ is sufficiently high (i.e., $M \approx N$), then MRC will
achieve linear scaling for a wider range of $\beta$ values than ZF.

In general, from a transmission capacity perspective, these results
imply that for networks operating with low SNR it is highly
desirable to employ MRC receivers in favor or ZF. If the SNR is not
low, however, the choice is less clear, and depends largely on the
specific network operating conditions.  In particular, if the SNR is
large enough that (\ref{eq:ZFMRC_Condition}) is satisfied, then ZF
will have the advantage of achieving linear scaling more easily than
MRC. Regardless of this point, however, if $\beta$ is chosen
sufficiently small, then the condition (\ref{eq:ZFMRC_Condition})
becomes irrelevant, both schemes will achieve linear scaling.  Thus, the specific
operating SINR $\beta$, as well as the SNR, play a critical role in
deciding which spatial multiplexing receiver is most advantageous.

From (\ref{eq:TC_linear_MRC}) and (\ref{eq:TC_linear_ZF}), we
know that if $M=\kappa N$, then linear scaling of the
transmission capacity is achieved for both receivers. A natural
question which arises, however, is whether one can do \emph{even
better} than linear scaling, if $M$ is not confined to vary linearly
with $N$. This is answered in the following proposition:
\begin{proposition}\label{prop:TCZF}
For spatial multiplexing systems with either MRC or ZF receivers,
linear scaling of the transmission capacity, i.e.,\ ${\rm c}(\epsilon)
=\Theta(N)$, achieved when $M = \kappa N$ with $0<\kappa<1$, is the
\emph{best possible scaling}. Moreover, if $M$ varies sub-linearly
with $N$, then the transmission capacity scales sub-linearly also.
\end{proposition}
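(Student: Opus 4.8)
The plan is to extract from the large-antenna expressions in (\ref{eq:tc_zf2}) a single functional form governing the scaling for both receivers, and then to optimise that form over the transmit-stream count $M$ subject to the physical constraint $M \le N$. For both MRC and ZF, (\ref{eq:tc_zf2}) shows that, whenever the relevant signal term is positive,
\begin{align}
{\rm c}(\epsilon) = \Theta\left( M^{1-\frac{2}{\alpha}} \left(N - cM\right)^{\frac{2}{\alpha}} \right),
\end{align}
where $c = \beta\left(1 + \frac{r_{\rm tr}^\alpha}{\rho}\right)$ for MRC and $c = 1 + \frac{\beta r_{\rm tr}^\alpha}{\rho}$ for ZF are positive constants independent of $M$ and $N$ (and the capacity is zero once $N - cM \le 0$). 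I would therefore define $f(M) := M^{1-\frac{2}{\alpha}}(N - cM)^{\frac{2}{\alpha}}$ on $0 < M \le N$ and study its growth in $N$.

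For the ``best possible scaling'' claim I would first establish matching upper and lower bounds. Recall $\alpha > 2$, so $0 < 1 - \frac{2}{\alpha} < 1$. Using $M \le N$ and $N - cM \le N$ (as $c > 0$),
\begin{align}
f(M) \le N^{1-\frac{2}{\alpha}} \, N^{\frac{2}{\alpha}} = N,
\end{align}
hence ${\rm c}(\epsilon) = O(N)$: no choice of $M$, linear or otherwise, can beat linear scaling. For the reverse direction, Corollaries \ref{th:MRC_TransCap} and \ref{th:ZF_TransCap} already give that, for $M = \kappa N$ with $0 < \kappa < 1$ (and $\bar{\beta} > \beta$), ${\rm c}(\epsilon) = \Theta(N)$. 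Together these show $\Theta(N)$ is both attainable and optimal, proving the first assertion.

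For the sub-linear assertion, suppose $M$ varies sub-linearly with $N$, i.e.\ $M/N \to 0$. Then $cM/N \to 0$, so $N - cM = N\left(1 - o(1)\right) = \Theta(N)$, and consequently
\begin{align}
f(M) = \Theta\left( M^{1-\frac{2}{\alpha}} N^{\frac{2}{\alpha}} \right).
\end{align}
Since $1 - \frac{2}{\alpha} > 0$ and $M = o(N)$, we have $M^{1-\frac{2}{\alpha}} = o\left(N^{1-\frac{2}{\alpha}}\right)$, whence ${\rm c}(\epsilon) = o\left(N^{1-\frac{2}{\alpha}} N^{\frac{2}{\alpha}}\right) = o(N)$; the transmission capacity is sub-linear, as claimed. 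One may additionally verify that $f$ is maximised at the interior/boundary point $M^\star = \min\{(1-\frac{2}{\alpha})N/c,\, N\}$, which is itself linear in $N$, reinforcing that the optimal operating point already lies in the linear regime.

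The routine work hidden in the $\Theta$ statement of the first display is the main thing to pin down: I must confirm that the constants $c$ and the prefactors obtained by substituting the Table \ref{table:gamma_param} parameters into (\ref{eq:tc_general_scale}) are genuinely independent of $N$ and $M$, and that the leading-order asymptotic in (\ref{eq:tc_zf2}) holds uniformly across the range of $M$ being compared (including the sub-linear regime), so that the growth-rate comparisons are legitimate. The positivity bookkeeping---tracking the $(\cdot)^+$ truncation so that the bounds are only asserted where ${\rm c}(\epsilon) > 0$---is the only other point requiring care, but it does not affect the scaling conclusions.
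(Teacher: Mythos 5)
Your proposal is correct and follows essentially the same route as the paper's proof: both reduce the problem to the scaling form ${\rm c}(\epsilon) = \Theta\bigl(M^{1-\frac{2}{\alpha}}(N - cM)^{\frac{2}{\alpha}}\bigr)$ (the paper writes this as $\Theta\bigl(f(N)(N/f(N) - \xi)^{\frac{2}{\alpha}}\bigr)$ with exactly your constants, $\xi = c$), and both dispose of the sub-linear case by noting that $N - cM = \Theta(N)$ so the capacity becomes $\Theta\bigl(M^{1-\frac{2}{\alpha}}N^{\frac{2}{\alpha}}\bigr) = o(N)$. Your explicit uniform bound $M^{1-\frac{2}{\alpha}}(N-cM)^{\frac{2}{\alpha}} \le N$ is a modest tightening of the paper's linear-or-sub-linear dichotomy for the ``best possible scaling'' claim, but it does not change the substance of the argument.
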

\begin{proof}
See Appendix \ref{app:TCZF_scaling_proof}.
\end{proof}

Note that this result is certainly not obvious, since in general,
the transmission capacity is typically maximized by selecting
$\kappa < 1$; i.e., by choosing only a subset of antennas for
transmission. Thus, it is not immediately clear whether the optimal $M$ should grow in
proportion to $N$, or at some slower (i.e., sub-linear) rate.

We now consider the optimal antenna configuration which maximizes the transmission capacity scaling, given in the following proposition.
\begin{proposition}\label{prop:opt_antenna}
The antenna configuration which optimizes the transmission capacity
is
\begin{align}\label{eq:opt_antenna}
M = \max \left( 1, \min\left(\left\lfloor N\kappa^{*} \right\rfloor,N \right) \right)
\end{align}
where
\begin{align}
\kappa^{*} = \left\{
\begin{array}{ll}
\left(1-\frac{2}{\alpha}\right) \frac{1}{\beta\left(1+\frac{r_{\rm tr}^\alpha}{\rho}  \right)}  , & \quad \quad {\rm for \; MRC} \\
\left(1-\frac{2}{\alpha}\right) \frac{1}{1+ \frac{ r_{\rm tr}^\alpha \beta}{\rho}} , & \quad \quad {\rm for \; ZF}
\end{array}
\right. .
\end{align}
\end{proposition}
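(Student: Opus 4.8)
The plan is to treat the transmission capacity as a function of the single design variable $M$ (for fixed $N$) and to locate its maximizer directly from the large-antenna approximations already established in (\ref{eq:tc_zf2}). First I would write $M = \kappa N$ and substitute into the MRC expression in (\ref{eq:tc_zf2}), factoring out the overall $N$ dependence. Since $M^{1-\frac{2}{\alpha}} = (\kappa N)^{1-\frac{2}{\alpha}}$ and the bracketed term equals $N^{\frac{2}{\alpha}}\bigl(1 - \kappa\beta(1 + r_{\rm tr}^\alpha/\rho)\bigr)^{\frac{2}{\alpha}}$, the capacity collapses (up to the $N$-independent constant $\frac{\epsilon}{\pi r_{\rm tr}^2 \beta^{2/\alpha}}$) to
\[
\frac{{\rm c}_{\rm MRC}(\epsilon)}{N} \propto f(\kappa) := \kappa^{1-\frac{2}{\alpha}}\left(1 - a\kappa\right)^{\frac{2}{\alpha}}, \qquad a := \beta\left(1 + \frac{r_{\rm tr}^\alpha}{\rho}\right),
\]
and the analogous reduction for ZF yields the same functional form with $a$ replaced by $b := 1 + \frac{\beta r_{\rm tr}^\alpha}{\rho}$. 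This reduces the whole problem to maximizing a one-parameter function of $\kappa$ over the feasible interval on which the bracketed argument is positive.

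The optimization itself is then routine. I would differentiate $\ln f(\kappa) = (1 - \frac{2}{\alpha})\ln\kappa + \frac{2}{\alpha}\ln(1 - a\kappa)$, set the derivative to zero, and obtain
\[
\frac{1 - \frac{2}{\alpha}}{\kappa} = \frac{\frac{2}{\alpha}\, a}{1 - a\kappa}.
\]
Cross-multiplying and cancelling leaves the single linear equation $a\kappa = 1 - \frac{2}{\alpha}$, whence $\kappa^{*} = (1 - \frac{2}{\alpha})/a$. Substituting $a$ and $b$ recovers exactly the two cases of $\kappa^{*}$ stated in the proposition. Because $\alpha > 2$ forces both exponents $1 - \frac{2}{\alpha}$ and $\frac{2}{\alpha}$ to be strictly positive, $f(\kappa)$ vanishes at both ends of its support ($\kappa \to 0$ and $\kappa \to 1/a$) and is therefore unimodal, so this unique stationary point is the global maximum; a second-derivative test can be supplied for rigour but is not strictly needed. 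One should also note that $\kappa^{*} < 1/a$ (since $1 - \frac{2}{\alpha} < 1$), so the optimizer lies strictly inside the region where the capacity approximation is positive, consistent with the earlier requirements $\bar{\beta}_{\rm MRC} > \beta$ and $\bar{\beta}_{\rm ZF} > \beta$.

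Finally I would translate the continuous maximizer $\kappa^{*}$ back into an admissible integer stream count. Since $M$ must satisfy $1 \le M \le N$ with $M$ integer, the unimodality of $f$ guarantees that the best feasible choice is obtained by rounding $N\kappa^{*}$ down and clamping to the endpoints, i.e.\ $M = \max\bigl(1, \min(\lfloor N\kappa^{*}\rfloor, N)\bigr)$, which is precisely (\ref{eq:opt_antenna}). The only genuinely delicate point -- and the step I would treat most carefully -- is this passage from the continuous optimum to the integer one: one must argue that the downward rounding cannot skip across a competing local maximum, which is exactly what unimodality of $f$ rules out, and that the clamping correctly captures the two boundary regimes, namely $N\kappa^{*} < 1$ (dense or high-$\beta$ operation, where single-stream transmission is optimal) and $N\kappa^{*} > N$ (where $M = N$ is optimal). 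Everything else reduces to elementary calculus.
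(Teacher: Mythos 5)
Your proposal is correct and matches the paper's (implicit) derivation: the paper states this proposition with no explicit proof, and it is obtained exactly as you do --- by substituting $M=\kappa N$ into the large-antenna approximation (\ref{eq:tc_zf2}) (equivalently, into (\ref{eq:tc_mrc_general_scale}) and (\ref{eq:tc_zf_general_scale})), maximizing the resulting log-concave function $\kappa^{1-\frac{2}{\alpha}}\left(1-a\kappa\right)^{\frac{2}{\alpha}}$ via its log-derivative to get $\kappa^{*}=(1-\frac{2}{\alpha})/a$, and clamping to the admissible range $1\le M\le N$. One minor overstatement: unimodality only guarantees that the integer optimum lies in $\left\{\left\lfloor N\kappa^{*}\right\rfloor,\left\lceil N\kappa^{*}\right\rceil\right\}$ rather than at the floor specifically, but since the proposition itself asserts the floor (and the paper offers no justification for that choice either), this does not constitute a gap relative to the paper's own standard of rigor.
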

From this optimal antenna configuration, we observe the following:
\begin{itemize}
\item The optimal number of data streams is higher for MRC than ZF for sufficiently small SINR operating values $\beta$, i.e., $\beta < 1$, and vice-versa for sufficiently high SINR operating values.
\item As $\alpha \to 2$, $\kappa^* \to 0$, and the optimal number of data streams decreases. This can be explained by noting that low path loss exponents correspond to scenarios where all multi-node interference, including far-away interference, is significant. In this scenario,  the number of data streams should be decreased to reduce the impact of multi-node interference.
\item As $\alpha$ goes large, the impact of multi-node interference becomes negligible, and the optimal number of data streams is dependent on the specific network operating conditions. For example, when MRC receivers are employed, transmitting the maximum number of data streams is optimal for sufficiently low $\beta$, and vice-versa for high $\beta$. For ZF receivers,  transmitting the maximum number of data streams is optimal for high average received SNR $\rho r_{\rm tr}^{-\alpha}$.  This can be observed in Fig.\ \ref{fig:zf_tc}, which
shows the transmission capacity of spatial multiplexing with ZF for
different $M$ at high transmit SNR $\rho$.  The ``Analytical" curves in the figure are based on
(\ref{eq:tc_general}).
\end{itemize}

\begin{figure}[tb!]
\centerline{\includegraphics[width=0.7\columnwidth]{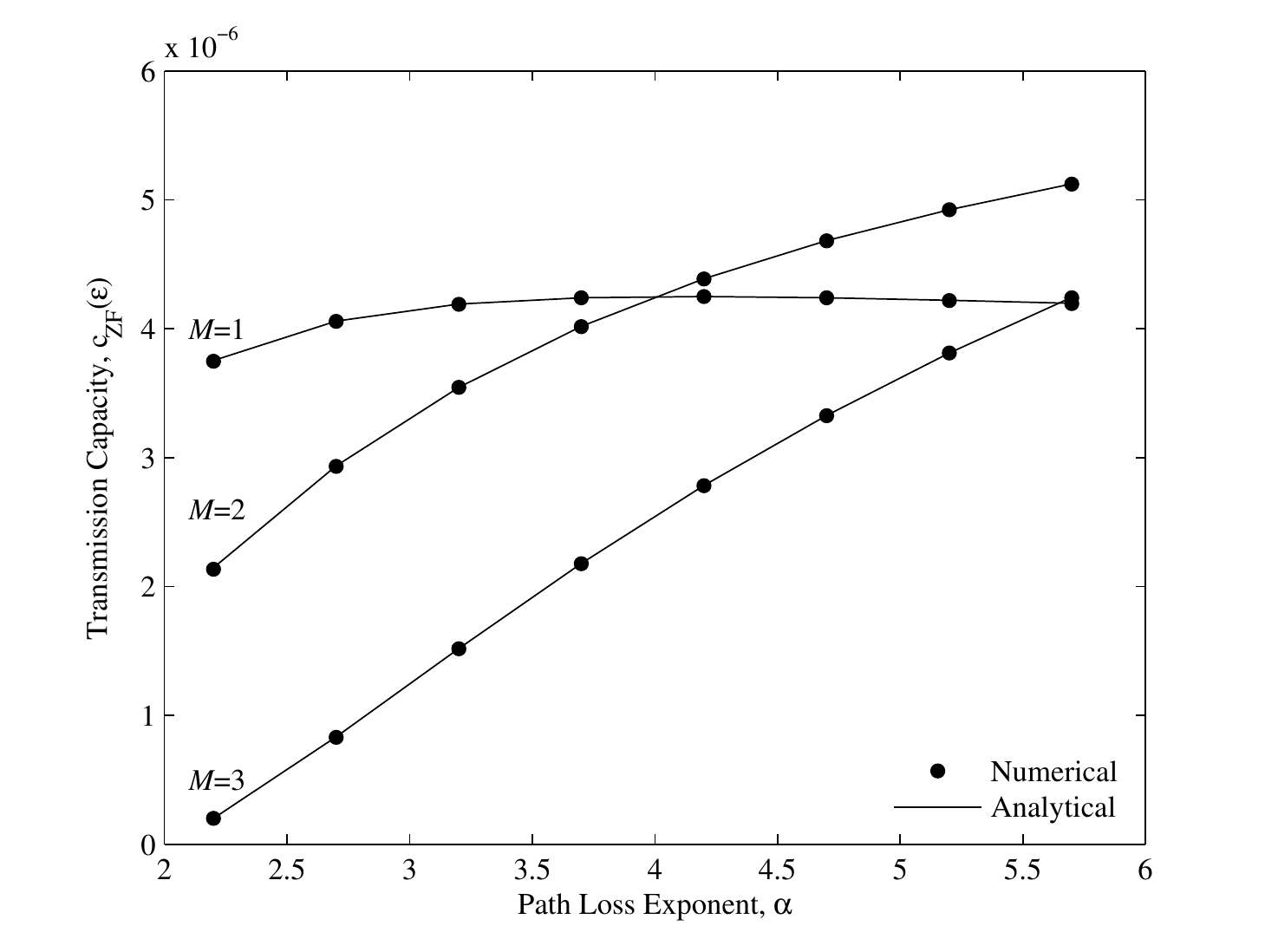}}
\caption{Transmission capacity vs.\ path loss exponent of slotted ALOHA with spatial multiplexing using ZF receivers in the interference-limited regime, and with ${r_{\rm tr}}=3$ m, $\beta=3$ dB, $\epsilon=0.0001$ and $p=1$.}
\label{fig:zf_tc}
\end{figure}

\emph{Proposition \ref{prop:opt_antenna}} also provides a
useful design criteria for choosing how many data streams should be
transmitted for a given path loss exponent. For example, for ZF, when
$\alpha=4$ and at high transmit SNR $\rho$, we see in (\ref{eq:opt_antenna}) that the optimal number
of transmitted data streams should be half the number of receive
antennas. The optimal number of transmit antennas was also
considered in \cite{govindasamy07}, which considered the use of MMSE
receivers to cancel the interference from the closest transmitting
nodes and a single-link performance measure. In contrast, we
consider the use of ZF receivers to cancel interference from the
corresponding transmitter only, and a \emph{network} performance
measure.

\subsubsection{Comparison with Multi-Node Interference
Cancelation (Single-Input Multiple-Output with Partial Zero-Forcing)}

Linear scaling was also previously shown to occur in \cite{jindal09}
for single-stream transmission, i.e.,\ $M=1$, by employing a
\emph{partial zero forcing} (PZF) scheme, where the receive antennas
were used to simultaneously cancel the interference from the $k_{\rm
PZF}$ closest interfering nodes and boost the signal power from the
corresponding transmitter.  Specifically, the transmission capacity
of the PZF scheme at high transmit SNR $\rho$ was shown to scale as
\begin{align}\label{eq:tc_PZF}
{\rm c}_{\rm PZF}(\epsilon) = \Theta\left(k_{\rm PZF} \left(\frac{N-k_{\rm PZF}}{k_{\rm PZF}}\right)^{\frac{2}{\alpha}} \right)\;.
\end{align}
Linear scaling was then achieved by setting $k_{\rm PZF} = \kappa N$
where $0 < \kappa<1$. At high transmit SNR $\rho$, comparing (\ref{eq:tc_zf2}) and
(\ref{eq:tc_PZF}), we see that setting $k_{\rm PZF}=M$ results in
the PZF scheme achieving the same scaling as our spatial
multiplexing with ZF receivers scheme\footnote{Note that similar comparisons can be made with our spatial multiplexing with MRC receivers scheme.}.

To understand the reason behind this same linear scaling result, we
explore some similarities between our spatial multiplexing with ZF receivers scheme
and the PZF scheme in \cite{jindal09}. For both schemes, it can be
shown that the signal component contributes to the transmission
capacity scaling by a factor of $N^{\frac{2}{\alpha}}$, and is
achieved by allocating $N(1-\kappa)$ d.o.f.\ to
boost the signal power, where $0 < \kappa <1$. The key difference,
however, lies with how the remaining $N \kappa$ d.o.f.\ are used.
For our spatial multiplexing with ZF receivers scheme, the $N
\kappa$ d.o.f.\ are used to cancel interference from the
corresponding transmitting node, while for the PZF scheme, they are
used to cancel interference from the $k_{\rm PZF}=N \kappa$ closest
interferers. However, it turns out that the resulting interference
component of both schemes contributes to the transmission capacity
through the \emph{same} scaling factor, which is
$N^{1-\frac{2}{\alpha}}$. To see why, it is convenient to express
the transmission capacity (\ref{eq:tc_def}) in the alternative form
\begin{align}\label{eq:TC_alt}
{\rm c}_{\rm ZF}(\epsilon) =  \lambda^{*}(\epsilon)(1 - \epsilon)
\end{align}
where $\lambda^{*}(\epsilon)$ is the inverse of $\epsilon={\rm
F}_{\rm ZF}\left(\beta;\frac{\lambda}{M}\right)$ taken w.r.t.\
$\lambda$. This alternate form can be obtained by noting that
\begin{align}
\epsilon = {\rm F}(\beta; \lambda/M) \quad \Rightarrow \quad \lambda =  M
{\rm F}^{-1}(\epsilon),
\end{align}
which has the same form as (\ref{eq:tc_def}) for small $\epsilon$.
(\ref{eq:TC_alt}) indicates that from a transmission capacity
perspective, increasing the number of data streams $M$ has the same
effect as decreasing the density of transmitting nodes by a factor
of $M$.

From the modified transmission capacity definition
(\ref{eq:TC_alt}), we can now explain why for both schemes the
interference component contributes to the transmission capacity by
the same scaling factor. First, for spatial multiplexing with ZF receivers, the average distance from the receiver
to the closest interfering node,  where the interfering nodes are distributed with density $\lambda/M$, can be shown to scale as $\sqrt{M}$ \cite[Eq. (9)]{haenggi05} whereas for
PZF (after cancelation), it scales as $\sqrt{k_{\rm PZF}}$
\cite{jindal09}. Thus, with $M=k_{\rm PZF}$, these coincide. Second,
for both schemes, the average number of interfering data streams per
unit area is the same, given by $\lambda$. For spatial multiplexing
with ZF, this can be seen by noting that $M$ data streams are being
transmitted by $\lambda/M$ nodes per unit area.

Summing up, the key point of the discussion above is that the same
linear transmission capacity scaling can be achieved whether we
employ MIMO with ZF receivers, or SIMO with PZF.  Clearly, the
choice as to which scheme to employ in practice will depend on the
specific network design. When there is the capability for employing
multiple antennas on all nodes, MIMO with ZF will be more
attractive, since it has lower complexity and less stringent
requirements on the level of channel knowledge at the receivers. On
the other hand, if single-antenna transmitters are required (e.g.,
due to size limitations, relevant for sensor networks), then SIMO
with PZF will be appropriate. A key implementation issue for such
networks, however, is how to accurately estimate the multi-node
interference which is needed to perform interference cancelation at
each receiver. Whilst very well motivated in theory, some important
questions still remain as to the practical feasibility of this
approach.

\subsection{Transmission Capacity of OSTBC}

We now consider the transmission capacity of OSTBC at low outage probabilities $\epsilon$.  In this case,
analogous to (\ref{eq:tc_zf2}), we consider the case where $N$ grows
large, and also assume that $\beta$ and $\epsilon$ are selected such
that $\epsilon > {\rm F}^{\rm SU} (\beta)$. Moreover, we  consider high transmit SNRs $\rho$, and thus, substituting the
relevant parameters from Table \ref{table:gamma_param} into
(\ref{eq:tc_general}), we get
\begin{align}\label{eq:tc_ostbc2}
\tilde{{\rm c}}_{\rm OSTBC} (\epsilon) &= \Theta\left(g(M) N^{\frac{2}{\alpha}} \right)
\end{align}
where
\begin{align}
g(M) = R(M) M^{\frac{2}{\alpha}} \frac{\Gamma\left(\frac{N_I(M)}{M}\right)}{\Gamma\left(\frac{N_I(M)}{M} + \frac{2}{\alpha}\right)}
\end{align}
where we have made explicit the dependence of the code rate $R$ and
$N_I$ on $M$. (Note that this result applies for arbitrary values of
$M$.)  In general, there is no simple relationship characterizing the dependence of $R$ and
$N_I$ on $M$. Thus, to proceed, here we consider the extreme
scenarios pertaining to minimum-rate and maximum-rate codes.

For minimum-rate codes, i.e., cyclic diversity systems, $R(M)=1/M$
and $N_I(M) = M$. Thus, it follows that $\tilde{{\rm c}}_{\rm OSTBC}
(\epsilon) = \Theta\left(M^{\frac{2}{\alpha}-1} N^{\frac{2}{\alpha}}
\right)$, which is obviously maximized if $M = 1$, yielding
$\tilde{{\rm c}}_{\rm OSTBC} (\epsilon) = \Theta\left(
N^{\frac{2}{\alpha}} \right)$.  This special case simply corresponds
to a standard SIMO system.

For maximum-rate codes, e.g., the Alamouti code, it is known from
\cite{liang03} that for $M$ even
\begin{align}
\frac{1}{2} < R(M) \le \frac{M+2}{2M} ,  \qquad N_I(M) =\frac{M(M+2)}{2}
\end{align}
and for $M$ odd
\begin{align}
& \frac{1}{2} < R(M) \le \frac{M+3}{2(M+1)} ,  \qquad\frac{M(M+1)}{2} \le
N_I(M) < \frac{M(M+2)}{2} \; .
\end{align}
Thus, using these bounds and the inequality $\frac{ m^x}{\Gamma(1-x)} \le \frac{\Gamma(m)}{\Gamma(m-x)} \le m^x$ \cite{hunter07}, we have $g^{\rm lb}(M) \leq g(M) \leq g^{\rm ub}(M)$, where
\begin{align}\label{eq:gub_OSTBC}
& g^{\rm lb}(M) := \frac{1}{2\Gamma\left(1+\frac{2}{\alpha}\right)} \left(\frac{2
M}{M+2}\right)^{\frac{2}{\alpha}} \notag \\
& g^{\rm ub}(M) :=
\begin{cases}
& \left(\frac{2 M}{M+2}\right)^{\frac{2}{\alpha}-1} , \hspace{1.8cm} {\rm even} \; M \\
& \frac{M+3}{2(M+1)} \left(\frac{2 M}{M+1}\right)^{\frac{2}{\alpha}} , \hspace{1cm} {\rm odd} \; M \\
\end{cases}
\; .
\end{align}

From this, it is clear that, regardless of whether $M$ is kept fixed
or allowed to vary with $N$, the best achievable scaling is
$\tilde{{\rm c}}_{\rm OSTBC} (\epsilon) =
\Theta\left(N^{\frac{2}{\alpha}} \right)$. This result, combined
with that for minimum-rate codes above, indicates that OSTBC can
only achieve \emph{sub-linear} scaling\footnote{A similar scaling
result was obtained in \cite{hunter07}, based on the corresponding
SINR approximation illustrated in Figs.\
\ref{fig:ostbc_outage_compare_nt4}--\ref{fig:ostbc_outage_compare_nt4Ns2}.},
in contrast to the spatial multiplexing systems considered
previously.

We now consider the question of how to select $M$ to maximize the
transmission capacity scaling for large $N$.  This is
tantamount to maximizing the leading multiplicative factor $g(M)$. For maximum-rate codes, we will focus on $g^{\rm up}(M)$, since it serves as an accurate approximation for $g(M)$.
It is obvious that $g^{\rm up}(M)$ is maximized by selecting $M=2$ for $M \ge 2$. Thus, the key finding is that for OSTBC, either the Alamouti code or SIMO transmission should be employed. Although it is hard to obtain a simple design rule which specifies which of these two coding schemes will perform better under certain network conditions,  we can gain insights by numerical analysis. In particular, Fig.\ \ref{fig:ostbc_tc} plots the transmission capacity achieved by OSTBC for different $M$.  Results are presented for $N=3$, and the codes used for $M=2$ and $M=3$ are the maximum-rate codes given by (\ref{eq:alamouti_param}) and (\ref{eq:ostbc_code2}) respectively. The ``Analytical" curves are based on (\ref{eq:tc_general}), and are clearly seen to match with the ``Numerical" curves, obtained by taking $\rho$ large in the outage probability expression obtained by substituting the relevant parameters from Table \ref{table:gamma_param} into (\ref{eq:cdf_general}), and subsequently  solving for $\lambda(\epsilon)$ using numerical techniques. We see that the Alamouti code performs better than SIMO for $\alpha <4.7$, and vice-versa for $\alpha >4.7$. 

\begin{figure}[tb!]
\centerline{\includegraphics[width=0.7\columnwidth]{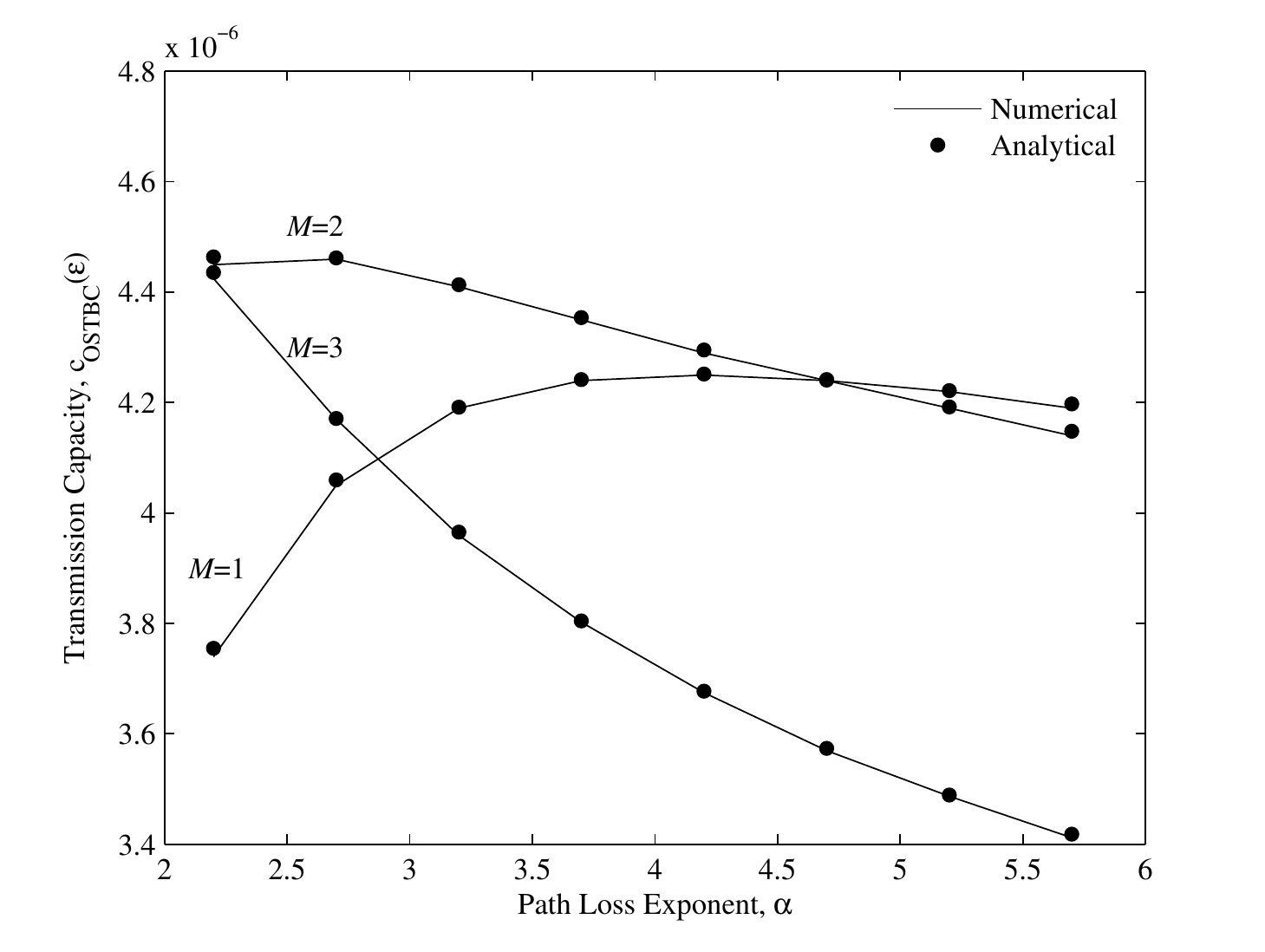}}
\caption{Transmission capacity vs.\ path loss exponent of slotted ALOHA with OSTBC, and with ${r_{\rm tr}}=3$ m, $\beta=3$ dB, $\epsilon=0.0001$ and $p=1$.}
\label{fig:ostbc_tc}
\end{figure}

\subsection{Comparison}

For large $N$ and assuming $\epsilon > {\rm F}^{\rm SU} (\beta)$ in (\ref{eq:tc_general}),  spatial multiplexing  achieves a higher transmission capacity scaling than OSTBC, at high transmit SNR $\rho$, if
\begin{align}\label{eq:zf_ostbc_tc_compare}
\begin{array}{ll}
& R(M)  < \frac{M}{ \Gamma\left(1+\frac{2}{\alpha}\right)}\left(\frac{N_I(M)}{M^3}\left(1-\frac{M}{N}\right) \right)^{\frac{2}{\alpha}}  \quad \quad {\rm for \; ZF} \\
& R(M)  < \frac{M}{ \Gamma\left(1+\frac{2}{\alpha}\right)}\left(\frac{N_I(M)}{ M^3}\left(1-\frac{M\beta}{N}\right)\right)^{\frac{2}{\alpha}} \hspace{0.4cm} {\rm for \; MRC} \; .
\end{array}
\end{align}
Clearly, these conditions depend on the specific OSTBC employed.  As
expected, we see that by incorporating a higher rate code, whilst
keeping all other parameters fixed, the possibility of the condition
(\ref{eq:zf_ostbc_tc_compare}) being satisfied is decreased.  Moreover, by noting that $N_I(M) < M^3$ for all practical OSTBC codes, the condition in
(\ref{eq:zf_ostbc_tc_compare}) becomes more likely to be satisfied
as the path loss exponent $\alpha$ increases; indicating that
environments with higher path loss exponents are more beneficial for
spatial multiplexing,
compared with OSTBC; and vice-versa for low path loss environments.

Figs.\  \ref{fig:tc_compare_changeN} and \ref{fig:tc_compare_changeM} show the transmission capacity achieved by spatial multiplexing with ZF receivers and OSTBC for different antenna configurations. The curves for both spatial multiplexing and OSTBC are based on (\ref{eq:tc_general}). The results show that the relative transmission capacity of the transmission schemes depends on both the antenna configuration, the path loss exponent $\alpha$ and the SINR operating value. In particular, we observe that OSTBC performs best at small $\alpha$ (e.g.,\ $\alpha=2$, corresponding to free space environments), whereas spatial multiplexing performs the best at large $\alpha$ (e.g.,\ $\alpha=6$, corresponding to indoor environments). These observations agree with our analytical conclusions put forth above, based on (\ref{eq:zf_ostbc_tc_compare}).

We also observe that  ZF receivers are preferable over MRC receivers for a sufficiently high SINR operating value, i.e., $\beta=0$ dB, and vice-versa for a low SINR operating value, i.e., $\beta=-5$ dB. This is because for high SINR operating values, the outage probability when there is no multi-node interference, i.e., the outage probability of the single user MIMO system, is higher for MRC than ZF, and offsets any positive gains resulting from the signal power that MRC has over ZF receivers. However, for sufficiently low SINR operating values, and as indicated by previous analysis, the outage probability when there is no multi-node interference is negligible for both receivers, and thus MRC is preferable over ZF.

\begin{figure}[t!]
\centerline{\includegraphics[width=0.7\columnwidth]{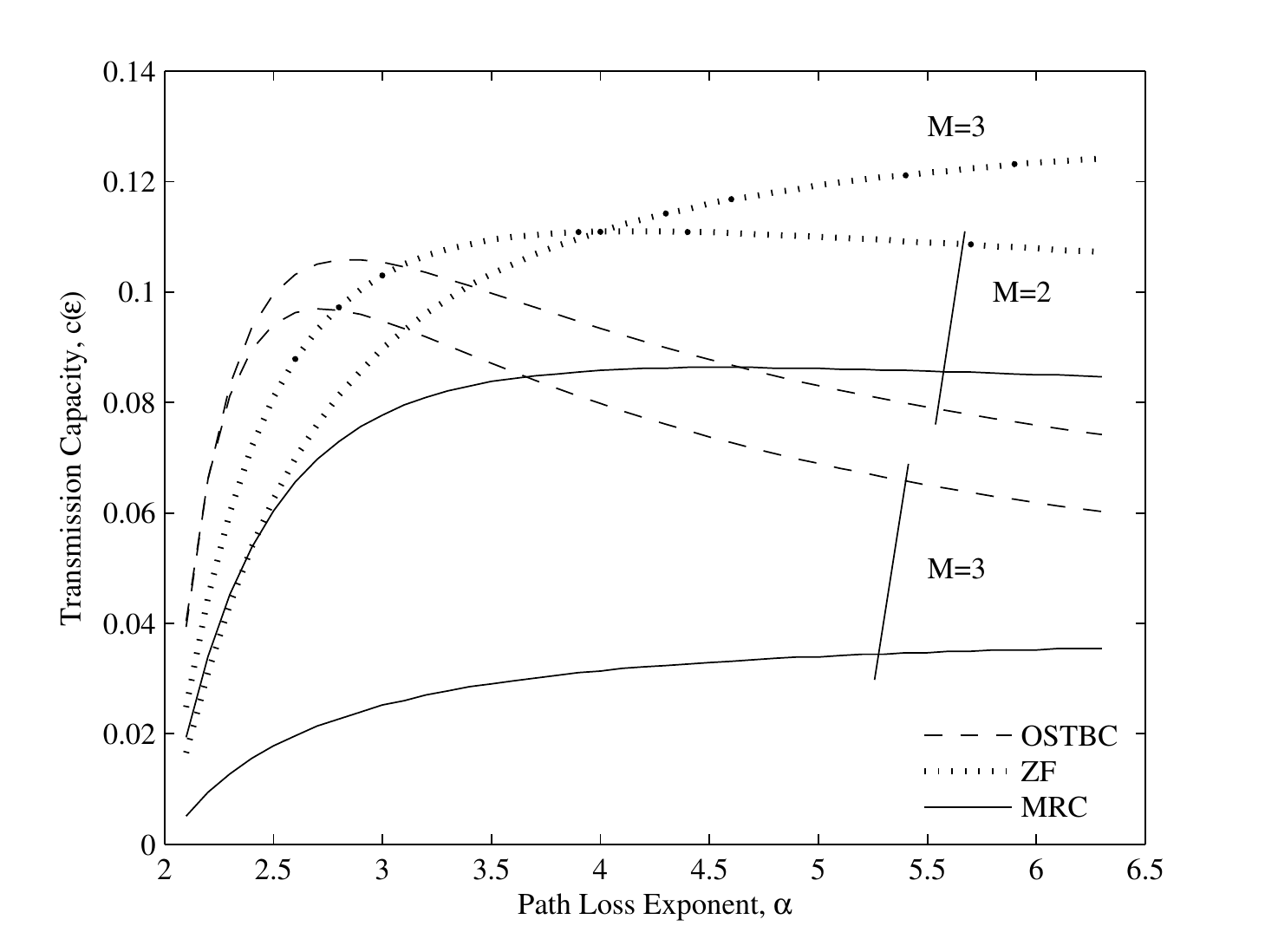}}
\caption{Transmission capacity vs.\ path loss exponent of slotted ALOHA with spatial multiplexing and OSTBC, and with $N=5$, ${r_{\rm tr}}=1$ m, $\beta=0$ dB, $N=5$, $\rho=30$ dB, $\epsilon=0.15$ and $p=1$.}
\label{fig:tc_compare_changeN}
\end{figure}

\begin{figure}[t!]
\centerline{\includegraphics[width=0.7\columnwidth]{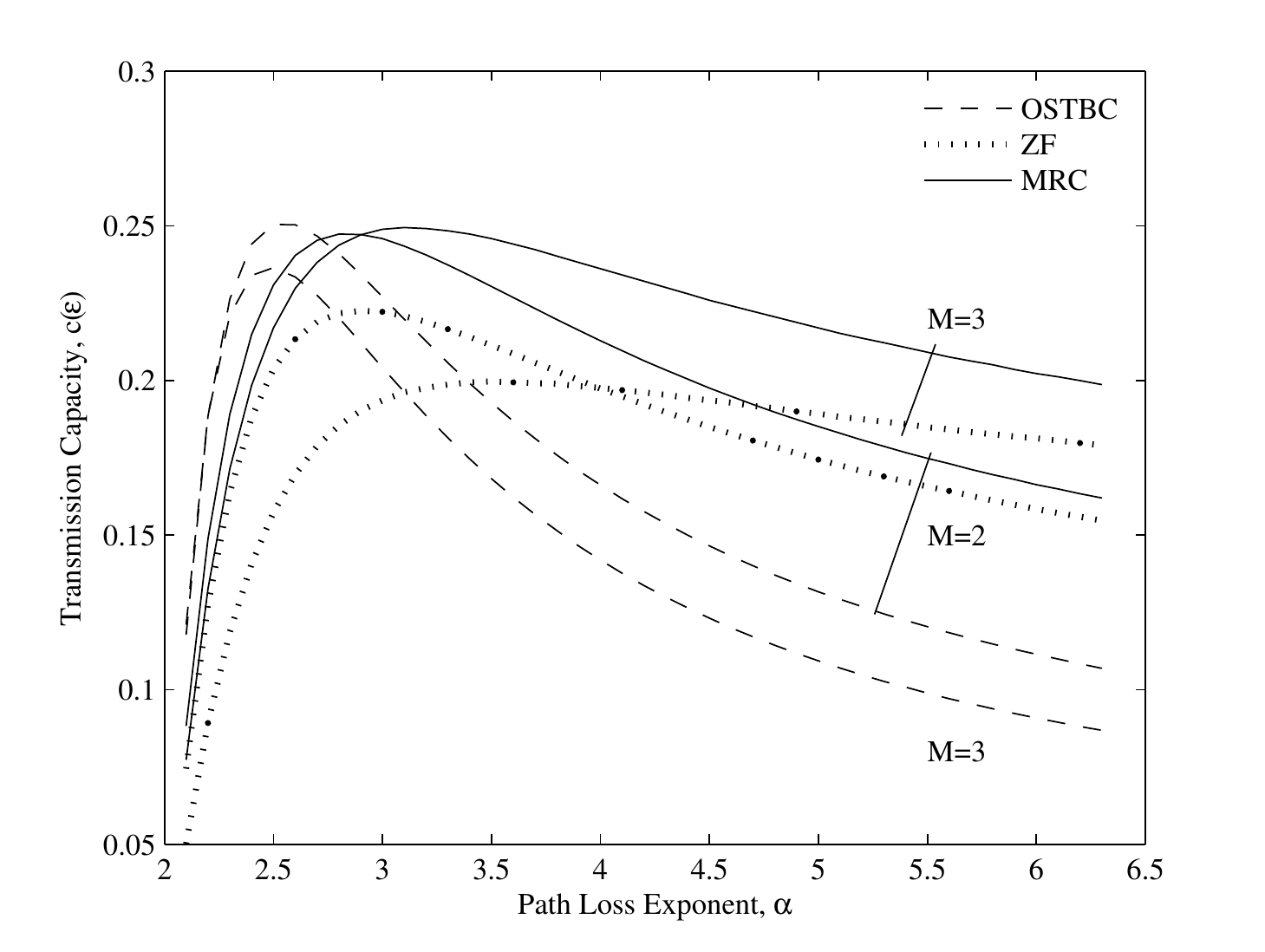}}
\caption{Transmission capacity vs.\ path loss exponent of slotted ALOHA with spatial multiplexing and OSTBC, and with $N=5$, ${r_{\rm tr}}=1$ m, $\beta=-5$ dB, $N=5$, $\rho=30$ dB, $\epsilon=0.15$ and $p=1$.}
\label{fig:tc_compare_changeM}
\end{figure}


\section{Network Throughput Comparison between Slotted ALOHA and Coordinated Access protocol}\label{sec:CA}

In this section, we compare the multi-antenna transmission schemes employing a slotted ALOHA MAC protocol, described in Section \ref{sec:sysmodel}, with a baseline single antenna transmission scheme employing a CA MAC protocol. It is worth noting that the CA protocol is an idealized protocol, since the overhead involved in achieving full coordination is prohibitive in practice for ad hoc networks. We will show that the simple multi-antenna slotted ALOHA MAC protocol can in fact
yield even higher throughputs than the tightly
scheduled CA MAC protocol in various practical scenarios. This shows that not only does the use of multiple antennas allow for a decentralized MAC, but it can actually also improve throughput in certain scenarios. We start by describing the CA protocol model.

\subsection{Coordinated Access Protocol Model}

For the CA protocol, transmissions are scheduled such that
interference is minimized at each receiver. Since the strongest
interferers are those closest to the receivers, we consider a CA
scheduling protocol which removes the closest interferers within a
guard zone around each receiver. Note that, although removing these
strong interferers leads to a gain in throughput, this gain comes at
the cost of significant overhead and synchronization requirements;
something which is clearly undesirable in an ad hoc network setting.

Scheduling nodes to minimize interference has been considered for a
long time, however there have been few results where the
transmitting nodes form a PPP. In \cite{venkataraman06}, a circular
guard zone of radius $r_{\rm gz}$ around the typical receiver was
considered, but neglected a guard zone around every other receiver
in the network. Recently in \cite{hasan07}, transmissions were
scheduled only if the receiver has no interferers within a circular
guard zone also of radius $r_{\rm gz}$. However, the throughput
suffers when the network becomes dense, as the probability of a
receiver with no interferers within its guard zone decreases.

In this section, we consider an alternative model where transmissions
are scheduled in a TDMA manner.  As with other common CA protocols,
the throughput of the TDMA scheme we consider does not go to zero as
the network becomes dense. For analytical tractability, we assume
each receiver is located at the center of a non-overlapping square
of side length $2 r_{\rm gz}$, arranged according to a lattice
structure. The transmitting nodes, distributed according to a PPP, which lie in
the same square are each assigned a time slot of the same duration
as the slots used in the slotted ALOHA protocol.  We assume a
dynamic slot allocation where the number of slots in a particular
square is equal to the number of transmitters in that square. We
note that although this may not be completely practical due to the
deterministic placement of each receiver, this model correctly
captures the fact that each receiver is spatially separated such
that interference is minimized, and is very useful for comparison
purposes.

The CA protocol is illustrated in Fig.\ \ref{fig:TDMA_plot}, where
we show a network before and after scheduling.
 The scheduled
transmitter transmits to the receiver at the center of the
corresponding square during the assigned time slot. In Fig. \ref{fig:TDMA_plot}, the receivers are shown as crosses, while transmitters as dots. As shown, after applying the CA protocol, there is at most one transmitter in each square, which transmits to the receiver at the center of the square.

\begin{figure*}[t!]
\centerline{\includegraphics[width=1\columnwidth]{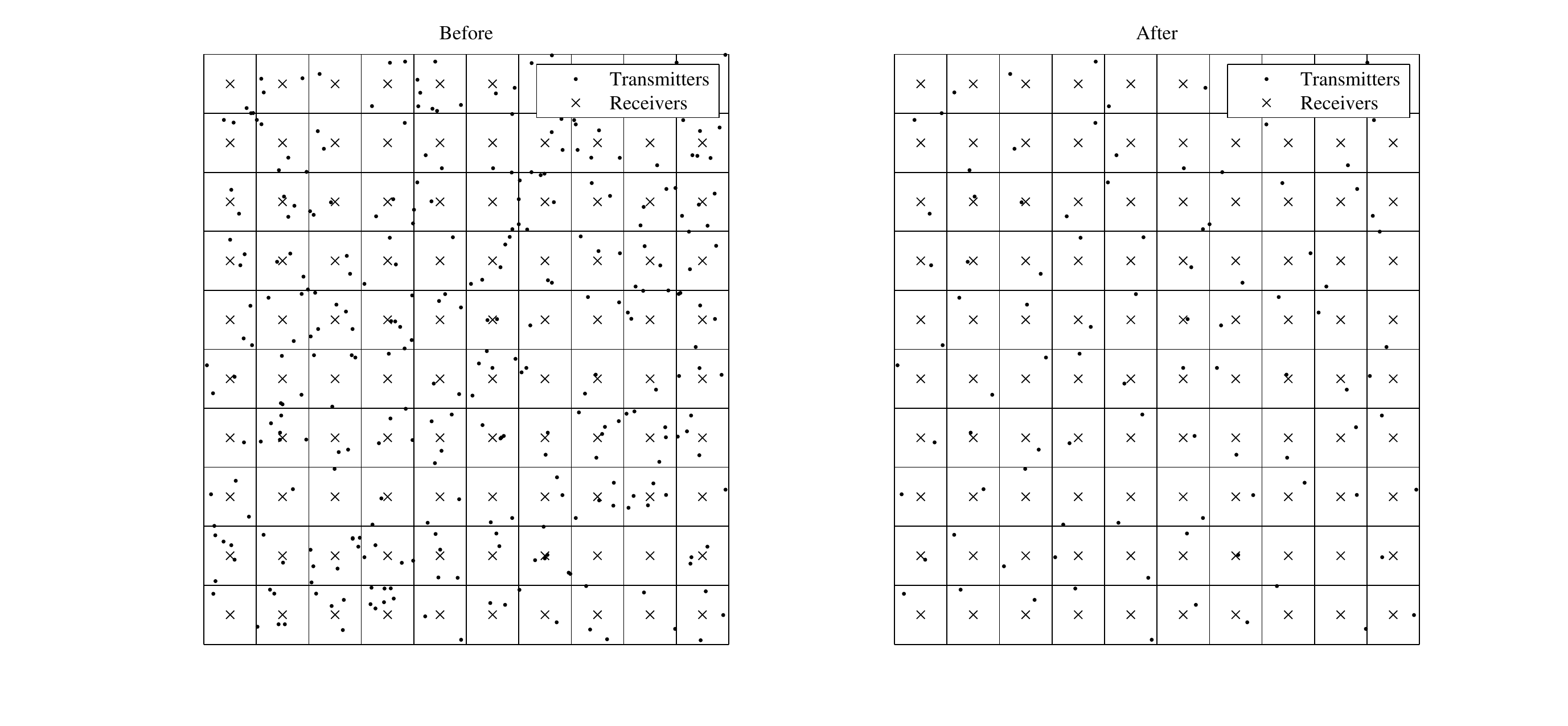}}
\caption{Coordinated access protocol model: before and after scheduling with $\lambda=3$ ${\rm m}^2$.}
\label{fig:TDMA_plot}
\end{figure*}

Under this model, the following theorem presents a new closed-form
upper bound for the network throughput.
\begin{theorem}\label{lem:throughput_CA}
The throughput of the CA MAC protocol employing single antennas is upper bounded by
\begin{align}\label{eq:throughput_CA}
& {\rm T}_{\rm CA} \le e^{-\frac{\beta r_{\rm tr}^\alpha}{\rho}}
\lambda_{\rm CA} \exp \left\{ - \pi\lambda_{\rm CA} \left( \frac{2
(\beta r_{\rm tr}^{\alpha})^{\frac{2}{\alpha}} \Gamma\left(\frac{2}{\alpha}\right)\Gamma\left(1 - \frac{2}{\alpha}\right)}{\alpha}
 - \left(\frac{2r_{\rm gz}}{\sqrt{\pi}}\right)^2 \;
_2 F_1 \left(\frac{2}{\alpha},1;1 + \frac{2}{\alpha};-\left(\frac{2
r_{\rm gz}}{ \sqrt{\pi} r_{\rm tr}}\right)^{\alpha}\frac{1}{\beta}
\right) \right) \right\}
\end{align}
where $\lambda_{\rm CA} = (1 - e^{-4 \lambda r_{\rm gz}^2})/(4
r_{\rm gz}^2)$, and ${}_2F_1(\cdot,\cdot;\cdot;\cdot)$ is the Gauss
hypergeometric function \cite[Eq. (9.6.2)]{abramowitz70}.

\begin{proof}
See the Appendix \ref{app:throughput_CA}.
\end{proof}
\end{theorem}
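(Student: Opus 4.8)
The plan is to reduce the success probability to a Laplace transform of the aggregate interference and then evaluate that transform for a Poisson field of active interferers outside a guard region. First I would fix the density of concurrently \emph{active} nodes. Each square has area $4r_{\rm gz}^2$ and, since the transmitters form a PPP of intensity $\lambda$, is non-empty with probability $1-e^{-4\lambda r_{\rm gz}^2}$; under the dynamic TDMA rule exactly one transmitter is scheduled per non-empty square, so the density of simultaneously active nodes is $\lambda_{\rm CA}=(1-e^{-4\lambda r_{\rm gz}^2})/(4r_{\rm gz}^2)$. With single antennas $\zeta=1$, so by (\ref{eq:through_def}) it remains to upper bound the per-link success probability $1-{\rm F}(\beta)$ and multiply by $\lambda_{\rm CA}$.

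Next I would compute the success probability. The single-antenna case is the specialization of (\ref{eq:SINR_general}) with $m=n=1$, $u=0$ (no self-interference), $\theta=\rho/r_{\rm tr}^\alpha$ and $\Omega=\rho$, so the desired gain $|h_0|^2$ is a unit-mean exponential and the event $\gamma>\beta$ is equivalent to $|h_0|^2>(\beta r_{\rm tr}^\alpha/\rho)(1+\rho I)$ with $I=\sum_{D_\ell\in\Phi}|h_\ell|^2/|D_\ell|^\alpha$. Integrating out the exponential gives
\[ 1-{\rm F}(\beta)=e^{-\beta r_{\rm tr}^\alpha/\rho}\,{\rm E}\!\left[e^{-\beta r_{\rm tr}^\alpha I}\right], \]
which already produces the prefactor $e^{-\beta r_{\rm tr}^\alpha/\rho}$ appearing in (\ref{eq:throughput_CA}); set $s:=\beta r_{\rm tr}^\alpha$.

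The crux is bounding ${\rm E}[e^{-sI}]$. The active interferers consist of \emph{one} point per non-empty square and are therefore not Poisson. I would replace them by a PPP of the same density $\lambda_{\rm CA}$ on the complement of a guard region $G$ around the receiver; since the one-per-square field is more regular (less variable interference), the Poisson surrogate increases ${\rm E}[e^{-sI}]$ and hence overestimates the throughput. For the PPP, the probability generating functional with unit-mean exponential marks gives
\[ {\rm E}\!\left[e^{-sI}\right]=\exp\!\left(-\lambda_{\rm CA}\int_{\mathds{R}^2\setminus G}\frac{s}{|x|^\alpha+s}\,dx\right). \]
The true guard region is the square $S_0=[-r_{\rm gz},r_{\rm gz}]^2$ of area $4r_{\rm gz}^2$. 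Because $x\mapsto s/(|x|^\alpha+s)$ is radially decreasing, the centered disk $\mathcal{D}$ of \emph{equal area} (radius $a=2r_{\rm gz}/\sqrt{\pi}$) maximizes $\int_{G}$ among equal-area sets by the radial rearrangement inequality; it therefore removes the most interference, so substituting $\mathcal{D}$ for $S_0$ preserves the upper-bound direction and yields a closed form.

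Finally I would evaluate the two integrals obtained from $\int_{\mathds{R}^2\setminus\mathcal{D}}=\int_{\mathds{R}^2}-\int_{\mathcal{D}}$. In polar coordinates the full-plane term reduces via $\int_0^\infty t^{2/\alpha-1}/(t+s)\,dt=s^{2/\alpha-1}\pi/\sin(2\pi/\alpha)$ together with the reflection formula $\Gamma(2/\alpha)\Gamma(1-2/\alpha)=\pi/\sin(2\pi/\alpha)$, giving the first term $\tfrac{2(\beta r_{\rm tr}^\alpha)^{2/\alpha}\Gamma(2/\alpha)\Gamma(1-2/\alpha)}{\alpha}$ inside the exponent of (\ref{eq:throughput_CA}). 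The disk term $\int_0^{a}sr/(r^\alpha+s)\,dr$ is evaluated by expanding $(1+r^\alpha/s)^{-1}$ in a power series and matching term-by-term to $\tfrac{a^2}{2}\,{}_2F_1(1,\tfrac{2}{\alpha};1+\tfrac{2}{\alpha};-a^\alpha/s)$; with $a=2r_{\rm gz}/\sqrt{\pi}$ and $a^\alpha/s=(2r_{\rm gz}/(\sqrt{\pi}r_{\rm tr}))^{\alpha}/\beta$ this gives exactly the ${}_2F_1$ correction, and $\pi a^2=4r_{\rm gz}^2$ supplies the coefficient $(2r_{\rm gz}/\sqrt{\pi})^2$. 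Assembling $\lambda_{\rm CA}\,e^{-\beta r_{\rm tr}^\alpha/\rho}\,{\rm E}[e^{-sI}]$ then reproduces (\ref{eq:throughput_CA}). The main obstacle is the third step: making the Poisson surrogate for the one-per-square active field rigorous and pinning down the inequality direction (a convex-ordering/variance-domination argument), whereas the equal-area guard-disk replacement is a clean rearrangement bound and the integral evaluations are routine.
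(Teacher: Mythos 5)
Your proposal is correct and follows essentially the same route as the paper: the same computation of $\lambda_{\rm CA}$, the same reduction of the success probability to the Laplace transform of the aggregate interference via the exponentially distributed desired channel, the same replacement of the square guard zone by the equal-area disk of radius $2r_{\rm gz}/\sqrt{\pi}$ (justified, as you do, by the fact that the disk removes the closest, most harmful interferers, so the bound direction is preserved), and the same closed-form evaluation of the resulting radial integral (the paper invokes a Gradshteyn--Ryzhik table entry where you use a series expansion, arriving at the identical ${}_2F_1$ expression). The one point of divergence is that the ``main obstacle'' you flag---rigorously justifying a Poisson surrogate for the one-point-per-nonempty-square scheduled field via a convex-ordering argument---is not addressed in the paper at all: its proof simply applies the PPP probability generating functional with intensity $\lambda_{\rm CA}$ to the post-scheduling field, remarking after the theorem that the same expression arises from an independently thinned PPP, so what you treat as the hard step is handled there by modeling assumption rather than by proof.
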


Note that the same throughput can also be obtained by considering an
alternative but similar system where the transmitting nodes around
the square guard zone transmit with probability $(1 - e^{-4 \lambda
r_{\rm gz}^2})/(4 \lambda r_{\rm gz}^2)$ at each time slot. Thus the
transmission probability of the CA protocol adaptively adjusts to
the intensity of transmitting nodes.  This is in contrast to the
fixed transmission probability employed in slotted ALOHA.

For the case of dense networks, the throughput of the CA protocol
approaches a constant as $\lambda \to \infty$, and is given by substituting $\lambda_{\rm CA} = \lambda_{\rm CA}^{\infty} = 1/4 r_{\rm gz}^2$ into (\ref{eq:throughput_CA}). This constant value reflects the fact that the CA
protocol ensures that only one node within a square of length $2
r_{\rm gz}$ is allowed to transmit.  This
models, for example, a system using a TDMA protocol when a large
number of transmitting nodes causes the throughput to reach a
saturation point, and the only transmitting nodes are the ones
allocated to the time slot corresponding to the current transmission
period.

Fig.\ \ref{fig:throughput_CA} presents the throughput  vs.\ the
guard zone parameter $r_{\rm gz}$ for different intensities. We see
that the analytical upper bounds in (\ref{eq:throughput_CA})
accurately match the Monte Carlo simulated curves in most cases. We
also notice that there is an optimal value of $r_{\rm gz}$ which
maximizes the throughput of the system. In addition, we see that the dense network throughput, obtained by substituting $\lambda_{\rm CA}^{\infty}$ into (\ref{eq:throughput_CA}), closely
matches the Monte Carlo simulated curves for transmitting node
intensities as small as 0.03.

\begin{figure}[t!]
\centerline{\includegraphics[width=0.7\columnwidth,keepaspectratio]{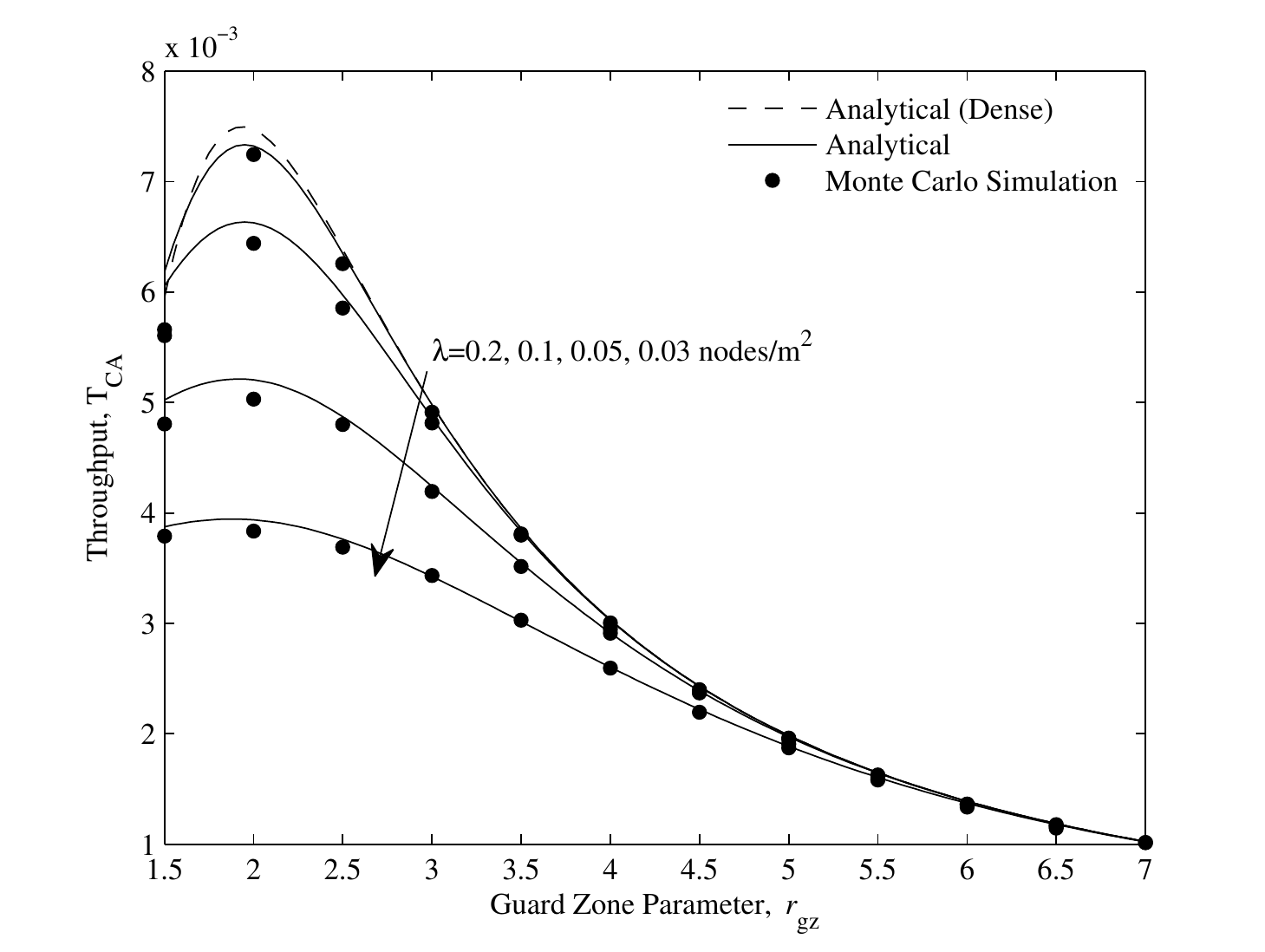}}
\caption{Throughput of the CA protocol vs.\ guard zone parameter with
$\rho=10$ dB, $r_{\rm tr}=1.5$ m, $\alpha=4$ and $\beta=5$ dB.}
\label{fig:throughput_CA}
\end{figure}

\subsection{Comparison}\label{sec:compare}

In this section, we compare the slotted ALOHA MAC protocol employing multiple antennas with the CA MAC protocol
employing single antennas. When considering the CA protocol we choose the
guard zone parameter $r_{\rm gz}$ which maximizes the throughput for
each value of $\lambda$. Also, when considering the slotted ALOHA protocol we only utilize one transmit antenna and multiple receive antennas. Note that our analysis in Sections \ref{sec:per} and \ref{sec:analysis_compare} shows that utilizing more transmit antennas for spatial multiplexing and OSTBC may lead to a higher throughput in different networking scenarios. However, as our purpose is to show the practicality of the slotted ALOHA protocol, we consider the simple case when only one transmit antenna is utilized.  Note that in this scenario, the performance of spatial multiplexing and OSTBC are the same. We assume both protocols use
the same initial intensity of transmitting nodes $\lambda$, thereby
allowing for a fair comparison.

Fig.\ \ref{fig:compare_SA_CA} presents the intensities and
transmission probabilities for which the throughput of slotted ALOHA
is greater than for the CA protocol.  Results are presented for
different antenna configurations, and the respective throughput
values are calculated from (\ref{eq:through_def}) and
(\ref{eq:throughput_CA}). Note that the throughput for slotted ALOHA can also be obtained from  (\ref{eq:through_def}). We clearly see that increasing the number
of receive antennas leads to an increase in the intensity range for which
the slotted ALOHA throughput is greater than the CA
throughput.

\begin{figure}[t!]
\centerline{\includegraphics[width=0.7\columnwidth,keepaspectratio]{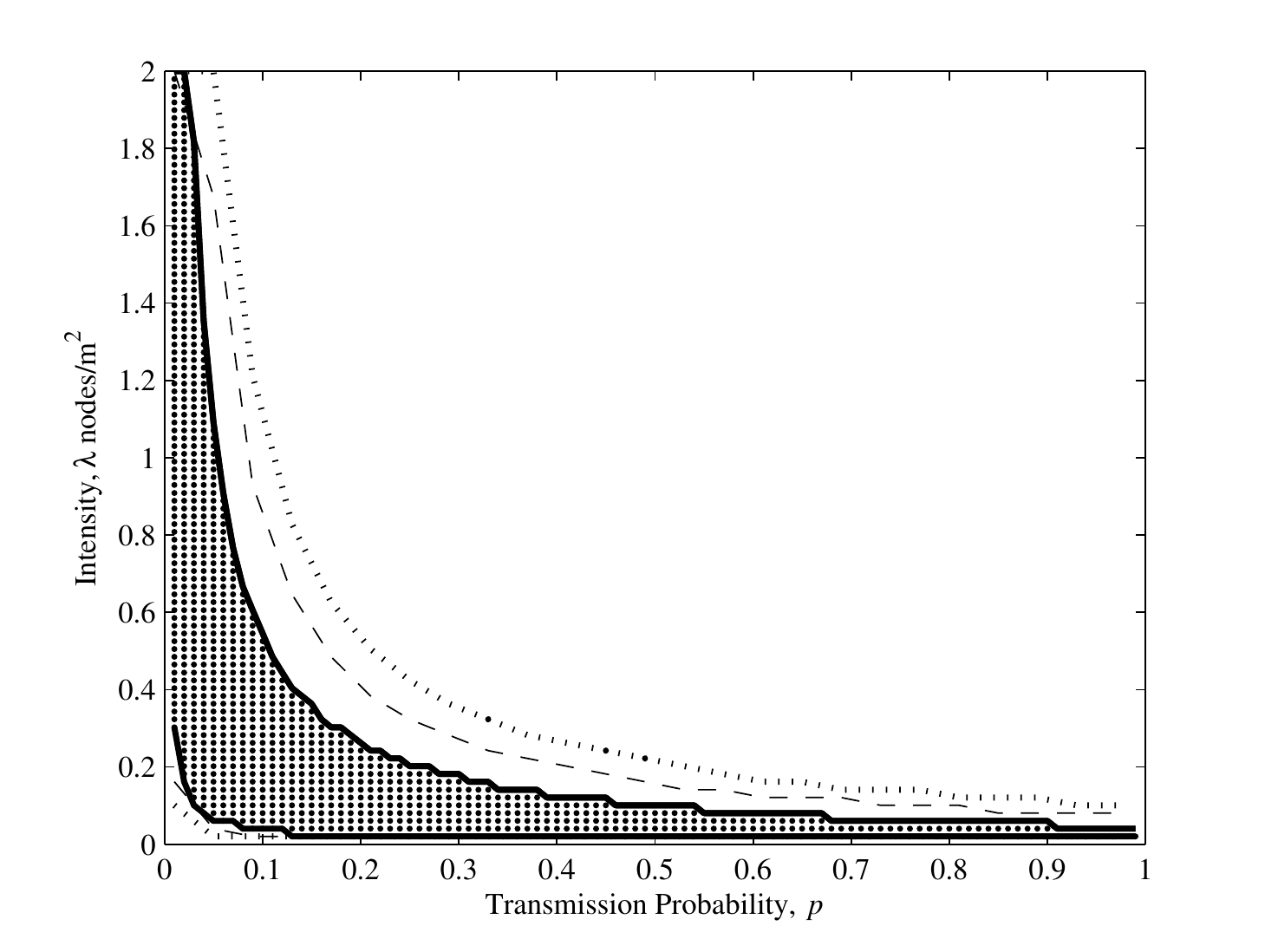}}
\caption{Comparison between slotted ALOHA and the CA protocol. The shaded area
between the solid lines shows the region where the throughput of
slotted ALOHA with $M=1$ in
(\ref{eq:through_def}) is greater than the CA protocol in
(\ref{eq:throughput_CA}); for $N=2$.  The dashed lines shows that the
region expands for $N=3$, and the dotted lines shows further
expansion for $N=4$. Results are shown for $M=1$, $\rho=10$ dB, $r_{\rm
tr}=2$ m, $\alpha=3$ and $\beta=5$ dB.}\label{fig:compare_SA_CA}
\end{figure}


In addition, Fig.\ \ref{fig:compare_SA_CA} shows that slotted ALOHA
outperforms the CA protocol for nearly all transmission probability
values when the network is sparse. However, as the intensity
increases, the transmission probability has to decrease in order for
the throughput of slotted ALOHA to be greater than the CA
protocol. Although not shown in Fig. \ref{fig:compare_SA_CA}, the
throughput is an increasing function of the number of antennas.

\section{Conclusion}

Open-loop point-to-point transmission schemes and slotted ALOHA MAC are practical choices for ad hoc networks, due to their relatively low feedback requirements and decentralized structure, respectively. Within this framework we have analyzed important network performance measures for spatial multiplexing transmission with MRC and ZF receivers, and for OSTBC, using tools from stochastic geometry. We derived new expressions for the outage probability, which were subsequently used to evaluate the network throughput and transmission capacity.

Based on our analysis, we demonstrated that the optimal number of transmit antennas, and the relative performance of the spatial multiplexing and OSTBC systems is dependent on a number of system parameters, including the chosen SINR operating value and node intensity. In particular, from a throughput perspective, our analysis has led to the following design guidelines:
\begin{itemize}
\item For dense networks and high SINR operating values, cyclic antenna diversity codes utilizing the maximum number of transmit antennas is the preferred scheme.
\item For low SINR operating values, spatial multiplexing with MRC receivers, while transmitting with the maximum number of data streams, is the preferred scheme.
\end{itemize}
For applications operating under a strict outage constraint, the transmission capacity is a preferable performance measure.  From a transmission capacity perspective, the following were observed:
\begin{itemize}
\item For sufficiently high path loss exponents, spatial multiplexing is preferred over OSTBC, and vice-versa for low path loss exponents.
\item In the large antenna $N$ regime, and if the self-interference is sufficiently smaller than the SINR operating value:
    \begin{itemize}
        \item MRC and ZF receivers can achieve \emph{linear} scaling, while OSTBC can only achieve \emph{sub-linear} scaling.
    \item For OSTBC, either the Alamouti scheme or SIMO is the optimal scheme, depending on the path loss exponent and number of receive antennas.
    \end{itemize}
\end{itemize}

Finally, we demonstrated the practicality of the multi-antenna slotted ALOHA MAC system in ad hoc networks, by comparing it with a baseline single antenna CA MAC system. We showed the interesting result that in many practical scenarios, the multi-antenna slotted ALOHA MAC system can actually outperform the idealized single antenna CA MAC in terms of network throughput. Further, we demonstrated that increasing the number of antennas can have the benefit of increasing the range of node intensities for which the use of slotted ALOHA MAC outperforms the idealized single antenna CA MAC.

\section{Acknowledgments}
The authors would like to thank Prof. Nihar Jindal for useful
discussions regarding the linear scaling of spatial multiplexing.

\begin{appendix}

\subsection{Derivation of SINR Approximation for OSTBC}\label{app_ostbc_sinr_bound}

We require the following lemmas:

\begin{lemma}\label{lem:exp_dist_equivalent}

Let $\mathbf{h} \dis \mathcal{CN}_{N \times 1}\left(\mathbf{0}_{N
\times 1}, \mathbf{I}_N \right)$ and $\mathbf{g} \dis
\mathcal{CN}_{N \times 1}\left(\mathbf{0}_{N \times 1}, \mathbf{I}_N
\right)$ be independent random variables. Then
\begin{align}\label{eq:exp_ind}
\frac{|\mathbf{h}^\dagger \mathbf{g}|^2}{||\mathbf{h}||^2} \dis {\rm Gamma}(0,1)
\end{align}
is independent of $\mathbf{h}$.
\end{lemma}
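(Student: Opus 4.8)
The plan is to condition on $\mathbf{h}$ and show that, once $\mathbf{h}$ is fixed, the conditional distribution of $|\mathbf{h}^\dagger \mathbf{g}|^2/\|\mathbf{h}\|^2$ is the unit-mean exponential law and, crucially, does not depend on the realization of $\mathbf{h}$. Both the marginal distribution and the independence assertion then drop out of a single standard principle: if the conditional law of a random variable given $\mathbf{h}$ is the same for almost every $\mathbf{h}$, then that random variable has that common law marginally and is independent of $\mathbf{h}$.

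First I would carry out the conditional computation. Fix a realization $\mathbf{h} \neq \mathbf{0}$ (which holds almost surely). Since $\mathbf{g}$ is independent of $\mathbf{h}$ with $\mathbf{g} \dis \mathcal{CN}_{N \times 1}(\mathbf{0}_{N \times 1}, \mathbf{I}_N)$, the scalar $\mathbf{h}^\dagger \mathbf{g}$ is, conditionally on $\mathbf{h}$, a fixed linear functional of a complex Gaussian vector, and is therefore itself complex Gaussian with mean $0$ and variance $\mathbf{h}^\dagger \mathbf{I}_N \mathbf{h} = \|\mathbf{h}\|^2$. Normalizing gives
\begin{align}
\frac{\mathbf{h}^\dagger \mathbf{g}}{\|\mathbf{h}\|} \; \Big| \; \mathbf{h} \;\; \dis \;\; \mathcal{CN}(0,1) ,
\end{align}
a law that is manifestly free of the value of $\mathbf{h}$. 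Taking squared magnitudes, and writing a $\mathcal{CN}(0,1)$ variable as $x + \mathrm{i} y$ with $x,y \sim \mathcal{N}(0,1/2)$ independent, an elementary change of variables shows that $x^2 + y^2$ has density $e^{-t}$ on $t \ge 0$. Hence, conditionally on $\mathbf{h}$,
\begin{align}
\frac{|\mathbf{h}^\dagger \mathbf{g}|^2}{\|\mathbf{h}\|^2} \dis {\rm Gamma}(1,1) ,
\end{align}
i.e.\ the unit-mean exponential, corresponding to shape parameter one in the gamma density recorded earlier in the paper.

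Finally I would extract both conclusions. Because the conditional law just derived is identical for every $\mathbf{h} \neq \mathbf{0}$, the principle quoted above immediately yields that $|\mathbf{h}^\dagger \mathbf{g}|^2/\|\mathbf{h}\|^2$ is unconditionally distributed as ${\rm Gamma}(1,1)$ and is independent of $\mathbf{h}$. I do not anticipate any real technical obstacle: the Gaussianity and the variance computation are routine, and the exponential density of the squared magnitude is standard. The only point deserving care is the logical passage from ``the conditional law does not depend on $\mathbf{h}$'' to ``independence of $\mathbf{h}$'', which is precisely where the content of the lemma resides; an equivalent and perhaps more transparent route to this step is to note that choosing any unitary $\mathbf{U}$ with first column $\mathbf{h}/\|\mathbf{h}\|$ makes $\mathbf{h}^\dagger \mathbf{g}/\|\mathbf{h}\|$ equal to the first entry of $\mathbf{U}^\dagger \mathbf{g}$, which shares the rotation-invariant law $\mathcal{CN}_{N\times 1}(\mathbf{0}_{N \times 1}, \mathbf{I}_N)$ of $\mathbf{g}$.
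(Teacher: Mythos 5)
Your proof is correct, but it is worth noting that the paper itself offers no argument for this lemma at all---its ``proof'' is a one-line citation to Shah and Haimovich \cite{shah00}. Your self-contained route is therefore a genuine addition: you condition on $\mathbf{h}$, observe that $\mathbf{h}^\dagger\mathbf{g}/\|\mathbf{h}\|$ given $\mathbf{h}$ is $\mathcal{CN}(0,1)$ with a conditional law that does not depend on the realization of $\mathbf{h}$, and then invoke the standard principle that a constant conditional law yields both the marginal distribution and independence. Your closing remark---choosing a unitary $\mathbf{U}$ whose first column is $\mathbf{h}/\|\mathbf{h}\|$ and using the rotation invariance of $\mathbf{g}$---is essentially the argument in the cited reference, so the two approaches coincide in substance; what yours buys is that the lemma no longer rests on an external source, at the cost of a few lines. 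One point to flag explicitly: the lemma as printed asserts a ${\rm Gamma}(0,1)$ law, which is not a proper distribution (shape parameter zero), and what you derive is ${\rm Gamma}(1,1)$, i.e.\ the unit-mean exponential. Yours is the correct statement, and it is the one consistent with how the lemma is actually used later in the paper: in the proof of Lemma \ref{lemm:gamma_approx_OSTBC}, the extreme case $T \to 0$ produces the bound $Z^{\rm ub} \dis {\rm Gamma}(1,1)$, which is precisely the distribution of $|\mathbf{h}^\dagger\mathbf{g}|^2/\|\mathbf{h}\|^2$. So your proof quietly corrects a typo in the statement rather than conflicting with it.
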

\begin{proof}
See \cite{shah00}.
\end{proof}

\begin{lemma}\label{lem:add_term_bound}
Let $\mathbf{h} \dis \mathcal{CN}_{N \times 1}\left(\mathbf{0}_{N
\times 1}, \mathbf{I}_N \right)$, \newline $\mathbf{g} \dis \mathcal{CN}_{N
\times 1}\left(\mathbf{0}_{N \times 1}, \mathbf{I}_N \right)$, $x
\dis \mathcal{CN}(0,1)$ and $y \dis \mathcal{CN}(0,1)$, all mutually
independent. Then
\begin{align}\label{eq:lem_cond_inequality}
|\mathbf{h}^\dagger \mathbf{g}|^2 <_{\rm st} |\mathbf{h}^\dagger \mathbf{g} + xy|^2
\end{align}
where
\begin{align}
A < _{\rm st} B \; \; \Longrightarrow \; \; {\rm F}_A(z) > {\rm
F}_B(z) , \quad {\rm for} \; z \in \mathds{R}^{+},
\end{align}
for random variables $A$ and $B$ with c.d.f.s ${\rm F}_A(\cdot )$
and ${\rm F}_B(\cdot )$ respectively.
\end{lemma}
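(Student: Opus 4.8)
The plan is to establish the stochastic ordering by conditioning on the ``outer'' randomness $(\mathbf{h},x,y)$, which collapses the comparison to one between a central and a noncentral chi-squared random variable, each with two degrees of freedom. First I would observe that, conditioned on $\mathbf{h}$, the inner product $Z := \mathbf{h}^\dagger \mathbf{g}$ is a circularly symmetric complex Gaussian, $Z \dis \mathcal{CN}(0,\|\mathbf{h}\|^2)$, and that this conditional law is unaffected by further conditioning on $x$ and $y$, since $\mathbf{g}$ is independent of $(x,y)$. Hence, writing $c := xy$, conditioned on $(\mathbf{h},x,y)$ the left-hand quantity is distributed as $|Z|^2$ and the right-hand quantity as $|Z+c|^2$, where $Z \dis \mathcal{CN}(0,\|\mathbf{h}\|^2)$ and $c$ is a fixed complex scalar.

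The core step is then the conditional claim that $|Z|^2 \leq_{\rm st} |Z+c|^2$ for every fixed $c$. Decomposing into real and imaginary parts, $|Z+c|^2 = (Z_R+c_R)^2 + (Z_I+c_I)^2$ with $Z_R,Z_I$ independent $\mathcal{N}(0,\|\mathbf{h}\|^2/2)$; equivalently, $2|Z+c|^2/\|\mathbf{h}\|^2$ is a noncentral $\chi^2_2$ variable with noncentrality $2|c|^2/\|\mathbf{h}\|^2$, whereas $2|Z|^2/\|\mathbf{h}\|^2$ is central $\chi^2_2$. I would invoke the standard fact that the noncentral chi-squared law is stochastically increasing in its noncentrality parameter; alternatively this follows elementarily, since for a real Gaussian $G \dis \mathcal{N}(0,\sigma^2)$ and real $a$ one differentiates ${\rm Pr}\left((G+a)^2 \le z\right) = \Phi\!\left(\tfrac{\sqrt{z}-a}{\sigma}\right) - \Phi\!\left(\tfrac{-\sqrt{z}-a}{\sigma}\right)$ in $a$ to see that $(G+a)^2$ is stochastically increasing in $|a|$, and a sum of independent stochastically ordered variables is again ordered. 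This yields ${\rm F}_{|Z+c|^2}(z) \le {\rm F}_{|Z|^2}(z)$ for all $z$, with strict inequality for $z>0$ whenever $c\neq 0$.

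Finally I would remove the conditioning by mixing. Since the stochastic order is preserved under mixtures, integrating the conditional c.d.f.\ inequality over the joint law of $(\mathbf{h},x,y)$ gives ${\rm F}_{|\mathbf{h}^\dagger\mathbf{g}+xy|^2}(z) = {\rm E}\!\left[{\rm F}_{|Z+c|^2}(z)\right] \le {\rm E}\!\left[{\rm F}_{|Z|^2}(z)\right] = {\rm F}_{|\mathbf{h}^\dagger\mathbf{g}|^2}(z)$. Because $x$ and $y$ are continuous, $c=xy\neq 0$ almost surely, so the strict conditional inequality holds on a set of full measure and the unconditional inequality is strict for every $z \in \mathds{R}^+$; this is exactly $|\mathbf{h}^\dagger\mathbf{g}|^2 <_{\rm st} |\mathbf{h}^\dagger\mathbf{g}+xy|^2$.

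I expect the key insight and the main obstacle to be the conditional chi-squared comparison: recognizing that conditioning on $(\mathbf{h},x,y)$ reduces the problem to a clean central-versus-noncentral $\chi^2_2$ statement, and then establishing monotonicity in the noncentrality parameter carefully. The naive intuition that ``adding a deterministic shift before squaring can only increase the value'' is in fact \emph{false} for a deterministic base point but becomes true once the base is Gaussian, so this is precisely where care is required; the mixing and strictness arguments are otherwise routine.
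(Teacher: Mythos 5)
Your proof is correct, but it takes a genuinely different route from the paper's. The paper conditions only on $(\mathbf{h},x)$ and keeps \emph{both} $\mathbf{g}$ and $y$ random: conditionally, $\mathbf{h}^\dagger\mathbf{g} \dis \mathcal{CN}\left(0,\|\mathbf{h}\|^2\right)$ and $xy \dis \mathcal{CN}\left(0,|x|^2\right)$ are independent, so their sum is $\mathcal{CN}\left(0,\|\mathbf{h}\|^2+|x|^2\right)$, and the two sides become the scaled exponentials $\|\mathbf{h}\|^2 Y_1$ and $\left(\|\mathbf{h}\|^2+|x|^2\right) Y_2$ with $Y_1, Y_2 \dis {\rm Exp}(1)$; the ordering then reduces to the trivial monotonicity of $1-e^{-z/\theta}$ in the scale $\theta$, and unconditioning finishes. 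You instead freeze $y$ as well, which turns the comparison into central versus noncentral $\chi^2_2$, and you must then supply the (nontrivial, though standard) stochastic monotonicity of the noncentral chi-square in its noncentrality parameter, which you prove by differentiating the Gaussian c.d.f. What the paper's conditioning buys is brevity: the noncentrality never appears, because the Gaussianity of $y$ converts the shift $xy$ into extra Gaussian \emph{variance} rather than a mean offset. What your conditioning buys is generality: your argument never uses the distribution of $x$ and $y$ beyond independence from $\mathbf{g}$ and $xy \neq 0$ almost surely, so it actually proves the lemma for an arbitrary independent, almost-surely nonzero complex shift $c$ in place of $xy$ --- the Gaussian-product structure is irrelevant to your proof but essential to the paper's. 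Your treatment of strictness (propagating the strict c.d.f.\ inequality through the independent sum and through the mixture, using $xy \neq 0$ a.s.) is sound and matches the strict form of $<_{\rm st}$ as defined in the lemma, and your closing caveat is apt: the pointwise claim behind the naive intuition is false, and it is precisely the Gaussian base point that rescues the distributional statement.
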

\begin{proof}
Setting $X_1=\mathbf{h}^\dagger \mathbf{g}$, the distribution of
$X_1$ conditioned on $\mathbf{h}$ is $\mathcal{CN} \left(0,
||\mathbf{h}||_F^2\right)$. Thus, $|X_1|^2$ conditioned on
$\mathbf{h}$ has the same distribution as  $||\mathbf{h}||_F^2 Y_1$,
where $Y_1 \dis {\rm Exp}(1)$. Similarly, setting
$X_2=\mathbf{h}^\dagger \mathbf{g} + xy$, $|X_2|^2$ conditioned on
$\mathbf{h}$ has the same distribution as $\left(||\mathbf{h}||_F^2
+ |x|^2\right) Y_2$, where $Y_2 \dis {\rm Exp}(1)$.
(\ref{eq:lem_cond_inequality}) now follows.
\end{proof}

\begin{lemma}\label{lem:sum_gamma_equiv_dist}
Let $X_i \dis {\rm Gamma}\left(k_\ell,\theta\right)$, for
$i=1,\ldots,L$, all mutually independent. Then
\begin{align}
\sum_{i=1}^L X_i \dis {\rm Gamma}\left( \sum_{i=1}^L k_i ,\theta\right) \; .
\end{align}
\end{lemma}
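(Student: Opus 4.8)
The plan is to use the moment generating function (MGF), exploiting the fact that the MGF of a sum of independent random variables factorizes into the product of the individual MGFs, together with the uniqueness theorem which guarantees that an MGF—when it exists in a neighborhood of the origin—determines the distribution. This is the cleanest route because the common scale parameter $\theta$ makes the resulting product collapse into a single gamma MGF.

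First I would compute the MGF of a single ${\rm Gamma}(k,\theta)$ random variable $X$, using the density already given in the paper, namely $f_X(x) = x^{k-1} e^{-x/\theta}/(\Gamma(k)\theta^k)$ for $x \ge 0$. Collecting the exponential terms and recognizing a gamma integral gives
$$M_X(t) = {\rm E}\left[e^{tX}\right] = \int_0^\infty e^{tx} \frac{x^{k-1} e^{-x/\theta}}{\Gamma(k)\theta^k}\, dx = (1-\theta t)^{-k},$$
valid for $t < 1/\theta$.

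Next, since the $X_i$ are mutually independent and all share the common scale $\theta$, the MGF of their sum is the product of the individual MGFs, which telescopes neatly:
$$M_{\sum_i X_i}(t) = \prod_{i=1}^L M_{X_i}(t) = \prod_{i=1}^L (1-\theta t)^{-k_i} = (1-\theta t)^{-\sum_{i=1}^L k_i},$$
again valid for $t < 1/\theta$. This is precisely the MGF of a ${\rm Gamma}\left(\sum_{i=1}^L k_i,\theta\right)$ random variable. Since the MGF exists on an open interval containing the origin, the uniqueness theorem then yields $\sum_{i=1}^L X_i \dis {\rm Gamma}\left(\sum_{i=1}^L k_i,\theta\right)$.

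There is no serious obstacle here, as this is a classical fact; the only points requiring care are the routine evaluation of the single-variable gamma integral and the observation that it is the shared scale parameter that makes the product of MGFs reduce to a single gamma form (had the scales differed, the sum would not in general be gamma-distributed). If one prefers to avoid invoking the uniqueness theorem, an equivalent route is a straightforward induction on $L$: the base case $L=2$ follows by directly convolving the two densities and recognizing a Beta integral, and the inductive step reduces to this same two-variable computation.
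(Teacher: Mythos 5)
Your proof is correct: the MGF computation, the factorization under independence, the collapse to $(1-\theta t)^{-\sum_i k_i}$ due to the shared scale, and the appeal to uniqueness on an open interval containing the origin are all sound. Note that the paper does not actually prove this lemma---it simply cites a standard probability text (Papoulis and Pillai)---and the MGF/characteristic-function argument you give is precisely the classical proof found in such references, so your proposal supplies the standard argument the paper leaves implicit.
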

\begin{proof}
See \cite{papoulis02}.
\end{proof}

\begin{lemma}\label{lem:gamma_increase_shape}
Let $X_1 \dis {\rm Gamma}(k_1, \theta)$ and $X_2 \dis {\rm Gamma}(k_2, \theta)$. If $k_1 > k_2$, then
\begin{align}
{\rm E}\left[X_1^n \right] > {\rm E}\left[X_2^n \right]
\end{align}
for any $n \in \mathds{Z}^+$.
\end{lemma}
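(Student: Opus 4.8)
The plan is to reduce the claim to a transparent monotonicity statement about the moments of a Gamma variable. Using the p.d.f.\ recorded earlier, namely $f_X(x) = x^{v-1} e^{-x/\theta}/(\Gamma(v)\theta^v)$, a single direct integration gives the $n$-th moment of $X \dis {\rm Gamma}(k,\theta)$ in closed form:
$${\rm E}[X^n] = \frac{1}{\Gamma(k)\theta^k}\int_0^\infty x^{n+k-1} e^{-x/\theta}\,dx = \theta^n \frac{\Gamma(k+n)}{\Gamma(k)},$$
where the integral is evaluated by recognizing the integrand as an unnormalized Gamma density of shape $k+n$ and scale $\theta$. This holds for every real $n>-k$, and in particular for $n \in \mathds{Z}^+$.

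First I would specialize to integer $n$, where the crucial simplification occurs. Applying the functional equation $\Gamma(z+1)=z\,\Gamma(z)$ a total of $n$ times converts the ratio of Gamma functions into a finite product of linear factors,
$$\frac{\Gamma(k+n)}{\Gamma(k)} = \prod_{j=0}^{n-1}(k+j),$$
so that ${\rm E}[X^n] = \theta^n \prod_{j=0}^{n-1}(k+j)$. Each factor $k+j$ is positive and strictly increasing in $k$ on $k>0$, and $\theta^n>0$; hence the whole expression is a product of positive, strictly increasing functions of the shape parameter.

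Then I would conclude directly. Given $k_1 > k_2 > 0$, every factor satisfies $k_1+j > k_2+j > 0$ for $j=0,\ldots,n-1$, so the strictly ordered and positive factors give $\prod_{j=0}^{n-1}(k_1+j) > \prod_{j=0}^{n-1}(k_2+j)$. Multiplying both sides by $\theta^n>0$ yields ${\rm E}[X_1^n] > {\rm E}[X_2^n]$, which is exactly the asserted inequality.

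I do not anticipate a genuine obstacle here: the argument is essentially a one-line closed-form computation followed by termwise comparison. The only point worth flagging is the hypothesis $n \in \mathds{Z}^+$, which is precisely what collapses $\Gamma(k+n)/\Gamma(k)$ into a finite product of linear terms and thereby makes the strict monotonicity self-evident. For non-integer $n$ one would instead have to argue via the strict monotonicity of the digamma function (log-convexity of $\Gamma$), but this refinement is unnecessary for the integer moments invoked in the throughput discussion surrounding \emph{Lemma \ref{lem:gamma_increase_shape}}.
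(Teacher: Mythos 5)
Your proof is correct, and it shares the paper's starting point but diverges at the essential step. The paper also begins from the closed-form moment ${\rm E}[X^n]=\Gamma(n+k)\theta^n/\Gamma(k)$, but then establishes monotonicity in the shape parameter by differentiating this expression with respect to $k$ and asserting positivity of the derivative --- an argument that, when unpacked, rests on $\frac{\partial}{\partial k}\bigl[\Gamma(n+k)/\Gamma(k)\bigr]=\frac{\Gamma(n+k)}{\Gamma(k)}\left(\psi(n+k)-\psi(k)\right)>0$, i.e.\ on the strict monotonicity of the digamma function $\psi$. You instead exploit the hypothesis $n\in\mathds{Z}^+$ to collapse the ratio into the finite product $\prod_{j=0}^{n-1}(k+j)$ and compare factor by factor, which is entirely elementary: no calculus, no special-function monotonicity, just positivity and ordering of linear terms. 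What your route buys is self-containedness and transparency for exactly the case the lemma asserts (and the case actually used in the throughput discussion, where the shape parameter $N_I/M$ varies); what the paper's route buys is generality, since the derivative argument proves monotonicity of ${\rm E}[X^n]$ in $k$ for arbitrary real $n>0$, not just integer moments --- a refinement you correctly flag as available but unnecessary here. Both proofs are valid; yours is arguably the cleaner one for the stated claim.
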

\begin{proof}
The $n$th moment of $X \dis {\rm Gamma}\left(k,\theta\right)$  is
\begin{align}\label{eq:gamma_moments}
{\rm E}\left[X^n \right] &= \frac{\Gamma(n+k) \theta^{n} }{\Gamma(k)
}  \; .
\end{align}
The proof follows upon showing the positivity of the derivative of
(\ref{eq:gamma_moments}) w.r.t.\ $k$.
\end{proof}

\begin{lemma}\label{lemm:gamma_approx_OSTBC}
Let $\mathbf{h} \dis \mathcal{CN}_{L \times 1}\left(\mathbf{0}_{L
\times 1}, \mathbf{I}_L \right)$, $\mathbf{g} \dis \mathcal{CN}_{L
\times 1}\left(\mathbf{0}_{L \times 1}, \mathbf{I}_L \right)$ and
$\mathbf{y} \dis \mathcal{CN}_{T \times 1}\left(\mathbf{0}_{T \times
1}, \mathbf{I}_T \right)$, all mutually independent. Then $Z= \frac{
|\mathbf{h}^\dagger \mathbf{g}|^2}{ ||\mathbf{h}||^2 +
||\mathbf{y}||^2} $ has approximately the same distribution as
\begin{align}
Z^{\approx} \dis {\rm Gamma}\left(\frac{L}{L+T},1\right) .
\end{align}
\end{lemma}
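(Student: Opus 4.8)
The plan is to turn the approximation into an \emph{exact} product of two independent standard random variables via Lemma~\ref{lem:exp_dist_equivalent}, and then to fix the one free parameter of the approximating Gamma law by matching the mean. First I would set $W := |\mathbf{h}^\dagger\mathbf{g}|^2/\|\mathbf{h}\|^2$. By Lemma~\ref{lem:exp_dist_equivalent}, $W$ is a unit exponential, i.e.\ $W\dis{\rm Gamma}(1,1)$, and is independent of $\mathbf{h}$; since $\mathbf{y}$ is independent of $(\mathbf{h},\mathbf{g})$, the triple $(W,\mathbf{h},\mathbf{y})$ is mutually independent. Substituting $|\mathbf{h}^\dagger\mathbf{g}|^2 = W\|\mathbf{h}\|^2$ gives the exact representation
\[
Z \dis W\cdot\frac{\|\mathbf{h}\|^2}{\|\mathbf{h}\|^2+\|\mathbf{y}\|^2} =: W\cdot B ,
\]
in which $W$ and $B$ are independent because $B$ is a function of $(\mathbf{h},\mathbf{y})$ only.

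Next I would identify the law of $B$. Writing $\|\mathbf{h}\|^2$ and $\|\mathbf{y}\|^2$ as sums of $L$ and $T$ independent unit exponentials respectively and applying Lemma~\ref{lem:sum_gamma_equiv_dist} gives $\|\mathbf{h}\|^2\dis{\rm Gamma}(L,1)$ and $\|\mathbf{y}\|^2\dis{\rm Gamma}(T,1)$, independently. The standard Gamma--Beta identity then yields $B\dis{\rm Beta}(L,T)$, independent of $W$. Thus $Z\dis WB$ \emph{exactly}, with $W\dis{\rm Gamma}(1,1)$ and $B\dis{\rm Beta}(L,T)$ independent. Two observations guide the choice of approximant and pin down its parameters: $WB$ is supported on $[0,\infty)$, and at $T=0$ the Beta factor collapses to $B\equiv1$, so $Z\dis W\dis{\rm Gamma}(1,1)$, which is exactly the claimed law evaluated at shape $L/(L+T)=1$ and scale $1$. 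This makes the approximation exact on the boundary $T=0$ and fixes the scale parameter at $1$.

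It then remains to select the single free parameter, the Gamma shape. Using ${\rm E}[W]=1$ and ${\rm E}[B]=L/(L+T)$, independence gives ${\rm E}[Z]=L/(L+T)$; since a scale-$1$ Gamma of shape $k$ has mean $k$ (c.f.\ the moment formula in Lemma~\ref{lem:gamma_increase_shape}), matching means forces $k=L/(L+T)$, so $Z^{\approx}\dis{\rm Gamma}\left(\tfrac{L}{L+T},1\right)$. This value is also consistent with the exact identity ${\rm Gamma}(L+T,1)\cdot{\rm Beta}(L,T)\dis{\rm Gamma}(L,1)$: a Gamma is the \emph{exact} product law when the first factor carries shape $L+T$, and replacing that shape by the actual value $1$ scales the output shape from $L$ down to $L/(L+T)$. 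The main obstacle, and the reason the statement is an approximation rather than an identity, is that $WB$ is genuinely non-Gamma: its higher moments (e.g.\ ${\rm E}[Z^2]=2\,{\rm E}[B^2]$) do not coincide with those of $Z^{\approx}$ except when $T=0$, so no reparametrisation makes the two laws equal. I would therefore justify the step on three grounds: the mean is matched exactly, the law is exact at $T=0$, and the resulting composite OSTBC SINR approximation is shown to be numerically accurate in Section~\ref{sec:per}. The Gamma family is the natural tractable target because its shape and scale are precisely the quantities that feed into Table~\ref{table:gamma_param} and hence into Theorem~\ref{lemm_general_cdf}.
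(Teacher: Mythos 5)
Your argument is correct and every distributional step checks out, but it takes a genuinely different route from the paper's own justification. You use \emph{Lemma \ref{lem:exp_dist_equivalent}} to obtain the exact factorization $Z \dis W B$, where $W = |\mathbf{h}^\dagger \mathbf{g}|^2/||\mathbf{h}||^2 \dis {\rm Gamma}(1,1)$ (implicitly correcting the evident typo ``${\rm Gamma}(0,1)$'' in that lemma's statement) is independent of $B = ||\mathbf{h}||^2/(||\mathbf{h}||^2+||\mathbf{y}||^2) \dis {\rm Beta}(L,T)$, and you then pin down the one free shape parameter by mean matching, checking exactness on the boundary $T=0$. The paper never identifies the Beta factor. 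Instead it combines the same mean-matching observation with a stochastic-ordering argument: \emph{Lemmas \ref{lem:exp_dist_equivalent}} and \emph{\ref{lem:gamma_increase_shape}} sandwich $Z$ between the two extreme members of the approximating family, $\lim_{\varsigma \to 0}{\rm Gamma}(\varsigma,1)$ and ${\rm Gamma}(1,1)$; exactness is then verified at \emph{both} extremes, $L \to 0$ and $T \to 0$ (you check only $T \to 0$, although the $L \to 0$ case also follows trivially from your representation since ${\rm Beta}(L,T)$ collapses to $0$); and finally \emph{Lemma \ref{lem:add_term_bound}} applied to $Z$ together with \emph{Lemma \ref{lem:gamma_increase_shape}} applied to $Z^{\approx}$ shows that, for fixed $L+T$, both laws move monotonically in the same direction as $L$ varies, so the approximant tracks $Z$ across the whole parameter range rather than merely at the matched mean. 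What your route buys: an exact, structural description of the law of $Z$ (a unit exponential with independent ${\rm Beta}(L,T)$ scale), a one-line mean computation, and a concrete quantification of where the approximation fails (${\rm E}[Z^2] = 2\,{\rm E}[B^2]$ versus $k(k+1)$ for the approximant, equal only at $T=0$). What the paper's route buys: the second boundary check and the monotone-interpolation argument, which lend qualitative support to the specific shape choice $L/(L+T)$ away from the boundaries---something mean matching alone cannot supply. Since the statement asserts an approximation rather than an identity, neither argument is (or could be) a proof; both ultimately rest on the numerical accuracy demonstrated in Section \ref{sec:per}.
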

\begin{proof}
We justify this approximation by first noting that
${\rm E}[Z] = {\rm E}[Z^{\approx}]$. Also, applying \emph{Lemma \ref{lem:exp_dist_equivalent}} and
\emph{Lemma \ref{lem:gamma_increase_shape}} to $Z$ yields: $Z^{\rm
lb} < Z \le Z^{\rm ub}$, where $Z^{\rm lb} \dis \lim_{\varsigma \to
0} {\rm Gamma}\left(\varsigma,1\right)$ and $Z^{\rm ub} \dis {\rm
Gamma}\left(1,1\right)$. Now, $Z \to Z^{\rm lb}$ when $L \to 0$ and
$Z \to Z^{\rm ub}$ when $L \to L+T \Rightarrow T \to 0$, implying
that $Z \to Z^{\approx}$ at the two extremes: $\frac{L}{L+T} \to 0$
and $\frac{L}{L+T} \to 1$. For values away from these two extremes,
our approximation is motivated by noting that for a fixed $L+T$,
applying \emph{Lemma \ref{lem:add_term_bound}} to $Z$ and
\emph{Lemma \ref{lem:gamma_increase_shape}} to $Z^{\approx}$
reveals, respectively, that $Z$ and $Z^{\approx}$ both increase with
$L$.
\end{proof}

We are now in a position to derive our approximation for the OSTBC
SINR. We first outline the main issues which make dealing with the
exact SINR (\ref{eq:ostbc_sinr_exact}) difficult, and provide some
justification for our approximations which are used to overcome
these.
\begin{itemize}
\item The first challenge is caused by mutual dependencies between random variables in the SINR. In particular, the normalized interference power, $\mathcal{K}_{\ell,\sum}$, is comprised of a sum of dependent random variables. As in \cite{hunter07,choi07}, we will neglect this mutual dependence, which is justified in \cite{choi07} by noticing that the common terms in $\mathcal{K}_{\ell,\sum}$ are multiplied by independent random variables, so they should be ``nearly independent''.  In addition, the numerator and denominator in (\ref{eq:ostbc_sinr_exact}) are dependent. Once again, as in \cite{choi07}, we neglect this dependence, with the argument that there are sufficient independent terms in the numerator and denominator that the dependence is weak.
\item
The second challenge is the difficulty in computing the exact
marginal distribution of $\mathcal{K}_{\ell,\sum}$ analytically,
even under the independence assumption discussed above. This appears
intractable, and to address this we seek an accurate approximation
by fitting a gamma distribution. Such distributions are known to be
exact for various cases and are expected to be quite accurate in
general. They are also analytically friendly, allowing us to invoke
the useful properties given in the lemmas above.
\end{itemize}

To most clearly illustrate our methods, we will first present our
derivations for a specific OSTBC code; namely, the OSTBC code in
(\ref{eq:ostbc_code3}). We will then give a brief discussion to
highlight how the derivation extends to general OSTBC codes. It is
convenient to define the operator ${\rm vecdim}(\cdot )$, which
takes as input the inner product of two vectors with equal
dimension, and returns the dimension of these vectors.

\subsubsection{Derivation for the OSTBC Code in (\ref{eq:ostbc_code3})} \label{subsec:DerivEx}

Without loss of generality, we focus on decoding $x_{0,1}$. The
$\ell$th row of $\varsigma_k(\mathbf{V}) $ and $\mathbf{M}_k$ are
given respectively by
\begin{align}\label{eq:varsigma3}
& \varsigma_k(\mathbf{V})_{\ell} = \left[\mathbf{V}_{\ell,1}, \hspace{0.2cm} \mathbf{V}_{\ell,2}^*, \hspace{0.2cm}-\mathbf{V}_{\ell,3}^*, \hspace{0.2cm} -\mathbf{V}_{\ell,4}\right] \hspace{0.2cm} {\rm and} \notag \\ &
\left(\mathbf{M}_k\right)_{\ell} = \left[h_{0,\ell,1}^*, \hspace{0.2cm}  h_{0,\ell,2}, \hspace{0.2cm}  h_{0,\ell,3}, \hspace{0.2cm}  h_{0,\ell,4}^* \right] \; .
\end{align}
Substituting (\ref{eq:ostbc_code3}) and (\ref{eq:varsigma3}) into
(\ref{eq:data_est_ostbca}), the data estimate for $x_{0,1}$ can be
written as
\begin{align}\label{eq:data_est_ostbc3}
\hat{x}_{0,1} = ||\mathbf{H}_0||_F^2 x_{0,1} +  \sum_{D_{\ell} \in \Phi} \sqrt{\frac{1}{|D_{\ell} |^\alpha}} I_{\ell}+ \sum_{k=1}^{N}\left(h_{0,k,1}^*n_{k,1} + h_{0,k,2} n_{k,2}^* - h_{0,k,3} n_{k,3}^* - h_{0,k,4}^* n_{k,4} \right) \notag \\
\end{align}
where
\begin{align}\label{eq:int_ostbc3}
I_{\ell}  &= W_{\ell,1} x_{\ell,1} + W_{\ell,2}
x_{\ell,2}+Z_{\ell,2} x_{\ell,2}^* +W_{\ell,3} x_{\ell,3}+Z_{\ell,3}
x_{\ell,3}^* \; ,
\end{align}
with
\begin{align}\label{eq:wizi1}
W_{\ell,1}& =\sum_{k=1}^{N} \left(h_{0,k,1}^* h_{\ell,k,1} + h_{0,k,2}
h_{\ell,k,2}^*  +h_{0,k,3} h_{\ell,k,3}^* + h_{0,k,4}^*
h_{\ell,k,4}\right) \; ,
\end{align}
\begin{align}\label{eq:wizi2}
& W_{\ell,2}=\sum_{k=1}^{N}\left(h_{0,k,1}^* h_{\ell,k,2} - h_{0,k,2}
h_{\ell,k,1}^* \right) \notag \\
& Z_{\ell,2}=
\sum_{k=1}^{N}\left(-h_{0,k,3} h_{\ell,k,4}^* + h_{0,k,4}^*
h_{\ell,k,3} \right) \; ,
\end{align}
\begin{align}\label{eq:wizi3}
& W_{\ell,3}=\sum_{k=1}^{N}\left(h_{0,k,1}^* h_{\ell,k,3}  - h_{0,k,3}
h_{\ell,k,1}^* \right) \notag \\
&  Z_{\ell,3}=
\sum_{k=1}^{N}\left(h_{0,k,2} h_{\ell,k,4}^* - h_{0,k,4}^*
h_{\ell,k,2} \right)\; .
\end{align}
Note that the $W_{\ell,q}$'s and $Z_{\ell,q}$'s can each be
interpreted as the inner product of two equal-length vectors; in
each case, with one vector having elements drawn from either $\pm
\mathbf{H}_0$ or $\pm \mathbf{H}_0^\dagger$, and the other vector
having elements drawn from either $\pm \mathbf{H}_\ell$ or $\pm
\mathbf{H}_{\ell}^\dagger$. For example, $W_{\ell,2}$ in
(\ref{eq:wizi2}) can be written as
\begin{align}
W_{\ell,2} = \mathbf{w}_{0} \cdot \mathbf{w}_{\ell}
\end{align}
where $\mathbf{w}_{0}$ and $ \mathbf{w}_{\ell}$ are $2N \times 1$
vectors given respectively by
\begin{align}\label{eq:ostbc_weight_w0}
\mathbf{w}_0 = [h_{0,1,1}^{*} , \, -h_{0,1,2} , \, \ldots \, ,
h_{0,N,1}^{*} , \, -h_{0,N,2}]^T
\end{align}
and
\begin{align}\label{eq:ostbc_weight_well}
\mathbf{w}_{\ell} = [h_{\ell,1,2} , \, h_{\ell,1,1}^* , \, \ldots ,
\, h_{\ell,N,2} , \, h_{\ell,N,1}^*]^T \; .
\end{align}
With this interpretation, it is clear that ${\rm
vecdim}(W_{\ell,1})=4N$ and ${\rm vecdim}(W_{\ell,2})={\rm
vecdim}(Z_{\ell,2})={\rm vecdim}(W_{\ell,3})={\rm
vecdim}(Z_{\ell,3})=2N$.

From (\ref{eq:data_est_ostbc3}), the SINR can be written as
\begin{align}\label{eq:ostbc_SINR_code3}
\gamma_{{\rm OSTBC},1} 
&=\frac{\frac{\rho}{R M r_{\rm tr}^\alpha}
||\mathbf{H}_0||_F^2}{\frac{\rho}{RM} \sum_{D_{\ell} \in
\Phi}\frac{1}{|D_{\ell}|^\alpha} \mathcal{K}_{\ell,\sum} + 1 } \;
\end{align}
where $\mathcal{K}_{\ell,\sum}=\frac{{\rm
E}_{x_{\ell,1},x_{\ell,2},x_{\ell,3}}\left[I_{\ell}
I_{\ell}^*\right]}{||\mathbf{H}_0||_F^2}$.  Note that
\begin{align}\label{eq:int_alt_code3}
\mathcal{K}_{\ell,\sum} &= \mathcal{K}_{\ell,1} +\mathcal{K}_{\ell,2} +\mathcal{K}_{\ell,3}
\end{align}
where
\begin{align}\label{eq:k1_code3}
\mathcal{K}_{\ell,1} = \frac{ |W_{\ell,1}|^2}{ ||\mathbf{H}_0||_F^2}
\end{align}
and, for $q = 2, 3$,
\begin{align}\label{eq:kj}
\mathcal{K}_{\ell,q} &= \frac{{\rm E}_{x_{\ell,q}}\left[\bigl|W_{\ell,q} x_{\ell,q}+Z_{\ell,q} x_{\ell,q}^*\bigr|^2\right] }{||\mathbf{H}_0||_F^2} \notag \\
&=\frac{|W_{\ell,q}|^2  +|Z_{\ell,q}|^2  }{||\mathbf{H}_0||_F^2} .
\end{align}
It is clear that all three terms in (\ref{eq:int_alt_code3}) are
mutually dependent. Moreover, there is dependence within each term,
since $|W_{\ell,q}|^2$, $|Z_{\ell,q}|^2$, and $||\mathbf{H}_0||_F^2$
involve common elements.  As stated previously, we neglect these
dependencies. Thus applying \emph{Lemma
\ref{lemm:gamma_approx_OSTBC}} to
$\frac{|W_{\ell,q}|^2}{||\mathbf{H}_0||_F^2}$ and
$\frac{|Z_{\ell,q}|^2}{||\mathbf{H}_0||_F^2}$, followed by applying
\emph{Lemma \ref{lem:sum_gamma_equiv_dist}} reveals that
$\mathcal{K}_{\ell,q}$ is approximately distributed as
\begin{align}
{\rm Gamma}\left({\rm vecdim} (W_{\ell,q}) + {\rm vecdim}
(Z_{\ell,q}),1\right)
\end{align}
for all $q = 1, 2, 3$.  (Note that for the particular code
considered here, $Z_{\ell,1}=0$.) Now, from (\ref{eq:ostbc_code3})
and (\ref{eq:int_ostbc3}), it can be seen that
\begin{align}
\sum_{q=1}^{N_s} {\rm vecdim}(W_{\ell,q})+{\rm vecdim}(Z_{\ell,q}) = N N_I \, ,
\end{align}
and we can therefore deduce that $\mathcal{K}_{\ell,\sum}$ is
approximately distributed as ${\rm Gamma}\left( N_I /M , N_I
\right)$.
Finally, with the aforementioned assumption that the numerator and
denominator of the SINR (\ref{eq:ostbc_SINR_code3}) are independent,
we obtain the desired approximation.

\subsubsection{Extension to General OSTBC Codes}

Here we give some details of how to generalize the previous
derivation for arbitrary OSTBC codes. Since this extension is quite
straightforward, the discussion is kept brief. The challenge, once
again, is to approximate the distribution of
$\mathcal{K}_{\ell,\Sigma}=\frac{{\rm
E}_{\mathbf{X}_{\ell}}\left[||\mathbf{M}_k \odot \varsigma
\left(\mathbf{H}_{\ell} \mathbf{X}_{\ell}
\right)||_1^2\right]}{||\mathbf{H}_0||_F^2}$ in
(\ref{eq:ostbc_sinr_exact}). As before, it can be shown that
\begin{align}\label{eq:k_general}
\mathcal{K}_{\ell,\Sigma} =\sum_{q=1}^{N_s} \mathcal{K}_{\ell,q} ,
\quad \quad  \mathcal{K}_{\ell,q} = \frac{|W_{\ell,q}|^2
+|Z_{\ell,q}|^2 }{||\mathbf{H}_0||_F^2} \;
\end{align}
where $W_{\ell,q}$ and $Z_{\ell,q}$ are each composed of the inner
product of two equal-length vectors; one vector having elements
drawn from $\pm \mathbf{H}_0$ or $\pm \mathbf{H}_0^\dagger$ without
repetition, the other having elements drawn from $\pm
\mathbf{H}_\ell$ or $\pm \mathbf{H}_\ell^\dagger$ without
repetition. As evident from the previous example, the specific set
of selected elements as well as the dimension of the vectors
comprising the inner products are dependent on the particular OSTBC
code employed.

Under the same independence assumptions as before, we apply
\emph{Lemma \ref{lemm:gamma_approx_OSTBC}} to
$\frac{|W_{\ell,q}|^2}{||\mathbf{H}_0||_F^2}$ and
$\frac{|Z_{\ell,q}|^2}{||\mathbf{H}_0||_F^2}$, followed by applying
\emph{Lemma \ref{lem:sum_gamma_equiv_dist}}, to find that
$\mathcal{K}_{\ell,q}$ is approximately distributed as
\begin{align}
{\rm Gamma}\left({\rm vecdim} (W_{\ell,q}) + {\rm vecdim}
(Z_{\ell,q}),1\right) \, .
\end{align}
It follows that $\mathcal{K}_{\ell,\Sigma}$ is approximately
distributed as ${\rm Gamma}\left( N_I / M, 1\right)$. The desired
SINR approximation is then obtained by invoking the assumption that
the numerator and denominator and independent, as before.

\subsection{Proof of Theorem \ref{lemm_general_cdf}}\label{app:general_cdf_proof}

We first rewrite the c.d.f.\ of $\gamma$ as
\begin{align}
{\rm F}_\gamma(\beta) &= {\rm Pr} \left(\frac{W}{Y + \sum_{\ell \in \Phi} |X_\ell|^{-\alpha} \Psi_{\ell 0}+ 1} \le \beta\right) \notag \\
&  = 1-{\rm Pr} \left(\sum_{\ell \in \Phi} |X_\ell|^{-\alpha} \Psi_{\ell 0} \le \frac{W}{\beta} - Y-1\biggr|A  \right) {\rm Pr}(A)  -{\rm Pr} \left(\sum_{\ell \in \Phi} |X_\ell|^{-\alpha} \Psi_{\ell 0} \le \frac{W}{\beta} - Y-1\biggr|\bar{A}  \right) {\rm Pr}(\bar{A})\notag \\
&= 1-{\rm Pr} \left(\sum_{\ell \in \Phi} |X_\ell|^{-\alpha} \Psi_{\ell 0} \le \frac{W}{\beta} - Y-1\biggr|A  \right) {\rm Pr}(A)
\end{align}
where $A$ denotes the event $\frac{W}{\beta} - Y>1$ and $\bar{A}$ denotes the complement of $A$. Applying \cite[Eq. (113)]{weber07} along with some algebraic manipulation, the c.d.f.\ of $\gamma$, conditioned on $W$ and $Y$, can be written as
\begin{align}
{\rm F}_{\gamma|A,W,Y}(\beta)
&= 1 - \sum_{k=0}^\infty \frac{\left(- \pi p \lambda \Gamma\left(1-\frac{2}{\alpha}\right) \beta^{\frac{2}{\alpha}} \right)^k}{k! \Gamma\left(1-\frac{2k}{\alpha}\right)}   \left( \left(W - Y \beta-\beta\right)^{-\frac{2}{\alpha}}{\rm E} \left[\Psi^{\frac{2}{\alpha}}\right]\right)^k {\rm Pr}(A) \notag \;
\end{align}
where $\Psi\dis\Psi_{\ell 0}$ for all $\ell$. Averaging with respect to (w.r.t.) $W$ gives
\begin{align}\label{eq:cdf_conditionY}
{\rm F}_{\gamma|Y}(\beta)&= 1 - \sum_{k=0}^\infty \frac{\left(- \pi p \lambda \Gamma\left(1-\frac{2}{\alpha}\right) \beta^{\frac{2}{\alpha}}  {\rm E}\left[\Psi^{\frac{2}{\alpha}}\right]\right)^k}{k! \Gamma\left(1-\frac{2k}{\alpha}\right)} \mathcal{I}_1
\end{align}
where
\begin{align}\label{eq:proof_integral}
\mathcal{I}_1 &= \int_{\beta(Y+1)}^\infty \left(w - Y \beta-\beta\right)^{-\frac{2k}{\alpha}} f_W(w) {\rm d} w \notag \\
&= \frac{e^{-\frac{\beta(Y+1)}{\theta}}}{\Gamma(m) \theta^m} \sum_{\ell=0}^{m-1} \binom{m-1}{\ell} (\beta(Y+1))^{\ell} \Gamma\left(m-\ell-\frac{2k}{\alpha}\right) \theta^{m-\ell-\frac{2k}{\alpha}} \; .
\end{align}
Substituting (\ref{eq:proof_integral}) into (\ref{eq:cdf_conditionY}), we have
\begin{align}\label{eq:cdf_conditionY2}
& {\rm F}_{\gamma|Y}(\beta)= 1 - \frac{e^{-\frac{\beta(Y+1)}{\theta}}}{\Gamma(m) } \sum_{\ell=0}^{m-1} \binom{m-1}{\ell} \left(\frac{\beta(Y+1)}{\theta}\right)^{\ell} \sum_{k=0}^\infty \frac{\left(- \frac{\pi p \lambda \Gamma\left(1-\frac{2}{\alpha}\right) \beta^{\frac{2}{\alpha}}  {\rm E}\left[\Psi^{\frac{2}{\alpha}}\right]}{\theta^{\frac{2}{\alpha}}}\right)^k \Gamma\left(m-\ell-\frac{2k}{\alpha}\right)}{k! \Gamma\left(1-\frac{2k}{\alpha}\right)} \; .
\end{align}
The remaining challenge is to remove the infinite series in (\ref{eq:cdf_conditionY2}). To this end, we apply the identity \cite[Eq.\ (24.1.3)]{abramowitz70}
\begin{align}
\frac{\Gamma\left(m-\ell-\frac{2k}{\alpha}\right)}{\Gamma\left(1-\frac{2k}{\alpha}\right)} = (-1)^{m-\ell-1} \sum_{i=0}^{m-\ell-1} s(m-\ell,i+1) \left(\frac{2k}{\alpha}\right)^i
\end{align}
and Dobi\'{n}ski's Formula \cite{rota64}, which gives
\begin{align}\label{eq:cdf_conditionY3}
{\rm F}_{\gamma|Y}(\beta)&= 1 - \frac{(-1)^{m-1} e^{- \frac{\pi p \lambda \Gamma\left(1-\frac{2}{\alpha}\right) \beta^{\frac{2}{\alpha}}  {\rm E}\left[\Psi^{\frac{2}{\alpha}}\right]}{\theta^{\frac{2}{\alpha}}}} e^{-\frac{\beta(Y+1)}{\theta}}}{\Gamma(m) } \sum_{\ell=0}^{m-1} \binom{m-1}{\ell} \left(-\frac{\beta(Y+1)}{\theta}\right)^{\ell} \notag \\
& \hspace{1cm} \sum_{i=0}^{m-\ell-1} s(m-\ell,i+1) \left(\frac{2}{\alpha}\right)^i \sum_{j=0}^i S(i,j) \left(- \frac{\pi p \lambda \Gamma\left(1-\frac{2}{\alpha}\right) \beta^{\frac{2}{\alpha}}  {\rm E}\left[\Psi^{\frac{2}{\alpha}}\right]}{\theta^{\frac{2}{\alpha}}}\right)^j \; .
\end{align}
Averaging w.r.t.\ $Y$, we have
\begin{align}\label{eq:cdf_general_proof}
{\rm F}_{\gamma}(\beta)&= 1 - \frac{(-1)^{m-1}  e^{- \frac{\pi p \lambda \Gamma\left(1-\frac{2}{\alpha}\right) \beta^{\frac{2}{\alpha}}  {\rm E}\left[\Psi^{\frac{2}{\alpha}}\right]}{\theta^{\frac{2}{\alpha}}}} e^{-\frac{\beta}{\theta}}}{\Gamma(m) } \sum_{\ell=0}^{m-1} \binom{m-1}{\ell} \left(-\frac{\beta}{\theta}\right)^{\ell}  \sum_{i=0}^{m-\ell-1} s(m-\ell,i+1)\\
& \hspace{0.2cm}\times \left(\frac{2}{\alpha}\right)^i \sum_{j=0}^i S(i,j) \left(- \frac{\pi p \lambda \Gamma\left(1-\frac{2}{\alpha}\right) \beta^{\frac{2}{\alpha}}  {\rm E}\left[\Psi^{\frac{2}{\alpha}}\right]}{\theta^{\frac{2}{\alpha}}}\right)^j \mathcal{I}_2  \notag
\end{align}
where
\begin{align}\label{eq:integral2}
\mathcal{I}_2 &= \int_0^\infty e^{-\frac{\beta y}{\theta}} \left(y+1\right)^{\ell}  {\rm f}_Y(y) {\rm d} y\notag \\
&= \frac{1}{\Gamma(u) \Upsilon^u}\sum_{\tau=0}^{\ell} \binom{\ell}{\tau} \left(\frac{\beta }{\theta}+\frac{1}{\Upsilon}\right)^{-\tau-u} \Gamma(\tau+u) \; .
\end{align}
Finally, by noting that
\begin{align}\label{eq:psi_exp}
{\rm E}\left[\Psi^{\frac{2}{\alpha}}\right] 
&= \frac{\Omega^{\frac{2}{\alpha}}\Gamma\left(n+\frac{2}{\alpha}\right) }{\Gamma(n)}
\end{align}
and substituting (\ref{eq:integral2}) and (\ref{eq:psi_exp}) into (\ref{eq:cdf_general_proof}), we obtain the desired result.

\subsection{Proof of \emph{Corollary \ref{corr:tc_scaling}}}\label{app:tc_scaling}

We first note that by taking a Taylor expansion of (\ref{eq:cdf_general})
around $\lambda=0$, and then finding the inverse of the resulting
expression w.r.t.\ $\lambda p$, the transmission capacity can be written in the general form
\begin{align}\label{eq:TC_alt_general}
c(\epsilon) = \frac{\zeta \left(\epsilon-{\rm F}^{\rm SU}(\beta) \right)^{+} }{ \pi \beta^{\frac{2}{\alpha}} r_{\rm tr}^2 {\rm E}\left[\Psi^{\frac{2}{\alpha}}\right] g\left({\rm SNR}_0\right) } + O\left(\left(\left(\epsilon-{\rm F}^{\rm SU}(\beta)\right)^2 \right)^{+} \right)
\end{align}
where
\begin{align}\label{eq:gSNR}
g\left({\rm SNR}_0\right)&=\int_{\beta(Y+1)}^\infty \left(w-Y \beta- \beta \right)^{-\frac{2}{\alpha}} f_{W}\left(w\right)  {\rm d} w  \; .
\end{align}
To proceed, we note that if $X \dis {\rm Gamma}(n,\vartheta)$, then $\lim_{n \to \infty} \frac{X-n \vartheta}{\sqrt{n} \vartheta} \dis \mathcal{CN}(0, 1)$. Now if $\lim_{n \to \infty} \sqrt{n} \vartheta \to 0$ and $\lim_{n \to \infty} n \vartheta \to c$ where $c$ is a constant, then $\lim_{n \to \infty} X \to c$. We see from Table \ref{table:gamma_param} that as $N \to \infty$ with $M=\kappa N$ where $0<\kappa<1$, $\sqrt{m} \Theta \to 0$, $\sqrt{u} \Upsilon \to 0$ and $\sqrt{n} \Omega \to 0$, and thus
\begin{align}\label{eq:largeXPsi}
W \to m \Theta, \qquad Y \to u \Upsilon  \qquad  {\rm and} \qquad \Psi \to n \Omega \; .
\end{align}
The result follows by substituting (\ref{eq:largeXPsi}) into (\ref{eq:TC_alt_general}).

\subsection{Proof of Proposition \ref{prop:TCZF}}\label{app:TCZF_scaling_proof}

Let $M = f(N)$.
For large $M$ and $N$,
\begin{align} \label{eq:ZFMRC_transCapScale}
c &= \Theta\left(f(N) \left(\frac{N}{f(N)} -\xi
\right)^{\frac{2}{\alpha}} \right)
\end{align}
where
\begin{align}
 \xi =
\begin{cases}
1  + \frac{\beta r_{\rm tr}^\alpha}{\rho} , & {\rm for \, ZF}  \\
\beta  + \frac{\beta r_{\rm tr}^\alpha}{\rho} , & {\rm for \, MRC}
\end{cases}
\; .
\end{align}
Since $M \le N$, $f(N)$ must satisfy $f(N) \le N$. Therefore, $f(N)$
must scale \emph{linearly or sub-linearly} with $N$. If $f(N)$ is
linear in $N$, then we know that the transmission capacity is linear
in $N$ also. However, if $f(N)$ scales sub-linearly, then the
transmission capacity (\ref{eq:ZFMRC_transCapScale}) becomes
\begin{align}\label{eq:proof_TCZF}
c &= \Theta\left(N^{\frac{2}{\alpha}} f^{1-\frac{2}{\alpha}}(N)
\right) \;
\end{align}
which is also sub-linear in $N$.

\subsection{Proof of Lemma \ref{lem:throughput_CA}}\label{app:throughput_CA}

The intensity $\lambda_{\rm CA}$ after applying the CA protocol is
obtained by noting that at most one node is allowed to transmit
in each square of length $2 r_{\rm gz}$, and $1-e^{-4 \lambda r_{\rm
gz}^2}$ is the percentage of receivers with at least one node inside
the square guard zone. By denoting the new aggregate interference
after applying the CA protocol as $\rho \mathcal{I}_{\Phi_{{\rm
CA}}}$, and noting that the typical transmitter-receiver channel, $|h_0|^2$, is exponentially distributed, we
can write the throughput as
\begin{align}\label{eq:throughput_CA_proof}
{\rm T}_{\rm CA} & = \lambda_{\rm CA} (1-{\rm F}(\beta))
\notag \\
&= \lambda_{\rm CA}\int_0^{\infty} {\rm P}\left( |h_0|^2  \ge
\zeta (\rho x + 1) \right)
f_{\mathcal{I}_{\Phi_{{\rm CA}}}}(x) {\rm d} x \notag\\
&= e^{-\zeta} \lambda_{\rm CA} \int_0^{\infty} e^{-\zeta \rho x}
f_{\mathcal{I}_{\Phi_{{\rm CA}}}}(x)
{\rm d} x \notag \\
&= e^{-\zeta} \lambda_{\rm CA} \exp \left\{ -\lambda_{\rm CA}
\int_{\mathds{R}^2} 1-  {{\rm E}}_{h_0} \left[ e^{-\rho \zeta |h_0|^2 /|t|^{\alpha}} \right] {\rm d} t\right\} \notag \\
&= e^{-\zeta} \lambda_{\rm CA}  \exp \left\{ -\lambda_{\rm CA}
\int_{\mathds{R}^2} \frac{\rho \zeta}{|t|^{\alpha}+\rho \zeta}{\rm d}
t\right\}
\end{align}
where $\zeta = \frac{\beta r_{\rm tr}^{\alpha}}{\rho }$ and
$f_{\mathcal{I}_{\Phi_{{\rm CA}}}}(x)$ is the p.d.f.\ of $\Phi_{\rm CA}$. Note that the third line in (\ref{eq:throughput_CA_proof}) was obtained by applying standard results in stochastic geometry (see e.g.,\ \cite[pp. 231]{stoyan95}). Note that the integral in
(\ref{eq:throughput_CA_proof}) is over the entire infinite plane, except for
a square guard zone centered at the origin of length $2 r_{\rm gz}$.
Due to the difficulty in evaluating an exact closed-form solution to this integral, we obtain an
upper bound by assuming a circular guard zone of radius $2 r_{\rm
gz}/\sqrt{\pi}$ centered at the origin. The use of a circular guard zone as an upper bound can be seen by first noting
that the same number of interferers, on average, are removed from within both
the square and circular guard zones. Second, by denoting $S$ as the interferers which do not fall within both guard zones, then the interferers in $S$ which are enclosed by the circle of radius $2 r_{\rm gz}/\sqrt{\pi}$ are closer to the origin than the interferers in $S$ which are enclosed by the square
of side length $2 r_{\rm gz}$. This implies that the use of the circular guard zone removes interferers which are closer to the typical receiver, than the interferers which are removed using the square guard zone.
The throughput upper bound is thus given by
\begin{align}\label{eq:throughput_CA_proof2}
{\rm T}_{\rm CA} \le e^{-\zeta} \lambda_{\rm CA} \exp \left\{ -2 \pi
\lambda_{\rm CA}
 \int_{{\frac{2 r_{\rm gz}}{\sqrt{\pi}}}}^{\infty}
\frac{\rho \zeta r}{r^{\alpha}+\rho \zeta } {\rm d} r \right\}
\; .
\end{align}
Finally, we solve the integral in (\ref{eq:throughput_CA_proof2})
using \cite[Eq. (3.259.2)-(3.259.3)]{gradshteyn65} followed by some algebraic manipulation.

\end{appendix}

\IEEEpeerreviewmaketitle

\end{document}